\pdfoutput=1
\documentclass{amsart}

\title[Multifractal analysis of intermingled basins and blowout bifurcations]{Multifractal analysis of intermingled basins 
and blowout bifurcations in a parametetric family of skew product maps}
\author{Fatemeh Helen Ghane}
\address{Department of Mathematics, Ferdowsi University of Mashhad, Iran}
\email{htina@um.ac.ir}
\author{Marc Kesseb\"ohmer}
\address{Institute for Dynamical Systems, Faculty 3 -- Mathematics  and Computer Science, University of  Bremen, Germany}
\email{mhk@uni-bremen.de}
\date{\today}
\subjclass{37C70,  37C40,  37H15, 37C45}
\keywords{Invariant Graph, Chaotic Milnor Attractor, Riddled Basin, Intermingled Basin, Blowout Bifurcation, Stability Index, Multifractal Analysis}
 
\usepackage{tikz}
\usepackage{mathtools}
\usetikzlibrary{patterns}
\usepackage{pgfplots}
\usepgfplotslibrary{colormaps,fillbetween}
\pgfplotsset{compat=1.15}
\usepackage{mathrsfs}
\usetikzlibrary{arrows}

\usepackage[T1]{fontenc}
\usepackage{geometry}
\usepackage{xcolor}
\usepackage{listings}
\usepackage{fancyvrb}
\usepackage{amsmath, amsthm,amscd, amsfonts, amssymb, graphicx, color,float}
\usepackage{subfig}
 
\usepackage{color} 
\usepackage{fancyhdr}
\usepackage{amsthm}
\usepackage{algorithm,algorithmic}

\newtheorem{theorem}{Theorem}
\newtheorem{proposition}{Proposition}[section]

\newtheorem{corollary}[proposition]{Corollary}

\newtheorem{lemma}[proposition]{Lemma}

\theoremstyle{definition}
\newtheorem{definition}[proposition]{Definition}
\newtheorem*{condition*}{Condition}
\theoremstyle{remark}
\newtheorem{remark}[proposition]{Remark}

\numberwithin{equation}{section}
\renewcommand{\d}{\;\mathrm{d}}
\newcommand{\e}{\mathrm{e}}
\let\lvert=|\let\rvert=|

\addtolength{\textwidth}{8pt}

\usepackage[style=authoryear,backend=biber,style=alphabetic,  backref=false, giveninits=true,citestyle=alphabetic, natbib=false, url=false, isbn=false,doi=true,eprint=false,maxbibnames=99]{biblatex}
\addbibresource{mybib.bib}
\DeclareFieldFormat[article]{title}{{#1}}
\DeclareFieldFormat[unpublished]{title}{{#1}}
\DeclareFieldFormat[proceedings]{title}{{#1}}
\DeclareFieldFormat[incollection]{title}{{#1}}
\DeclareFieldFormat[unpublished]{title}{{\em #1}}
\renewbibmacro{in:}{}
\DeclareFieldFormat[article]{pages}{#1}

\begin{document}

\begin{abstract}
In this paper we study a two-parameter family of planar maps characterized by two distinct invariant subspaces. The model reveals the existence of two chaotic attractors  
within these subspaces.  We identify parameter values at which these attractors either exhibits a locally riddled basin of attraction or transitions into a chaotic saddle.  In particular, we demonstrate that, for an open region in the parameter plane, their basins are intermingled. It is shown that a fractal boundary curve separates the basins of attraction of these two chaotic attractors, providing a detailed characterization of the riddled basin structure.   Additionally, we show that the model undergoes a blowout bifurcation.  An estimation of the stability index is examined using thermodynamic formalism. We also perform a multifractal analysis of the level sets of the stability index.  
\end{abstract}
 
\maketitle
\section{Introduction}
Recent interest in the global dynamics of systems with multiple attractors has emphasized the often complex structure of their basins of attraction. The phenomenon of multiple attractors for which several attractors coexist has been explored e.\,g.\@ in  \cite{buescu2012exotic, ott2002chaos, daza2016basin, dudkowski2016hidden}. Our focus is on systems with multiple attractors that display densely interwoven basins of attraction, a phenomenon known as \emph{riddling}.  This implies that for any initial condition within the basin of one attractor there are arbitrarily close initial conditions that converge to  another attractor.  Ott et al. \cite{ott1993scaling} introduced nonlinear dynamical systems with simple symmetries that exhibit riddled basins. Conditions for the occurrence of riddled basins were further defined by Alexander et al. \cite{alexander1992riddled} and subsequently generalized by Ashwin et al.  \cite{ashwin1996attractor}. 
Intermingled basins describe a scenario in which multiple attractors share overlapping regions within the phase space. This overlap often leads to intricate interactions among the attractors, producing a diverse and complex range of dynamical behaviors. Such phenomena can exhibit chaos, characterized by the system's sensitive dependence on initial conditions. Intermingled basins are commonly associated with complex dynamics, including fractal basin boundaries, where the basin structure is highly intricate and challenging to identify.   
A detailed picture of multiple attractors with riddled or intermingled basins is available through works of several authors \cite{alexander1992riddled, ashwin1996attractor, lai2005basins, roslan2016local, schultz2017potentials, saha2018riddled, rabiee2022occurrence}.

 Understanding intermingled basins is crucial in diverse applications, such as coupled nonlinear electronic circuits,  \cite{ashwin1994bubbling,heagy1994experimental},
  a forced double-well Duffing oscillator  \cite{sommerer1993physical, ott1993scaling, ott1994transition},
ecological population models \cite{cazelles2001dynamics, viana2009riddled, karimi2020analysis},
learning dynamical systems \cite{nakajima1996riddled}, 
 and engineering systems, where stability and state transitions can have far-reaching impacts. 
 The paper \cite{kim2018multistability} examines multistability in power-grid systems, highlighting how variations in basins of attraction influence complex dynamics. It delves into the role of intermingled basins and their interactions, offering insights into stability and transitions in power-grid networks. 

 Riddling is commonly observed in skew product systems due to the asymmetry in how variables interact \cite{alexander1992riddled, bonifant2008schwarzian, cazelles2001dynamics}. 
 These systems are random dynamical systems influenced by a deterministic external factor. The theory of these systems has been developed to provide a framework for modeling dynamics subjected to random perturbations. 

 In this context, the local dynamic stability of a chaotic attractor can be assessed using Lyapunov exponents. When the largest Lyapunov exponent is negative, a set of positive measure exists that is asymptotically attracted to the attractor  \cite{alexander1992riddled, ashwin1996attractor}. 
For most chaotic attractors, ergodic measures are not unique and can include Dirac measures supported by periodic orbits.  

When two chaotic attractors reside in separate invariant subspaces, the system forms a complex fractal boundary between the initial conditions that lead to each attractor. In intermingled basins, small changes in initial conditions can cause the system to switch between attractors, making the resulting trajectory in phase space highly unpredictable. 

The first example of maps with intermingled basins was provided by Kan \cite{kan1994open}, who studied a partially hyperbolic endomorphism on a surface. This system featured a boundary exhibiting two intermingled hyperbolic physical measures, highlighting the complex dynamics and overlapping attractor basins. 
Keller \cite{keller2017stability} investigated the phenomenon of intermingled basins within a skew product dynamical system defined on a square. The system featured a piecewise expanding Markov base map, along with a fiber map exhibiting a negative Schwarzian derivative. 

In \cite{podvigina2011local}, the concept of a stability index for a basin was introduced. This index quantifies the degree to which a basin is riddled at a given point. Essentially, the stability index measures the local chaotic behavior of the basin at a specific point, providing a means to assess the complexity or chaotic nature of the dynamics in that region of phase space. 

As the study of riddled and intermingled basins advanced, multifractal analysis became a popular method for investigating the complex structure of basin boundaries. Researchers like Keller \cite{keller2014stability, keller2017stability}, Walkden and Withers \cite{walkden2017stability} applied multifractal techniques to quantify the irregularities in basin boundaries. This analysis led to the realization that basin boundaries exhibit a range of scaling behaviors and fractal structures. 
By applying multifractal analysis, we can rigorously quantify different dynamical behaviors in terms of the fractal dimensions of dynamically defined subsets \cite{schmeling1999completeness, pesin1997multifractal, kessebohmer2007multifractal, kessebohmer2008fractal, jaerisch2021multifractal, jaerisch2011regularity}.

In this article we analyze the behavior of a two parameter family  $\mathcal{F}$ of planar systems $F_{a,b}$ acting on the unit square.
In our setting, each system $F_{a,b}$ exhibits two invariant subspaces $\Phi_i$ having chaotic attractors $A_i$, $i=0,1$.
We demonstrate the conditions for the emergence of a locally riddled basin and chaotic saddle.
Here, the two parameter  plane maps $F_{a,b}$ are skew product maps, where the base map $f$  is a piecewise expanding and piecewise $C^{1+\alpha}$-Hölder mixing Markov map with two branches. The fiber maps are diffeomorphisms defined on the unit interval $\mathbb{I}$ such that they fix the two endpoints, 0 and 1.
Therefore, we have two invariant sets
\begin{equation}\label{subspaces}
\Phi_0\coloneqq \mathbb{I} \times \{0\}, \quad \Phi_{1}\coloneqq\mathbb{I} \times \{1\}.
\end{equation}
This invariant sets play the role of the $F_{a,b}$-invariant manifolds with chaotic dynamics inside.
One objective is to characterize the different possible dynamics by varying the parameters $a, b$. These parameters vary the transverse dynamics without changing the dynamics on the invariant sets $\Phi_0$ and $\Phi_1$.
We show that for certain values of $a$ and $b$ within an open region of the $ab$-plane, the skew product map $F_{a,b}$ has two chaotic Milnor attractors in the invariant sets $\Phi_0$ and $\Phi_1$. The system's qualitative dynamics depend on initial conditions, and the attractors exhibit an intricate, intermingled basin structure.
This phenomenon produces an unpredictability qualitatively greater than the traditional sensitive dependence on initial conditions within a single chaotic attractor.
By  defining a fractal boundary between the attractors' basins, we investigate the dynamics of riddled basins and blowout bifurcations thoroughly. 
The system's dynamics are characterized by two Lyapunov exponents. The first, the parallel Lyapunov exponent, describes evolution within the invariant subspaces and must be positive for riddled basins to emerge. The second, the normal Lyapunov exponent, characterizes evolution transverse to these subspaces \cite{ashwin1996attractor, cazelles2001dynamics, viana2009riddled}.
We further investigate blowout bifurcations of chaotic attractors within invariant subspaces, analyzing their occurrence in detail. First, we estimate the parameter range $(a,b)$ where the attractors $A_0$ and $A_1$ exhibit intermingled basins or transition into chaotic saddles, providing a rigorous analysis of these complex behaviors. Additionally, we demonstrate that varying $a$ and $b$ induces a blowout bifurcation.

Using Keller’s criterion  \cite{keller2017stability}, we compute the stability indices for both chaotic attractors $A_0$ and $A_1$.
Keller \cite{keller2017stability} introduced a formalism that integrates the stability index and thermodynamic measures to comprehensively describe intermingled basins. In this article, we apply this formalism along with Ashwin's approach \cite{ashwin1996attractor} to our parametric maps, focusing on exploring different dynamical regimes and the bifurcations induced by parameter variations.    Additionally, we perform a multifractal analysis to determine the Hausdorff dimension of the level sets of the stability indices.  
In particular, we obtain three dynamic regimes which are typically separated by bifurcation curves where stability changes.
Dynamic regimes of our two-parameter map represent the qualitative behaviors of the system's trajectories in its state space as the parameters $a$ and $b$ are varied.  

Note that while many attracting invariant graphs are Milnor attractors, it is not a universal rule. The distinction lies in the measure-theoretic attraction criterion: a Milnor attractor requires attracting a set of positive Lebesgue measure, whereas an attracting invariant graph may only attract a small or even negligible portion of the phase space. Milnor attractors can coexist with other invariant sets, including repellers or saddle-type invariant structures.

This paper is organized as follows. Section \ref{section2} provides a detailed explanation of the key concepts and terminology used throughout the paper and outlines the main results. In Section \ref{section3}, we focus on analyzing the two-parameter family $F_{a,b}$. Using the approach of \cite{ashwin1996attractor}, we explore the emergence of locally riddled basins and chaotic saddles for certain parameter values within an open region of the $ab$-plane. In Section \ref{section4}, we generalize \cite[Theorem~1]{keller2017stability} and calculate the stability indices of the chaotic attractors $A_i$, $i=0,1$ for general Gibbs measures. 
In Section \ref{section5}, we perform a multifractal analysis of the Hausdorff dimension for the level sets of the stability index.

 \section{Preliminaries and main results }\label{section2}
In this section, we present the fundamental concepts and notations that form the basis of this paper.  

\subsection{ Attractors and riddled basins}
Let $M$ be a compact, connected, smooth Riemannian manifold, and let $m$ denote the normalized Lebesgue measure on $M$. We begin by recalling some classical definitions associated with attractors. 

Consider a continuous map $F : M \to M$  and a compact $F$-invariant ser  $A\subset M$  (i.\,e.\@ $F(A) = A$). Let $\omega(x)$ be the set of $\omega$-limit points of the orbit $\{F^{n}(x)\}_{n\geq0}$. 
The \emph{basin of attraction of} $A$, which we denote by $\mathcal{B}(A)$, is the set of points whose $\omega$-limit set is contained in $A$.
For non-empty $A$ the basin $\mathcal{B}(A)$ is always non-empty because it includes $A$. For $A$ to be an attractor, we require that $\mathcal{B}(A)$ is large in the appropriate sense.
Assume $A$ is compact invariant set. Then 
\begin{enumerate}
    \item $A$ is an \emph{asymptotically stable attractor} if it is Lyapunov stable and the basin of attraction $\mathcal{B}(A)$ contains a neighborhood of $A$.
\item   $A$ is a \emph{Milnor attractor} \cite{milnor1985concept}  if the basin of attraction $\mathcal{B}(A)$ has positive Lebesgue measure. To be more precise, we say that $A$ is a Milnor attractor if $\mathcal{B}(A)$ has non-zero Lebesgue measure and there is no compact proper subset $A^{\prime}$ of $A$ whose basin coincides with $\mathcal{B}(A)$ up to a set of measure zero.
\item $A$ is called an \emph{essential attractor}  \cite{melbourne1991example} if $A$ is a Milnor attractor and
\begin{equation*}
\lim_{\delta \to 0}\frac{m(B_\delta(A) \cap \mathcal{B}(A))}{m(B_\delta(A))}=1,
\end{equation*}
where $B_\delta(A)$ is a $\delta$-neighborhood of $A$ in $M$.
\item $A$ is a \emph{chaotic attractor} if $A$ is a transitive Milnor attractor and supports an ergodic measure $\mu$ that is not uniquely ergodic. In particular, at least one of the Lyapunov exponents (with respect to $\mu$) is positive.
\end{enumerate}
Some dynamical systems have chaotic attractors with densely intertwined basins of attraction, which we refer to as \emph{riddled basin}.
Riddled basins were introduced in 1992 by \cite{alexander1992riddled} as follows:
A basin $\mathcal{B}(A)$ of an attractor $A$ is riddled (with holes in a measure-theoretic sense) if and only if its  complement $\mathcal{B}(A)^c$  intersects every disk in a set of positive measure.

This concept has been  generalized to a  local version: 
A Milnor attractor $A$ has a \emph{locally riddled basin} if there exists a
neighborhood $U$ of $A$ such that, for all $x \in A$ and $\varepsilon > 0$
\begin{equation}\label{l-riddel}
  m\left(B_{\varepsilon}(x)  \cap \left(\bigcap_{n\geq 0}F^{-n}(U)\right)^c\right) >0.
\end{equation}

If there is another Milnor attractor $A^{\prime}$ such that $\mathcal{B}(A)^{c}$ in the definition of locally riddled basin may be replaced with $\mathcal{B}(A^{\prime})$, then we say that the basin of $A$ is riddled with the basin of $A^{\prime}$.
If $\mathcal{B}(A)$ and $\mathcal{B}(A^{\prime})$ are riddled with each other, we say that they are \emph{intermingled}.

The corresponding concepts can be defined for
repelling sets. An invariant transitive set $A$ is a
\emph{chaotic saddle} if there exists a neighborhood $U$ of $A$ such that $\mathcal{B}(A)\cap U \neq \emptyset$ but $m(\mathcal{B}(A)) = 0$.

\subsection{The Basic Model}
We study a family $\mathcal{F}$ of two-parameter family of skew product maps with parameters $(a,b)\in (0,1/2)^2$ of the form
\begin{align}\label{ss}
F_{a,b}: \mathbb{I}\times \mathbb{I} \to \mathbb{I}\times \mathbb{I}, \ F_{a,b}(x,y)\coloneqq(f(x),g_{a,b}(x,y)),
\end{align}
where $\mathbb{I}$ is the unit interval $[0,1]$, $f$ is an expanding Markov map, which  for the ease of exposition is given by $f(x)=2x \mod 1$, 
and
\begin{equation}\label{fiber1}
g_{a,b}(x,y)=
\begin{cases}
g_a(y)=y+ay(1-y)\quad &\text{ if } 0 \leq x<1/2\\
g_b(y)=y-by(1-y)  \quad &\text{ if } 1/2 \leq x \leq 1.\\
\end{cases}
\end{equation}

We set $g_{a,b,x}\coloneqq g_a$ if $x \in [0, 1/2)$, and $g_{a,b,x}\coloneqq g_b$ if $x \in [1/2 , 1]$. For the iterates $F_{a,b}^{n}$ of $F$ we adopt the usual notation  $ F_{a,b}^{n}(x,y ) = (f^{n}x , g_{a,b,x}^{n}(y))$  where $g^{n}_{a,b,x} = g_{a,b, f^{n-1}(x)} \circ \cdots \circ g_{a,b,x}.$  Hence  $g_{a,b, x}^{n+k}(y) = g_{a,b, f^k x}^n (g_{a,b,x}^k(y))$ and for $n = 1$ and $k=-1$, this includes the identity.
\begin{remark}\label{Holder}
    We present our arguments for the case that $f$ is doubling map as above. However, our arguments can be extended to the general case, when the base map $f$  is a piecewise expanding and piecewise $C^{1+\alpha}$-Hölder mixing Markov map with two branches. 

The function $(x,y) \mapsto \log dg_{a,b,x} (y) $ is $\alpha$-Hölder continuous on each set $I_i \times \mathbb{I}$, $i=0,1,$ where $I_i$ is a Markov interval of $f$.

\end{remark}
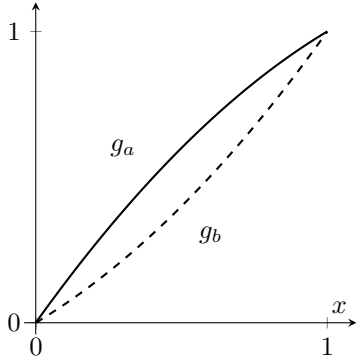
\begin{figure}[!ht]
\begin{center}
\begin{tikzpicture}
    \begin{axis}[
        axis lines=middle,
        width=6cm,
        height=6cm,
        xmin=-.04, xmax=1.1,
        ymin=-.04, ymax=1.1,
        xlabel={$x$},
        ylabel={},
        xtick={0.001, 1},
        ytick={0.001, 1},
        xticklabels={$0$, $1$},
        yticklabels={$0$, $1$},
        samples=100,
        domain=0:1,
        legend style={at={(0.95,0.05)}, anchor=south east}
    ]

    \addplot [thick, solid] 
    {x + 2/5 * x * (1 - x)};
  \node at (axis cs:0.3,0.6)  {$g_a$};

    \addplot [thick, dashed] 
    {x - 2/5 * x * (1 - x)};
 \node at (axis cs:0.6,0.3)  {$g_b$};


    \end{axis}
\end{tikzpicture}
\end{center}
  \caption{This plot represents the fiber maps $g_a$ (solid) and $g_b$ (dashed) for $a=b=2/5$.}
 \label{fig:1}
   \end{figure}

 \subsection{Invariant Graphs and Lyapunov exponents}
 
Invariant graphs are essential objects in the study of skew product systems and are of significant interest. They often arise in systems where a base dynamics drives or interacts with a dependent (fiber) dynamics. 
A thorough understanding of invariant graphs in skew product systems is well-established in cases where the fiber is one-dimensional. This includes a classification of the number of invariant graphs for specific classes of skew product systems \cite{jager2003quasiperiodically, fadaei2018invariant}.
\begin{definition}[Invariant graph]\label{def:graph}

Let $F_{a,b} \in \mathcal{F}$. A measurable function $\phi : \mathbb{I} \to \mathbb{I}$ is called an invariant graph (with respect to $F_{a,b}$) if for all $x \in \mathbb{I}$:
$$ F_{a,b}(x, y) = (f(x),  \phi(f(x) )) ,    \text{ or equivalently} \quad g_{a,b}(\phi(x)) = \phi(f(x))  .
$$
The point set $\Phi \coloneqq \{ (x, \phi(x))  : x \in \mathbb{I} \}$ will also be referred to as the invariant graph, labeled with the corresponding capital letter.  
Denote by $\mathcal{M}_f$ the space of all $f$-invariant probability measures and by $\mathcal{E}_f$ the family of all ergodic $f$-invariant measures.
If $\nu \in \mathcal{M}_f$ and the identity holds for $\nu$-almost every $x \in \mathbb{I}$ we call $\phi$ a $\nu$-a.\,e.\@ invariant graph.
\end{definition}

\begin{remark}\label{bo}
It is easy to see that each $F_{a,b} \in \mathcal{F}$ has two constant invariant bounding graphs $\phi^0(x) \coloneqq 0$ and $\phi^1(x) \coloneqq1$. 
Therefore, the corresponding point sets $\Phi_0=\{ (x,\phi^0(x)): x \in \mathbb{I} \}=\mathbb{I} \times \{0\}$ and  $\Phi_1=\{ (x,\phi^1(x)): x \in \mathbb{I} \}=\mathbb{I} \times \{1\}$ are invariant sets. They play the role of the $F_{a,b}$-invariant manifolds with chaotic dynamics inside.
\end{remark}

\begin{remark}
Each skew product system $F_{a,b}$ has monotone fiber map $g_{a,b}$ which possess negative Schwarzian derivatives meaning that
\[
\dfrac{g_{a,b}'''}{g_{a,b}'} - \frac{3}{2} \left( \frac{g_{a,b}''}{g_{a,b}'} \right) ^{2} < 0,
\]
for all $a,b \in (0,1/2)$. Note that in our setting this condition ensures that there are at most three invariant graphs. For this reason we will focus on the bounding graphs $\phi^0$ and $\phi^1$.
\end{remark}

In what follows, let $\Phi_i$, $i=0,1$, denote the $1$-dimensional subspaces $\mathbb{I} \times \{i\}$, as defined in Remark \ref{bo}, which are forward invariant under $F_{a,b}$. 
In our model, the base map $f$ is an expanding Markov map that admits a chaotic attractor.  This attractor supports an absolutely continuous invariant ergodic measure  $\nu_{\mathrm{ac}}$  whose density is bounded and bounded away from zero (see \cite{viana2016foundations}).
In particular, this measure is absolutely continuous with respect to Lebesgue.
Due to this fact and the invariance of the subspace $\Phi^i$, the restriction of $F_{a,b}$ to the corresponding invariant subspaces possesses a chaotic attractor 
\begin{equation*}
    \label{in1}
A_i\;\mbox{ with  the basin of attraction }\; \mathcal{B}(A_i), \ i=0,1.
\end{equation*}

In particular, these attractors are SRB attractors \cite{ashwin1996attractor}.
 We consider the restriction of $F_{a,b}$ to $\Phi_i$, denoted by $F_{a,b}|_{\Phi_i}$.
Let us denote by $\mathcal{M}_{F_{a,b}}(A_i)$  the sets of $F_{a,b}$-invariant probability measures  supported on $A_i$ and by  $\mathcal{E}_{F_{a,b}}(A_i)$ its subset of  ergodic measures. 

Here, we have two kinds of Lyapunov exponents for the  invariant set $A_i$: the \emph{parallel Lyapunov exponents} and the \emph{normal Lyapunov exponents}.
For a map $F$ defined on any of the relevant sets we let $dF$ denote its total derivative.

\begin{definition} 
Given $(x, \phi^i(x)) \in A_i$, $i=0,1$, we define the \emph{parallel Lyapunov exponent} at point $(x,\phi^i(x))$ to be
\begin{eqnarray}\label{22}
\lambda_{\parallel}(x,\phi^i(x)) =\lim_{n\rightarrow \infty} \dfrac{1}{n} \log \left| d_xF^n_{a,b}|_{ \Phi_i}(x,\phi^i(x))\right|.
\end{eqnarray}
In our setting, $\lambda_{\parallel}(x,i)=\lim_{n\rightarrow \infty} \dfrac{1}{n} \log |df(x)|$, for $i=0,1$, which is positive.
Similarly, we define the \emph{normal Lyapunov exponent} at $(x,\phi^i(x))$ to be
\begin{equation}\label{11}
  \lambda_{\perp, a,b}(x,\phi^i(x))=\lim_{n \to \infty}\frac{1}{n}\log \left|d g_{a,b,x}^n(\phi^i(x))\right|,
\end{equation}
whenever the limit exists.
\end{definition}

Note that the local dynamic stability of chaotic attractors within invariant submanifolds can be characterized using their normal Lyapunov exponents. These exponents represent the additional stability measures introduced when the attractor is considered as a subset of the global phase space, rather than being confined solely to the invariant submanifold \cite{ashwin1996attractor}. 

\begin{remark}\label{ex}
    Let $\mu$ be an $F_{a,b}$-invariant ergodic measure supported in $A_i$. Then, for $\mu$-a.\,e.\@ $(x,\phi^i(x)) \in \mathbb{I} \times \mathbb{I},$  the normal Lyapunov exponent  $\lambda_{\perp, a,b}(x,\phi^i(x))$ exists (see \cite[Theorem~2.3]{ashwin1996attractor}).
   Furthermore, since $\mu$ is ergodic,  for $\mu$-a.\,e.\@ $(x,\phi^i(x)) \in \mathbb{I} \times \mathbb{I},$  the normal Lyapunov exponent  $\lambda_{\perp, a,b}(x, \phi^i(x))$ is constant denoted by $\lambda_{\perp, a,b}(\mu)$.
 \end{remark}
\begin{definition}
Let $\nu$ be an $f$-invariant ergodic measure. If $\phi$ is a $\nu$-a.\,e.\@ invariant graph with
$\log |dg_{a,b,x}(\phi(x))| \in  \mathcal{L}^1_\nu$, then its Lyapunov exponent w.\,r.\,t.\@ $\nu$ is defined as
\begin{equation}\label{111}
  \lambda_{\nu, a,b}(\phi)\coloneqq \int_{\mathbb{I}}\log \left|d g_{a,b,x}(\phi(x))\right|\d\nu(x).
\end{equation}
\end{definition}
\begin{remark}\label{BI}
Let $\phi$ be a $\nu$-a.\,e.\@ invariant graph. Note that by the Birkhoff ergodic theorem the following holds:
\begin{equation*}
\begin{aligned}
 \lambda_{\perp, a,b}(x,\phi(x)) 
&= \lim_{n\to \infty} \frac{1}{n} \log |dg_{a,b,x}^{n}(\phi(x))| = \lim_{n\to \infty} \frac{1}{n} \sum_{k=0}^{n-1}\log |dg_{a,b,f^{k}(x)}(g_{a,b,x}^{k}(\phi(x)))|\\
&= \lim_{n \to \infty}\frac{1}{n} \sum_{k=0}^{n-1}\log |dg_{a,b,f^{k}(x)}(\phi(f^k(x)))|
= \int_{\mathbb{I}} \log |dg_{a,b,x}(\phi(x))|\d\nu (x)  \\
&=\lambda_{\nu, a,b}(\phi)
\end{aligned}
\end{equation*}
for $\nu$-a.e.\@ $x\in\mathbb{I}$.
\end{remark}

Here, we focus on a specific class of invariant measures known as Sinai-Ruelle-Bowen SRB measures  \cite{pugh1989ergodic}. 
We define $\mu$ as an SRB measure for $A_i$, $i=0, 1,$ if it is an invariant ergodic probability measure supported on $A_i$ and possesses absolutely continuous conditional measures on unstable manifolds (with respect to the Riemannian measure). In our model, the base map $f$ is an expanding map, meaning the chaotic attractors $A_i$ support an absolutely continuous invariant ergodic measure equivalent to the Lebesgue measure. 

The attractor $A_i$ is an SRB attractor if it supports an SRB measure.
Since $A_i$ is an asymptotically stable attractor under $F_{a,b}|_{\Phi_i}$, for $i=0$ or $1$, it is the closure of the union of unstable manifolds. Note that the existence of an SRB measure supported on $A_i$ implies the absolute continuity of the stable foliation of $A_i$, see \cite{pugh1989ergodic}.

By Remark \ref{ex}, for a given ergodic invariant probability measure $\mu \in \mathcal{E}_{F_{a,b}}(A_i)$, the normal Lyapunov exponent $\lambda_{\bot, a,b}(\mu)$ exists and is constant in a set 
of full $\mu$-measure. For simplicity, we set $\lambda_{\bot}(\mu)\coloneqq \lambda_{\bot, a,b}(\mu)$.
We define
\begin{equation}\label{mm}
  \lambda_{\min}(A_i)\coloneqq\inf \left\{\lambda_{\bot}(\mu):\mu \in \mathcal{E}_{F_{a,b}}(A_i) \right\}. 
  \end{equation}
Let $\mu$ be an $F_{a,b}$-invariant ergodic probability measure supported in $A_i$, with normal Lyapunov exponents $\lambda_{\bot}(\mu)$. In our setting, since there is only one normal direction, based on \cite{ashwin1996attractor}, we set
\begin{equation}\label{in}
 \Lambda_\mu\coloneqq\lambda_{\bot}(\mu).
\end{equation}
 We recall the next result from \cite{ashwin1996attractor}.
\begin{proposition}\label{thm11}
Assume $F$ is a skew product with invariant subspace $\Phi$ and $A$ is an SRB attractor for $F|_\Phi$ with $\Lambda_{\mathrm{SRB} }< 0$,
where $\Lambda_{\mathrm{SRB}}$ is defined by (\ref{in}) for the SRB measure $\mu_{\mathrm{SRB}}$.
Then $m(\mathcal{B}(A)) > 0$. Furthermore, $A$ is an essential attractor provided that $A$ is either uniformly hyperbolic or $\mu_{\mathrm{SRB}}$ is absolutely continuous with respect to the Riemannian measure on $\Phi$.
\end{proposition}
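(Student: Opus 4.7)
The plan is to reduce both statements to two classical ingredients: Pesin's stable manifold theorem applied to the transverse direction, and absolute continuity of the associated stable foliation. First I would observe that since the only transverse direction to $\Phi$ is one-dimensional, and since $\Lambda_{\mathrm{SRB}} = \lambda_{\perp}(\mu_{\mathrm{SRB}}) < 0$, Pesin theory guarantees, for $\mu_{\mathrm{SRB}}$-almost every point $z = (x,\phi(x)) \in A$, the existence of a one-dimensional local stable manifold $W^s_{\mathrm{loc}}(z)$ transverse to $\Phi$, whose points are asymptotic to the forward orbit of $z$ at exponential rate $\Lambda_{\mathrm{SRB}}$. In particular, $W^s_{\mathrm{loc}}(z) \subset \mathcal{B}(A)$ for $\mu_{\mathrm{SRB}}$-a.e.\ $z$.

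Next I would combine these fibers into a measurable lamination over a Pesin block $R$ of positive $\mu_{\mathrm{SRB}}$-measure on which the sizes of $W^s_{\mathrm{loc}}(z)$ are uniformly bounded below. The set $\mathcal{W} \coloneqq \bigcup_{z \in R} W^s_{\mathrm{loc}}(z)$ is contained in $\mathcal{B}(A)$. To conclude $m(\mathcal{W}) > 0$, I would invoke absolute continuity of the stable foliation: in the uniformly hyperbolic case this is classical (Anosov--Sinai), while in the case that $\mu_{\mathrm{SRB}}$ is absolutely continuous on $\Phi$ it follows from the Pugh--Shub theorem together with Fubini applied to the product of the absolutely continuous conditional measures on $R$ and Lebesgue length along the transverse stable fibers. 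Either way, $m(\mathcal{B}(A)) \geq m(\mathcal{W}) > 0$.

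For the essential attractor claim, I would use the absolute continuity of the holonomy along the stable lamination more quantitatively. Given $\delta > 0$ small, the $\delta$-neighborhood $B_\delta(A)$ can be foliated (modulo a set of small measure) by short transverse curves of length $\asymp \delta$ passing through points of $A$. Using absolute continuity of the holonomy from the transverse curve to $\Phi$, together with the fact that the Pesin block $R$ has full $\mu_{\mathrm{SRB}}$-measure (after exhausting by an increasing sequence of blocks), $\mu_{\mathrm{SRB}}$-almost every base point $z \in A$ lies on a transverse curve whose intersection with $\mathcal{B}(A)$ has relative Lebesgue length tending to $1$ as $\delta \to 0$. A Fubini-type integration then yields
\begin{equation*}
\lim_{\delta \to 0} \frac{m(B_\delta(A) \cap \mathcal{B}(A))}{m(B_\delta(A))} = 1.
\end{equation*}

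The main obstacle is controlling the absolute continuity and the uniform geometry of the stable lamination in the non-uniformly hyperbolic case, as the sizes of $W^s_{\mathrm{loc}}(z)$ degenerate off the Pesin blocks. In the uniformly hyperbolic setting this is automatic, but in the general SRB case one must carefully exhaust by Pesin blocks and use that $\mu_{\mathrm{SRB}}$ being absolutely continuous on $\Phi$ gives absolutely continuous conditionals transverse to the unstable direction along $\Phi$, which is precisely what is needed to transfer the positive length in transverse directions into positive planar Lebesgue measure and, in the limit, full density.
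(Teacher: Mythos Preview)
The paper does not prove this proposition at all: it is stated immediately after the sentence ``We recall the next result from \cite{ashwin1996attractor}'' and is simply quoted from Ashwin--Buescu--Stewart without argument. So there is no proof in the paper to compare your proposal against.

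That said, your sketch follows the standard route used in the original reference: Pesin local stable manifolds in the transverse direction over $\mu_{\mathrm{SRB}}$-typical points of $A$, absolute continuity of the resulting stable lamination, and a Fubini argument to pass from positive $\mu_{\mathrm{SRB}}$-measure of the base set to positive ambient Lebesgue measure of the union of transverse stable leaves. This is the correct architecture.

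One structural point is worth tightening. You invoke the two alternative hypotheses (uniform hyperbolicity, or absolute continuity of $\mu_{\mathrm{SRB}}$ on $\Phi$) already for the first conclusion $m(\mathcal{B}(A))>0$, but the proposition asserts that conclusion under the SRB hypothesis alone. The SRB property (absolutely continuous conditionals on unstable manifolds in $\Phi$) together with Pesin absolute continuity of the transverse stable lamination is already enough for $m(\mathcal{W})>0$; the extra hypotheses should enter only in the ``essential attractor'' part, where one needs to upgrade ``$\mu_{\mathrm{SRB}}$-a.e.\ base point carries a transverse stable leaf'' to a statement giving full density of $\mathcal{B}(A)$ in shrinking neighbourhoods of $A$. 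Your final paragraph gestures at this, but the division of labour between the SRB property and the additional assumptions should be made explicit.
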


Note that the parameters $a$ and $b$ vary the transverse dynamics without changing the dynamics on the invariant subspace $\Phi_i$.

Let
\begin{equation}\label{region1}
 \Gamma_0\coloneqq\left\{(a,b) : 0<a<1/2, \ 0< b <1/2, \ b > a/(1+a)  \right\},
\end{equation}
\begin{equation}\label{region2}
 \Gamma_1\coloneqq\left\{(a,b) : 0<a<1/2, \ 0< b <1/2, \ b<a/(1-a)  \right\},
\end{equation}
and
\begin{equation}\label{region3}
 \Gamma\coloneqq\Gamma_0 \cap \Gamma_1= \left\{(a,b) : 0<a<1/2, \ 0< b <1/2, \ b > a/(1+a), \ b<a/(1-a)  \right\}.
\end{equation}
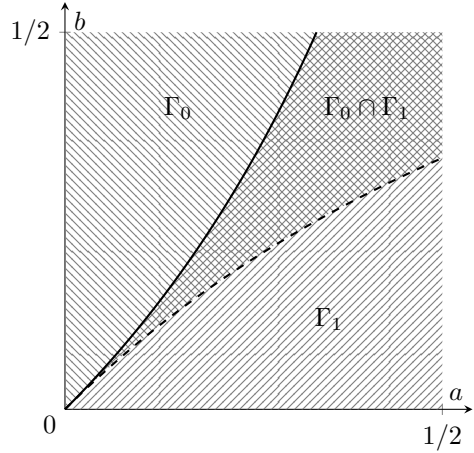
\begin{figure}[!ht]
\begin{center}
\begin{tikzpicture}
    \begin{axis}[
        axis lines=middle,
        width=7cm,
        height=7cm,
        xmin=0, xmax=0.54,
        ymin=0, ymax=0.54,
        xlabel={$a$},
        ylabel={$b$},
        samples=100,
        domain=0:0.5,
        xtick={0, 0.5},
        ytick={0, 0.5},
        xticklabels={$0$, ${1}/{2}$},
        yticklabels={$0$, ${1}/{2}$},
    ]

    \addplot [thick, dashed, name path=lower]
    {x / (1 + x)};

    \addplot [thick, name path=upper,domain=0:1/2,color=black,opacity=0]
    {min(0.5,x/(1-x))};
\addplot [thick,domain=0:1/3]
    {min(0.5,x/(1-x))};

    \path[name path=axis1] (0,0) -- (0.5,0);
    \path[name path=axis2] (0,0.5) -- (0.5,0.5);

    \addplot [
        pattern=north west lines,
        pattern color=black!50
    ] fill between[
        of=lower and axis2,
        soft clip={domain=0:0.5}
    ];
    \node at (axis cs:0.35,0.12) {$\Gamma_1$};

    \addplot [
        pattern=north east lines,
        pattern color=black!50
    ] fill between[
        of=axis1 and upper,
        soft clip={domain=0:0.5}
    ];
    \node at (axis cs:0.15,0.4) {$\Gamma_0$};
    \node at (axis cs:0.4,0.4) {$\Gamma_0\cap\Gamma_1$};

    \end{axis}
    \node at (-0.2,-0.2) {$0$};
\end{tikzpicture}
\end{center}
\caption{The dashed curve represents the function  $a\mapsto{a}/({1+a})$ and the solid black curve represents the function $a\mapsto{a}/{(1-a)}$. The area above the lower and below the upper curve are hatched differently and define the regions  $\Gamma_0$ and $\Gamma_1$ in the $ab$-plane. The intersection of these areas are denoted by $\Gamma$.
  For each $(a,b)\in  \Gamma$,  both invariant sets $A_0$ and $A_1$ of the skew product system $F_{a,b}$ are Milnor attractors with locally riddled basins.}
   \label{fig:2}
   \end{figure}
   Now we are in the position to state our first result; its  proof will be postponed to in Section \ref{section3}.
\begin{theorem}\label{T1} 
For $F_{a,b} \in \mathcal{F}$, the following  statements hold. 
\begin{enumerate}
    \item If  $(a, b) \in \Gamma_0$ then $F_{a,b}$ admits a chaotic  Milnor (essential) attractor $A_0 \subset \Phi_0$ with a locally riddled basin $\mathcal{B}(A_0)$.
    \item If  $(a, b) \in \Gamma_1$ then $F_{a,b}$ admits a chaotic   Milnor (essential) attractor $A_1 \subset \Phi_1$ with a locally riddled basin $\mathcal{B}(A_1)$.
    \item If $(a, b) \in \Gamma$ then $F_{a,b}$ admits two chaotic (essential) Milnor attractors $A_0$ and $A_1$ so that their basins are intermingled. Moreover, there is an invariant measurable graph $\phi^*: \mathbb{I} \to \mathbb{I}$ that separates the basins $\mathcal{B}(A_i)$.
\end{enumerate}
 
 \end{theorem}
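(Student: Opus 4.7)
The plan is to reduce each claim to an explicit computation of the transverse (normal) Lyapunov exponent of the SRB measure on $A_i$, combined with the Ashwin--Buescu--Stewart criterion for local riddling and the monotonicity of the fiber maps for the separating graph. Since $f$ is the doubling map, its SRB measure is normalized Lebesgue measure $m$ on $\mathbb{I}$, and lifting along the constant graphs $\phi^0\equiv 0$, $\phi^1\equiv 1$ yields the SRB measures on $A_0$ and $A_1$. Direct differentiation of the fiber maps in \eqref{fiber1} gives
\[
dg_{a,b,x}(0)=\begin{cases}1+a,&x<1/2,\\ 1-b,&x\geq 1/2,\end{cases}\qquad dg_{a,b,x}(1)=\begin{cases}1-a,&x<1/2,\\ 1+b,&x\geq 1/2,\end{cases}
\]
so by Remark~\ref{BI},
\[
\Lambda_0=\tfrac{1}{2}\log\bigl((1+a)(1-b)\bigr),\qquad \Lambda_1=\tfrac{1}{2}\log\bigl((1-a)(1+b)\bigr).
\]
Elementary algebra gives $\Lambda_0<0\Leftrightarrow b>a/(1+a)$ and $\Lambda_1<0\Leftrightarrow b<a/(1-a)$, matching the definitions of $\Gamma_0$ and $\Gamma_1$ in \eqref{region1}--\eqref{region2}.

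For (1) (and symmetrically (2)), Proposition~\ref{thm11} applies immediately with $\Lambda_{\mathrm{SRB}}=\Lambda_0<0$: the lifted SRB measure is absolutely continuous on the one-dimensional manifold $\Phi_0$, so $A_0$ is an essential (Milnor) attractor, and chaoticity is inherited from the conjugacy of $F_{a,b}|_{\Phi_0}$ with the doubling map. To upgrade the essential basin to a \emph{locally riddled} one in the sense of \eqref{l-riddel}, I would exhibit an $F_{a,b}$-ergodic measure on $A_i$ with strictly positive normal exponent and invoke the tubular/stable-manifold construction of \cite{ashwin1996attractor}. For $A_0$ the $F_{a,b}$-fixed point $(0,0)$ already supplies the exponent $\log(1+a)>0$; for $A_1$ I would use the $f$-periodic orbit through $x_n=(2^n-2)/(2^n-1)$ with itinerary $1^{n-1}0$, whose normal exponent at $\phi^1\equiv 1$ is $\tfrac{1}{n}\bigl((n-1)\log(1+b)+\log(1-a)\bigr)\to\log(1+b)>0$, giving the required transverse instability for all $n$ sufficiently large.

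For (3), when $(a,b)\in\Gamma=\Gamma_0\cap\Gamma_1$, parts (1) and (2) simultaneously deliver both locally riddled essential Milnor attractors. To construct the separating graph I would use that each fiber map $g_{a,b,x}$ is a strictly increasing diffeomorphism of $\mathbb{I}$ fixing only $0$ and $1$, so each slice $\mathcal{B}(A_0)\cap(\{x\}\times\mathbb{I})$ is a lower set and $\mathcal{B}(A_1)\cap(\{x\}\times\mathbb{I})$ an upper set; setting
\[
\phi^{*}(x)\coloneqq\sup\bigl\{y\in\mathbb{I}\colon (x,y)\in\mathcal{B}(A_0)\bigr\}
\]
produces a graph distinct from $\phi^0,\phi^1$, consistent with the at-most-three invariant-graph bound from the negative Schwarzian derivative. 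Transporting this supremum through $F_{a,b}$ via monotonicity yields $g_{a,b,x}(\phi^{*}(x))=\phi^{*}(f(x))$, and measurability follows from a standard projection argument. Intermingling then follows, since the two basins together exhaust Lebesgue-almost all of $\mathbb{I}^2$ (every slice away from $\phi^{*}$ converges to one endpoint by monotonicity), so the local riddling of one basin forces the other basin to meet every small ball in a set of positive measure.

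The principal obstacle is turning the purely measure-theoretic inequality $\Lambda_{\mathrm{SRB}}<0<\lambda_{\max}$ into the pointwise local-riddling bound \eqref{l-riddel}. The Ashwin--Buescu--Stewart construction demands uniform control of the transverse holonomies, which is exactly where the piecewise H\"older continuity of $\log dg_{a,b,x}$ flagged in Remark~\ref{Holder} becomes essential, and propagating the local escape throughout a neighborhood of $A_i$ relies on the mixing of $f$. A secondary subtlety is verifying that $\phi^{*}$ is \emph{everywhere} $F_{a,b}$-invariant rather than merely almost-invariant, given the fractal nature of the two basins.
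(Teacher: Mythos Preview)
Your overall strategy matches the paper's---compute $\Lambda_{\mathrm{SRB}}(A_i)$, invoke Proposition~\ref{thm11} for the essential-Milnor conclusion, then Ashwin--Buescu--Stewart for riddling---and the Lyapunov calculations are correct. The gap you flag as the ``principal obstacle'' is real but more specific than you indicate. The criterion the paper actually invokes (Proposition~\ref{p1}(1)) requires, in addition to $\Lambda_{\mathrm{SRB}}<0<\Lambda_{\max}$, that there exist $\alpha>0$ with $G_\alpha$ \emph{dense} in $A_i$. Exhibiting a single ergodic measure with positive transverse exponent---your fixed point $(0,0)$ or your periodic orbit with itinerary $1^{n-1}0$---yields $\Lambda_{\max}>0$ but says nothing about density. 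The paper fills this via a symbolic argument: given any cylinder $[\alpha_0,\dots,\alpha_{n-1}]$, append enough copies of the transversally expanding symbol (contributing the factor $1+a$ for $A_0$, resp.\ $1+b$ for $A_1$) to obtain a periodic word whose associated periodic point has transverse exponent exceeding a fixed $\alpha>0$; the resulting set of periodic points is dense in $A_i$, hence $G_\alpha$ is dense. Your proposal does not supply this step.

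Your ``secondary subtlety'' is misidentified. The paper only claims a $\nu_{\mathrm{ac}}$-a.e.\ invariant measurable graph, and your monotonicity argument already delivers invariance wherever $\phi^*\in(0,1)$. What is genuinely missing is your assertion that $\phi^*$ is ``distinct from $\phi^0,\phi^1$'': one needs $0<\phi^*(x)<1$ for a.e.\ $x$, and this does not follow from $m(\mathcal{B}(A_0))>0$ or from the Schwarzian bound on the number of graphs. The paper proves it (Lemma~\ref{Lemma}) by a direct Birkhoff argument: since $\Lambda_0<0$, for a.e.\ $x$ the partial sums $\sum_{k<n}\log\bigl(dg_{a,b,f^kx}(0)+\epsilon\bigr)$ are bounded above, so for all sufficiently small $y>0$ (depending on $x$) the orbit $g^n_{a,b,x}(y)$ stays small and converges to $0$; hence $\ell_{a,b}(x)>0$ a.e., and symmetrically $r_{a,b}(x)<1$. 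Together with \cite[Propositions~1.6 and~2.2]{keller2017stability} this yields $\ell_{a,b}=r_{a,b}=\phi^*$ a.e.\ and the intermingling. Your heuristic for intermingling (``basins exhaust a.e.\ and local riddling forces the other basin into every ball'') is correct in spirit but presupposes exactly what Lemma~\ref{Lemma} establishes.
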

\begin{remark}\label{R1}
         In the proof of Theorem \ref{T1} we demonstrate that two attractors $A_i$, $i=0,1,$ have negative normal Lyapunov exponent (see Section \ref{section3}).  Additionally, $A_i$ are Milnor essential attractors 
within the invariant sets $\Phi_i$. By the definition of an essential attractor and Remark \ref{BI}, two bounding graphs $\phi^i$, $i=0,1,$ are $m$-a.\,e.\@ invariant graphs.
Since, the $f$-invariant measure $\nu_{\mathrm{ac}}$ is equivalent to the Lebesgue measure, the two bounding graphs $\phi^i$, $i=0,1,$ are also $\nu_{\mathrm{ac}}$-almost everywhere invariant graphs.

\end{remark}

\begin{definition}
    The basins $\mathbb{B}_i$, $i=0,1,$ of $\nu_{\mathrm{ac}}$-almost everywhere invariant graphs $\phi^i$ are defined as: 
    \begin{equation}
        \mathbb{B}_i\coloneqq\{ (x,y ) \in \mathbb{I}\times \mathbb{I}: g_{a,b,x}^n(y) \to i, \ as \ n \to \infty \}.
    \end{equation}
        \end{definition}
By Remark \ref{R1} and Theorem \ref{T1},  the basins $\mathbb{B}_i$ of $\nu_{\mathrm{ac}}$-almost everywhere invariant graphs $\phi^i$ and $\mathcal{B}(A_i)$ of Milnor attractors $A_i$, $i=0,1,$ coincide $\nu_{\mathrm{ac}}$-almost everywhere.   
 So, by Theorem \ref{T1}, Remark \ref{R1} and \cite[Proposition~1.6]{keller2017stability}, we have the following result.

 \begin{corollary}\label{C1}
     For $(a,b) \in \Gamma$ the map  $F_{a,b}\in \mathcal{F}$ has three invariant graphs $\phi^0 < \phi^* < \phi^1,$ such that $(\nu_{\mathrm{ac}} \times m)(\mathbb{I} \times \mathbb{I} \setminus (\mathbb{B}_0 \cup \mathbb{B}_1))=0$ and $(\nu_{\mathrm{ac}} \times m)(\mathbb{B}_i) >0,$ $i=0,1$. 
 \end{corollary}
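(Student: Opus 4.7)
The plan is to assemble the statement directly from three ingredients already present in the excerpt: the constant bounding graphs $\phi^0\equiv 0$ and $\phi^1\equiv 1$ of Remark~\ref{bo}; the intermediate separating graph $\phi^*$ provided by Theorem~\ref{T1}(3); and the dichotomy for $\nu_{\mathrm{ac}}$-a.e.\ fibres supplied by \cite[Proposition~1.6]{keller2017stability}.

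First I would enumerate the three invariant graphs. The bounding graphs $\phi^0$ and $\phi^1$ are invariant by Remark~\ref{bo}. For $(a,b)\in\Gamma$, Theorem~\ref{T1}(3) supplies a measurable invariant graph $\phi^*\colon\mathbb{I}\to\mathbb{I}$ separating $\mathcal{B}(A_0)$ from $\mathcal{B}(A_1)$. To deduce the ordering $\phi^0<\phi^*<\phi^1$, I would use that both $A_0$ and $A_1$ are essential Milnor attractors, so their basins intersect $\nu_{\mathrm{ac}}$-typical fibres in sets that accumulate at $0$ and at $1$ respectively; hence the separating graph $\phi^*$ cannot coincide with either $\phi^0$ or $\phi^1$ on any $\nu_{\mathrm{ac}}$-positive set of bases, which (after possibly redefining $\phi^*$ on a null set) gives the strict inequalities.

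Next I would translate between the two notions of basin. Remark~\ref{R1} states that $\mathbb{B}_i$ and $\mathcal{B}(A_i)$ agree $\nu_{\mathrm{ac}}$-a.e., so $(\nu_{\mathrm{ac}}\times m)(\mathbb{B}_i)>0$ is immediate from the fact that each $A_i$ is a Milnor attractor with $m(\mathcal{B}(A_i))>0$ (via Fubini and the equivalence of $\nu_{\mathrm{ac}}$ with Lebesgue on $\mathbb{I}$). For the full-measure claim, \cite[Proposition~1.6]{keller2017stability} gives, in the monotone negative-Schwarzian skew product setting, a complete dichotomy: for $\nu_{\mathrm{ac}}$-a.e.\ $x$ and $m$-a.e.\ $y$, the fibre orbit $g_{a,b,x}^n(y)$ either converges to $0$, converges to $1$, or is asymptotic to the separating graph $\phi^*$ along the forward orbit of $x$. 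Since $\{(x,\phi^*(x))\colon x\in\mathbb{I}\}$ and its backward $F_{a,b}$-iterates form a countable union of measurable graphs over $\mathbb{I}$, they carry $(\nu_{\mathrm{ac}}\times m)$-measure zero, giving $(\nu_{\mathrm{ac}}\times m)(\mathbb{I}\times\mathbb{I}\setminus(\mathbb{B}_0\cup\mathbb{B}_1))=0$.

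The main obstacle I expect is the careful verification that \cite[Proposition~1.6]{keller2017stability} applies to our exact parametric family: one must check that the fibre maps $g_{a,b,x}$ satisfy the required monotonicity, negative Schwarzian, and boundary tangency conditions uniformly in $x$, and that the piecewise expanding Markov base $f$ with its absolutely continuous invariant measure fits into Keller's framework. A secondary technicality is ensuring that the exceptional set of fibre orbits asymptotic to $\phi^*$ is genuinely $(\nu_{\mathrm{ac}}\times m)$-null; this is where the equivalence of $\nu_{\mathrm{ac}}$ with Lebesgue and the measurability of the separating graph $\phi^*$, already secured in Theorem~\ref{T1}(3), enter decisively.
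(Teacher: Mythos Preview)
Your proposal is correct and follows essentially the same route as the paper: the corollary is stated as an immediate consequence of Theorem~\ref{T1}, Remark~\ref{R1}, and \cite[Proposition~1.6]{keller2017stability}, which are precisely the three ingredients you assemble. The paper's justification is even terser than yours; in particular, the strict ordering $\phi^0<\phi^*<\phi^1$ is obtained there directly from Lemma~\ref{Lemma} (showing $\ell_{a,b}(x)>0$ and $r_{a,b}(x)<1$ for $\nu_{\mathrm{ac}}$-a.e.\ $x$) together with \cite[Proposition~1.6]{keller2017stability}, rather than via the essentiality argument you sketch, and the full-measure claim $(\nu_{\mathrm{ac}}\times m)(\mathbb{B}_0\cup\mathbb{B}_1)=1$ follows immediately from $\ell_{a,b}=r_{a,b}$ $\nu_{\mathrm{ac}}$-a.e.\ without needing to separately argue that orbits asymptotic to $\phi^*$ form a null set.
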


When the chaotic attractor loses its stability, it gives rise to a chaotic saddle.
 A chaotic saddle is a set of states that the system can transiently visit, but it is not an attractor (i.e., the system doesn’t settle into these states permanently). It can act as a kind of intermediary state, influencing the system's long-term dynamics. 
\begin{theorem}\label{T2}
 Let $F_{a,b} \in \mathcal{F}$. Then the following holds:
 \begin{enumerate}
\item  If  $0<b<{a}/({1+a})$ and $0<a<1/2$, then $F_{a,b}$ admits two invariant sets $A_i$, $i=0,1,$  so that $A_0$ is a chaotic saddle and $A_1$ is a chaotic Milnor attractor with a locally riddled basin.
  \item  If ${a}/{(1-a)}<b<1/2$ and $0<a<1/2$, then $F_{a,b}$ admits two invariant sets $A_i$, $i=0,1,$  so that $A_1$ is a chaotic saddle and $A_0$ is a chaotic Milnor attractor with a locally riddled basin.
\end{enumerate}
\end{theorem}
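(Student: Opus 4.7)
The plan is to reduce both statements to sign conditions on the normal Lyapunov exponents of the SRB measures supported on the invariant sets $\Phi_0$ and $\Phi_1$, and then to invoke Theorem \ref{T1} directly for the surviving attractor while adapting the argument of \cite{ashwin1996attractor} for the one that collapses into a saddle. Since $f$ is the doubling map, its absolutely continuous invariant measure $\nu_{\mathrm{ac}}$ is Lebesgue on $\mathbb{I}$ and distributes equal mass to the two Markov intervals $[0,1/2)$ and $[1/2,1]$. Using the derivatives $dg_a(0)=1+a$, $dg_b(0)=1-b$, $dg_a(1)=1-a$, $dg_b(1)=1+b$ together with Remark \ref{BI} and formula (\ref{111}), I would compute
\begin{equation*}
\Lambda_{\mathrm{SRB}}^{(0)} = \tfrac{1}{2}\log\bigl((1+a)(1-b)\bigr), \qquad \Lambda_{\mathrm{SRB}}^{(1)} = \tfrac{1}{2}\log\bigl((1-a)(1+b)\bigr),
\end{equation*}
so that $\Lambda_{\mathrm{SRB}}^{(0)}>0 \Leftrightarrow b<a/(1+a)$ and $\Lambda_{\mathrm{SRB}}^{(1)}>0 \Leftrightarrow b>a/(1-a)$.

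For statement~(1), I would first observe the elementary inequality $a/(1+a)<a/(1-a)$ on $(0,1/2)$. Hence $0<b<a/(1+a)$ automatically gives $(a,b)\in\Gamma_1$, and Theorem \ref{T1}(2) immediately supplies $A_1$ as a chaotic Milnor essential attractor with locally riddled basin. It remains to prove that $A_0$ is a chaotic saddle, for which transitivity and nonempty local basin are automatic: $F_{a,b}|_{\Phi_0}$ is conjugate to the chaotic doubling map, and $\Phi_0\subset\mathcal{B}(A_0)$. The genuine content is showing $m(\mathcal{B}(A_0))=0$. This follows from $\Lambda_{\mathrm{SRB}}^{(0)}>0$ via Pesin theory for $\mu_{\mathrm{SRB}}^{(0)}=\nu_{\mathrm{ac}}\otimes\delta_0$: on a set of full $\mu_{\mathrm{SRB}}^{(0)}$-measure in $A_0$ one obtains local unstable manifolds transverse to $\Phi_0$ of uniformly bounded size, along which nearby orbits are expelled exponentially. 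Absolute continuity of the unstable foliation for the SRB measure (\cite{pugh1989ergodic}) then yields a positive $m$-measure set of fiber directions on which escape happens, and a Fubini argument transverse to $\Phi_0$ shows that $\bigcap_{n\ge 0}F_{a,b}^{-n}(U)$ has Lebesgue measure zero for any sufficiently small tubular neighborhood $U$ of $\Phi_0$. Since $\mathcal{B}(A_0)$ is contained in this set up to a null set, $m(\mathcal{B}(A_0))=0$.

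Statement~(2) is the symmetric counterpart obtained by exchanging the roles of $A_0$ and $A_1$: the hypothesis $a/(1-a)<b<1/2$ forces $(a,b)\in\Gamma_0$, so Theorem \ref{T1}(1) delivers $A_0$ as a Milnor attractor with locally riddled basin, while $\Lambda_{\mathrm{SRB}}^{(1)}>0$ combined with the same Pesin-absolute continuity argument reduces $A_1$ to a chaotic saddle.

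The main technical obstacle is the rigorous proof that $m(\mathcal{B}(A_0))=0$ in case~(1) (and the analogous statement in case~(2)). One has to control not only the SRB measure but in principle all orbits lingering near $\Phi_0$: a priori some ergodic measure on $A_0$ could have negative transverse exponent and create a positive-measure ``tongue'' of stable directions. I would circumvent this by showing that in the regime $b<a/(1+a)$ the infimum $\lambda_{\min}(A_0)$ defined in (\ref{mm}) is still strictly positive, using the variational structure of the H\"older observable $\log|dg_{a,b,x}(\phi^{0}(x))|$ described in Remark \ref{Holder} together with the thermodynamic formalism referenced for Section \ref{section4}; this combined with uniform transverse hyperbolicity estimates on compact subsets of $A_0$ would force every small transverse perturbation to escape and therefore deliver the required measure-zero conclusion via absolute continuity and Fubini.
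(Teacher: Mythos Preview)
Your treatment of the Milnor attractor side is correct and matches the paper: the inclusion $b<a/(1+a)<a/(1-a)$ forces $(a,b)\in\Gamma_1$, so Theorem~\ref{T1}(2) applies to $A_1$ (and symmetrically for case~(2)). The SRB exponent computations are also fine.

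The genuine error is in your final paragraph. You propose to close the gap by showing that $\lambda_{\min}(A_0)>0$ in the regime $b<a/(1+a)$. This is false. Take the Dirac measure supported on the fixed point $(1,0)\in A_0$: along its orbit the fiber map is always $g_b$, so the normal exponent equals $\log(1-b)<0$. Hence $\lambda_{\min}(A_0)\le\log(1-b)<0$ for every $(a,b)$ in the range, and the ``tongue of stable directions'' you were worried about is genuinely present. No thermodynamic argument on the H\"older observable $\log dg_{a,b,x}(0)$ can make $\lambda_{\min}(A_0)$ positive, because that observable takes the value $\log(1-b)<0$ on an entire Markov interval.

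The paper resolves this not by ruling such measures out but by showing they are negligible for Lebesgue: since $\mu_{\mathrm{SRB}}^{(0)}$ is equivalent to Lebesgue on $\Phi_0$, the complement of the $\mu_{\mathrm{SRB}}^{(0)}$-generic points in $\Phi_0$ has Riemannian measure zero, i.e.\ $m\bigl(\bigcup_{\mu\neq\mu_{\mathrm{SRB}}}G_\mu\bigr)=0$. This is exactly the third hypothesis of Proposition~\ref{p1}(2), and together with $\lambda_{\min}(A_0)<0<\Lambda_{\mathrm{SRB}}^{(0)}$ and the nonvanishing of the exponents $\mu_{\mathrm{SRB}}$-a.e., that proposition delivers the chaotic saddle conclusion directly. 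Your Pesin/absolute-continuity/Fubini sketch is essentially the mechanism underlying that proposition, so you were close; the fix is simply to replace your incorrect uniform-hyperbolicity claim with the observation that non-SRB-generic base points form a Lebesgue null set in $\Phi_0$.
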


If $A$ is a chaotic attractor that supports an SRB measure $\mu_{\mathrm{SRB}}$ then the sign of $\Lambda_{\mathrm{SRB}}$ determines the transverse behavior of infinitesimal perturbations relative to the invariant set  $\Phi$.
When $\Lambda_{\mathrm{SRB}}<0$, $A$ attracts trajectories transversely in phase space, making it an attractor for the entire phase space.   Conversely, if $\Lambda_{\mathrm{SRB}}>0$,
trajectories near $A$ are repelled transversely, rendering $A$ transversely unstable and not an attractor for the entire phase space. 

Based on the above observation, a bifurcation occurs when $\Lambda_{\mathrm{SRB}}$ crosses zero, known as a \emph{blowout bifurcation}.  

Blowout bifurcations can be categorized into two types based on the system's behavior near the bifurcation point: \emph{subcritical} (\emph{hysteretic})  and \emph{supercritical} (\emph{non-hysteretic}). 
A subcritical blowout bifurcation occurs when the invariant subspace becomes transversely unstable ($\Lambda_{\mathrm{SRB}}>0$), and no nearby attractors emerge to replace it.  In contrast, a supercritical blowout results in a soft loss of stability, transitioning to an on-off intermittent attractor \cite{ashwin1998unfolding, platt1993off}. 

By Theorem \ref{T1}, Theorem \ref{T2}, and by the definition, the following result is evident.

 \begin{corollary}\label{C2}
   Let $F_{a,b}\in \mathcal{F}$ be a skew product of the form (\ref{ss}) whose fiber maps $g_{a,b}$  given by (\ref{fiber1}).
 Then the following holds:
\begin{itemize}
\item[$(a)$] $F_{a,b}$ exhibits a (subcritical) hysteretic blowout bifurcation on passing
through any $0<a <1/2$ and $ b ={a}/(1-a)$;
\item[$(b)$]  $F_{a,b}$ exhibits a (supercritical) non-hysteretic blowout bifurcation on passing through any $0< a <1/2$ and $b =  {a}/{(1+a)}$.
\end{itemize}
 Specifically, there exist three distinct dynamical regimes $\Gamma$, $\Gamma_0 \setminus \Gamma$, and $\Gamma_1 \setminus \Gamma$, which are typically separated by bifurcation curves where stability changes.
 \end{corollary}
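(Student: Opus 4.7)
The plan is to derive both claims by combining the stability classifications of Theorems~\ref{T1} and \ref{T2} with a direct sign analysis of the normal Lyapunov exponent of the SRB measure on each invariant set $A_0$ and $A_1$.

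First, I would compute $\Lambda_{\mathrm{SRB}}(A_i)$ explicitly. Since $f$ is the doubling map and the fiber coordinate is identically $i$ on $\Phi_i$, the SRB measure on $A_i$ is Lebesgue on the base times $\delta_i$. Using Remark~\ref{BI} together with the explicit derivatives $g_a'(0)=1+a$, $g_b'(0)=1-b$, $g_a'(1)=1-a$, $g_b'(1)=1+b$, a direct calculation gives
\begin{equation*}
\Lambda_{\mathrm{SRB}}(A_0)=\tfrac{1}{2}\log\bigl((1+a)(1-b)\bigr),\qquad \Lambda_{\mathrm{SRB}}(A_1)=\tfrac{1}{2}\log\bigl((1-a)(1+b)\bigr).
\end{equation*}
The first expression is strictly decreasing in $b$ and vanishes exactly on $\{b=a/(1+a)\}$, while the second is strictly increasing in $b$ and vanishes exactly on $\{b=a/(1-a)\}$. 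Thus each exponent changes sign transversally across its corresponding curve, which is precisely the defining criterion $\Lambda_{\mathrm{SRB}}=0$ for a blowout bifurcation of $A_0$ and $A_1$, respectively.

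Second, I would read off the three dynamical regimes. By Theorems~\ref{T1} and \ref{T2}, $A_0$ is a Milnor (essential) attractor on $\Gamma_0$ and a chaotic saddle on its complement in $(0,1/2)^2$, and analogously for $A_1$ and $\Gamma_1$. Since $\Gamma=\Gamma_0\cap\Gamma_1$, the two curves $b=a/(1\pm a)$ split $(0,1/2)^2$ into the three open regions $\Gamma$ (both $A_i$ Milnor, basins intermingled), $\Gamma_0\setminus\Gamma$ (only $A_0$ Milnor, $A_1$ saddle) and $\Gamma_1\setminus\Gamma$ (only $A_1$ Milnor, $A_0$ saddle), which is exactly the partition asserted in the statement.

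Third, I would classify each bifurcation in terms of the sub-/supercritical dichotomy recalled from the text preceding the corollary. On crossing $b=a/(1-a)$ from $\Gamma$ into $\Gamma_0\setminus\Gamma$, the set $A_1$ becomes a chaotic saddle (Theorem~\ref{T2}(2)); the only remaining Milnor attractor is the far-away $A_0\subset\Phi_0$ in the opposite fiber boundary, no locally attracting set dressing $\Phi_1$ emerges, and once orbits are captured by $A_0$ a small reversal of the parameter does not return them, whence the bifurcation is subcritical (hysteretic). On crossing $b=a/(1+a)$ from $\Gamma$ into $\Gamma_1\setminus\Gamma$, the opposite concavities of the fiber branches $g_a$ and $g_b$ at $y=0$ (as opposed to their behaviour at $y=1$) produce a locally attracting on-off intermittent set dressing $\Phi_0$ in the sense of the unfolding analysis of \cite{ashwin1998unfolding, platt1993off}, which is the supercritical (non-hysteretic) case.

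The principal obstacle is the third step. The exponent computation and the regime identification are immediate consequences of Theorems~\ref{T1}, \ref{T2} and Remark~\ref{BI}. Distinguishing the sub- from the supercritical scenario, however, requires a delicate normal-form analysis comparing the averaged leading nonlinear coefficient of the fiber dynamics on the two destabilising curves. I would close this gap by invoking the criteria of \cite{ashwin1998unfolding}, verifying that the relevant averaged quadratic coefficient has the required sign on each curve — a verification driven by the sign difference between the contributions of $g_a''$ and $g_b''$ at the boundary fibers $y=0$ versus $y=1$.
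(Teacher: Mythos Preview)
Your first two steps coincide with what the paper actually does: it declares the corollary ``evident'' from Theorems~\ref{T1}, \ref{T2} and the definition of blowout bifurcation, and the explicit sign change of $\Lambda_{\mathrm{SRB}}(A_i)$ across the two curves (which you compute correctly) together with the attractor/saddle dichotomy of Theorems~\ref{T1} and \ref{T2} is precisely the content the authors have in mind. So for the existence of a blowout bifurcation on each curve and for the three dynamical regimes your argument matches the paper.

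The third step, however, contains a concrete error. You claim that the sub-/supercritical distinction is driven by ``the opposite concavities of the fiber branches $g_a$ and $g_b$ at $y=0$ (as opposed to their behaviour at $y=1$)''. But $g_a''(y)\equiv -2a$ and $g_b''(y)\equiv 2b$ are constant in $y$, so there is no asymmetry whatsoever between the fibre behaviour at $y=0$ and at $y=1$. More decisively, the involution $h(x,y)=(1-x,1-y)$ conjugates $F_{a,b}$ to $F_{b,a}$ (using $f(1-x)=1-f(x)$ for the doubling map and $1-g_a(1-y)=y-ay(1-y)$, $1-g_b(1-y)=y+by(1-y)$); this conjugacy swaps $A_0\leftrightarrow A_1$ and carries the curve $b=a/(1+a)$ to $b=a/(1-a)$. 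Hence the two blowouts are smoothly conjugate as one-parameter families, and no argument based on local fibre geometry can distinguish them. Your proposed normal-form verification via the averaged quadratic coefficient would therefore necessarily give the same answer on both curves, so it cannot yield the asymmetric (a) versus (b) labeling you are aiming for.

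The paper itself offers no argument for the sub-/supercritical labeling beyond ``by the definition'', so on this point you are not missing a proof that the paper supplies; rather, your attempt to fill in the gap rests on a false premise. If you wish to substantiate (a) and (b) you will need a mechanism that genuinely breaks the $(a,b)\leftrightarrow(b,a)$ symmetry --- or else revisit whether the labeling in the statement is consistent with that symmetry.
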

 
Note that dynamic regimes of the two-parameter map $F_{a,b}$ can be understood as the qualitative behaviors of the system's trajectories in its state space as the parameters $a$ and $b$ vary.  In particular,  the system shows a sensitive dependence on initial conditions, and trajectories appear aperiodic and unpredictable. 

The following condition will be used frequently from now on. 

\begin{condition*} [\textbf{H1}] For $(a,b) \in \Gamma$, there exists an equilibrium state $\nu_\psi$ corresponding to the Hölder continuous potential $\psi$ with $P(\psi) = 0$ and  such that $\lambda_{\nu_\psi,a,b}(\phi^i)<0$, $i=0,1$.  
\end{condition*}

 Note that if we take $\psi=- \log |df|$, then $\nu_\psi$ coincides with the SRB measure $\nu_{\mathrm{ac}}$. Clearly, $\psi$ is normalized. Moreover, by the proof of Theorem \ref{T1}, $\lambda_{\nu_{ac,a,b}}(\phi^i)<0$.

We mention that in our setting the function $(x,y) \mapsto \log dg_{a,b,x} (y) $ restricted to each of the sets $I_i \times \mathbb{I}$, $i=0,1,$ is Hölder continuous, as mentioned in \cite[Hypothesis~3]{keller2017stability}, where $I_i$ is a Markov interval of $f$. In fact, $g_{a,b,x}=g_a$ on $I_0 \times \mathbb{I}$ and $g_{a,b,x}=g_b$ on $I_1 \times \mathbb{I}$.

Note that for our family there exist two $f$-invariant probability measures $\nu^i$, $i=0,1,$ such that $\lambda_{\nu^i,a,b}(\phi^i)>0$ (see the proof of Theorem \ref{T1}).

\begin{remark}
  Since $(a,b) \in \Gamma$, the basins of $A_i$ are intermingled with respect to $\nu_{\mathrm{ac}}$. Furthermore, $\nu_\psi$ is an ergodic Gibbs measure, by condition (H1)  and  \cite[Proposition~1.6]{keller2017stability}, the basins of $A_i$ are intermingled with respect to $\nu_\psi$ and the following holds:
$(\nu_\psi \times m )((\mathbb{I}\times \mathbb{I})\setminus (\mathbb{B}_0 \cup \mathbb{B}_1))=0$ and $(\nu_\psi \times m) (\mathbb{B}_i)>0$, for $i=0,1.$ 
  
\end{remark}

We generalize \cite[Theorem~1]{keller2017stability} in the following way.
\begin{theorem}\label{T3}
    
  Let us assume that for $(a,b)\in \Gamma$ condition (H1)  is satisfied and that the Gibbs measure $\nu_\psi$ with respect to $F_{a,b}$ is given as stated therein. Then there are \( t_{0}^* \) and \( t_{1}^* > 0 \) such that 
\begin{equation*}
\lim_{\epsilon \to 0} \frac{\log  \nu_\psi\{\phi^* < \phi^0 +\epsilon\}}{\log \epsilon} = t_{0}^*,  \quad     \lim_{\epsilon \to 0} \frac{\log  \nu_\psi\{\phi^* >\phi^1 - \epsilon\}}{\log \epsilon} = t_{1}^*.
\end{equation*}
For $i=0,1$, the number \( t_i^* \), is uniquely determined as the  positive zero of the pressure function
\[
t \mapsto p_{i,\psi}(t) \coloneqq P(\psi + t \log dg_{a,b}(\phi^i)).
\]
\end{theorem}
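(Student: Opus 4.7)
The plan is to follow the strategy of Keller \cite{keller2017stability} but systematically replace the role of the SRB measure by the general Gibbs measure $\nu_\psi$ under condition (H1). Since the two statements are symmetric (interchanging $\phi^0$ and $\phi^1$), I focus on the first limit ($i=0$). Write $h(x) \coloneqq \log dg_{a,b,x}(\phi^0(x))$; this is Hölder continuous on each Markov interval $I_0, I_1$ of $f$, so thermodynamic formalism for the potential $\psi + t h$ is available in exactly the form used by Keller.

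The first step is to compare $\phi^*$ with the normal derivative cocycle along $\phi^0$. Using the $F_{a,b}$-invariance identity $g_{a,b,x}^n(\phi^*(x)) = \phi^*(f^n(x))$ together with a Taylor expansion of $g_{a,b,x}^n$ around $\phi^0$, and exploiting that $g_{a,b}$ has negative Schwarzian derivative, I would derive bounded-distortion inequalities
\begin{equation*}
c_1\, \phi^*(x)\, dg_{a,b,x}^n(\phi^0(x)) \;\leq\; \phi^*(f^n(x)) \;\leq\; c_2\, \phi^*(x)\, dg_{a,b,x}^n(\phi^0(x))
\end{equation*}
valid as long as $\phi^*(f^n(x))\le \delta_0$ for a small fixed threshold $\delta_0>0$. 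This yields a characterization
\begin{equation*}
\{\phi^* < \epsilon\} \;\asymp\; \bigcup_{n\geq 0}\bigl\{x : \phi^*(f^n(x))\in[\delta_0/2,\delta_0] \text{ and } dg_{a,b,x}^n(\phi^0(x)) \asymp \delta_0/\epsilon\bigr\}
\end{equation*}
up to bounded multiplicative constants, reducing the problem to a hitting-time statement for the Birkhoff sum $S_n h$ of the Hölder observable $h$.

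Next I would decompose the set on the right into Markov $n$-cylinders on which $S_n h$ is approximately constant (by the Hölder bounded-distortion lemma), then apply the Gibbs property $\nu_\psi(C_n)\asymp \exp(S_n\psi(x_{C_n}))$. Setting $T \coloneqq |\log(\epsilon/\delta_0)|$, this reduces the measure estimate to
\begin{equation*}
\nu_\psi\{\phi^*<\epsilon\} \;\asymp\; \sum_{n\geq 0}\sum_{C_n:\, S_n h\approx -T}\exp\bigl(S_n\psi\bigr).
\end{equation*}
For any $t\geq 0$, the standard trick of inserting $1 = e^{t(S_n h + T)} e^{-tT}$ rewrites this as $e^{-tT}\sum_n \sum_{C_n}\exp\bigl(S_n(\psi + t h)\bigr)$ up to errors controlled by the large-deviation rate function. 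Applying the Ruelle--Perron--Frobenius spectral gap for the transfer operator associated with $\psi + t h$ (which is available because both $\psi$ and $h$ are Hölder on each Markov interval), the sum over $C_n$ equals $e^{n P(\psi+th)}$ up to bounded factors. Choosing $t = t_0^*$, the unique positive zero of $p_{0,\psi}$, the inner sums become summable in $n$, and one gets $\nu_\psi\{\phi^*<\epsilon\}\asymp \epsilon^{t_0^*}$, from which the claimed limit follows after taking logarithms and dividing by $\log\epsilon$.

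Finally, to justify the existence and uniqueness of $t_0^*>0$, I would invoke analyticity and strict convexity of $t\mapsto p_{0,\psi}(t)$: at $t=0$ it vanishes since $P(\psi)=0$, its derivative there equals $\int h\,d\nu_\psi = \lambda_{\nu_\psi,a,b}(\phi^0)<0$ by (H1), and as $t\to\infty$ it grows at least like $t\log(1+a)>0$ using the Dirac measure at the fixed point $0$ of $f$ in the variational principle. Convexity then gives a unique $t_0^*>0$. The argument for $i=1$ is identical with $\phi^1$, $g_b'(1)=1+b$ in place of $g_a'(0)=1+a$. I expect the main obstacle to be the rigorous control of the sub-exponential prefactors in Step 4, i.e.\@ promoting the heuristic $\asymp$ to an actual limit of $\log\nu_\psi\{\phi^*<\epsilon\}/\log\epsilon$; this typically requires a Tauberian-type or renewal argument built on the spectral gap of the perturbed transfer operator at $t=t_0^*$, analogous to the arguments in \cite[Section~4]{keller2017stability}.
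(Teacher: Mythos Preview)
Your reduction to the normal derivative cocycle $S_n h$ along $\phi^0$, the existence/uniqueness argument for $t_0^*$ via strict convexity of $p_{0,\psi}$, and the upper bound on $\nu_\psi\{\phi^*<\phi^0+\epsilon\}$ via the Chebyshev/transfer-operator trick all match the paper (the last is precisely Lemma~\ref{L3.6}, which gives $\nu_\psi\{\phi^*<\phi^0+\epsilon\}\le C_t\,\epsilon^t$ for every $t<t_0^*$).

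The part that does not go through as written is the \emph{lower} bound on the measure, and the paper handles it quite differently from your proposed Tauberian/renewal route. Your first-passage decomposition couples the cocycle condition $S_n h\approx T$ with the landing condition $\phi^*(f^n x)\in[\delta_0/2,\delta_0]$, and summing over $n$ with $t=t_0^*$ delivers only the upper bound; extracting a matching lower bound from that coupled sum is exactly the obstacle you flag. The paper bypasses this coupling entirely: it proves a pull-back comparison (Lemma~\ref{L3.7}) showing that for \emph{every} fixed $n$ and interval $I$ one has $\nu_\psi\{x\in I:\phi^*(x)<\phi^0+\epsilon\}\ge\gamma\cdot\nu_\psi\{x\in\tilde I:g^n_{a,b,x}(\phi^0+\epsilon)-\phi^0>\beta\}$, obtained by pulling back along a monotone branch of $f^n$ together with the one-sided Koebe inequality~(\ref{14}) (a consequence of the negative Schwarzian). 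Choosing the single scale $n_\epsilon\coloneqq\lceil|\log\epsilon|/p_{0,\psi}'(t_0^*)\rceil$ then reduces the lower bound to one large-deviation probability $\nu_\psi\{S_{n_\epsilon}h>\text{const}\cdot|\log\epsilon|\}$, whose exponential rate is read off directly from the Plachky--Steinebach theorem (Proposition~\ref{P4.7}); no renewal or Tauberian argument is needed. Two ingredients essential for $\gamma>0$ in Lemma~\ref{L3.7} that your sketch omits: $\nu_\psi$ is fully supported and non-singular under $f$ (Lemma~\ref{L00} and Corollary~\ref{C00}); without these the pull-back comparison collapses. (Minor: in your Step~3 the condition should read $S_n h\approx +T$, since $dg^n\asymp\delta_0/\epsilon>1$.)
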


\begin{definition}
Let $\nu$ denote an appropriate $f$-invariant probability measure on $\mathbb{I}$ and $m$ the Lebesgue measure on $\mathbb{I}$.   
For $i=0,1$, the  \emph{local stability index} $\sigma_\nu(x,y)$ of a point $(x,y) \in \mathbb{I} \times \mathbb{I}$ with respect to $\mathbb{B}_i$ is defined as follows: 
\begin{equation}
\Sigma_{\nu, \epsilon}^i(x,y)\coloneqq \frac{\nu\times m (U_\epsilon(x,y) \cap \mathbb{B}_i)}{\nu\times m(U_\epsilon(x,y))},
\end{equation}
where $U_\epsilon(x,y)=[x-\epsilon,x+\epsilon] \times [y-\epsilon,y+\epsilon]$.
Further, with 
\begin{equation}
    \sigma_\nu^i(x,y)\coloneqq\lim_{\epsilon \to 0}\frac{\log \Sigma^i_{\nu,\epsilon}(x,y)}{\log \epsilon},
\end{equation}
we set
\begin{equation}
    \sigma_\nu(x,y)\coloneqq \sigma_\nu^1(x,y) - \sigma_\nu^0(x,y).
\end{equation}
\end{definition}
Clearly, $\Sigma_{\nu,\epsilon}^0(x,y) +\Sigma_{\nu,\epsilon}^1(x,y) =1$.  
Under condition (H1), we determine the stability index for $\nu_\psi$-a.\,e.\@ point and the multifractal spectrum of the stability index also with respect to $\nu_\psi$.

\begin{theorem}\label{T4}
    Let $F_{a,b} \in \mathcal{F}$ with $(a,b) \in \Gamma$, $F_{a,b}$ fulfills the condition (H1) with $\nu_\psi$ specified therein. Then the following statements hold:
    \begin{enumerate}
        \item The Basin $\mathbb{B}_1$ is riddled with $\mathbb{B}_0$, and for $\nu_{\psi}$-a.\,e.\@ $x \in \mathbb{I}$ and all $y>\phi^*(x),$ one has that
\begin{equation*}
  \sigma_{\nu_\psi}(x,y)=  -\sigma_{\nu_\psi}^0(x,y)=t_1^*\cdot \frac{ \lambda_{\nu_{\psi},a,b}(\phi^1)}{\int \log |df| \d\nu_{\psi}}<0,
\end{equation*}
where $t_1^*$ is as in Theorem \ref{T3}.
        \item The Basin $\mathbb{B}_0$ is riddled with $\mathbb{B}_1$ and for $\nu_{\psi}$-a.\,e.\@ $x \in \mathbb{I}$ and all $y<\phi^*(x)$, one has that
\begin{equation*}
  \sigma_{\nu_\psi}(x,y)=  \sigma_{\nu_\psi}^1(x,y)=t_0^*\cdot  \frac{ -\lambda_{\nu_{\psi},a,b}(\phi^0)}{\int \log |df| \d\nu_{\psi}}>0,
\end{equation*}
where $t_0^*$ is as in Theorem \ref{T3}.
    \end{enumerate}
\end{theorem}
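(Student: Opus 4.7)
The proof adapts Keller's strategy from \cite{keller2017stability} to the Gibbs measure setting, combining dynamical self-similarity with the tail asymptotics of Theorem \ref{T3}; we detail statement (2), since (1) follows by interchanging the roles of $\phi^0$ and $\phi^1$. Fix a $\nu_\psi$-typical $x$ with $\phi^*(x) > 0$ and $y \in (0, \phi^*(x))$. Both $\mathbb{B}_0$ and $\mathbb{B}_1$ are $F_{a,b}$-invariant, and for sufficiently small $\epsilon$, $F_{a,b}^n$ is injective on $U_\epsilon(x,y)$; hence $F_{a,b}^n(U_\epsilon(x,y) \cap \mathbb{B}_j) = F_{a,b}^n(U_\epsilon(x,y)) \cap \mathbb{B}_j$ for $j = 0, 1$. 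Set $n = n(\epsilon) \coloneqq \lfloor \log(1/\epsilon)/L \rfloor$ with $L \coloneqq \int \log|df| \d \nu_\psi$.

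By the Gibbs property of $\nu_\psi$ (horizontal direction) and the Hölder continuity of $(x',y') \mapsto \log|dg_{a,b,x'}(y')|$ on each Markov rectangle (vertical direction), the image $F_{a,b}^n(U_\epsilon(x,y))$ is comparable to a rectangle $R_n = I_n \times [y_n - C\delta, y_n + C\delta]$, where $I_n \ni f^n x$ has length of order one, $y_n = g_{a,b,x}^n(y)$, and $\delta = |dg_{a,b,x}^n(y)| \cdot \epsilon$; moreover, the push-forward $F_{a,b,\ast}^n((\nu_\psi \times m)|_{U_\epsilon})$ is comparable, with constants independent of $\epsilon$, to a multiple of $\nu_\psi \times m|_{R_n}$. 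Consequently the stability-index ratio is essentially preserved,
\[
\Sigma^1_{\nu_\psi, \epsilon}(x, y) \asymp \frac{(\nu_\psi \times m)(R_n \cap \mathbb{B}_1)}{(\nu_\psi \times m)(R_n)}.
\]

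By Birkhoff's theorem and the negativity of $\lambda_0 \coloneqq \lambda_{\nu_\psi, a, b}(\phi^0)$, we have $y_n \sim y \cdot e^{n\lambda_0}$ and $\delta \sim \epsilon \cdot e^{n\lambda_0}$, hence $\delta / y_n \to 0$. A point $(x'', y'') \in R_n$ belongs to $\mathbb{B}_1$ iff $y'' > \phi^*(x'')$, which together with $y'' \leq y_n + C\delta$ forces $\phi^*(x'') < y_n + C\delta$. Theorem \ref{T3} yields $\nu_\psi\{\phi^* < y_n + C\delta\} \asymp y_n^{t_0^*}$. Splitting the integration on $\{\phi^* \leq y_n - C\delta\}$ (where the $y''$-length contribution is $2C\delta$) and on $\{\phi^* \in (y_n - C\delta, y_n + C\delta)\}$ (which is lower order), we obtain $(\nu_\psi \times m)(R_n \cap \mathbb{B}_1) \asymp \delta \cdot y_n^{t_0^*}$ and $(\nu_\psi \times m)(R_n) \asymp \delta$. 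Thus $\Sigma^1_{\nu_\psi, \epsilon}(x, y) \asymp y_n^{t_0^*} \asymp \epsilon^{-t_0^* \lambda_0 / L}$, which gives
\[
\sigma^1_{\nu_\psi}(x, y) = \frac{-t_0^* \lambda_0}{L} = t_0^* \cdot \frac{-\lambda_{\nu_\psi, a, b}(\phi^0)}{\int \log|df| \d \nu_\psi} > 0.
\]
Since $\Sigma^0 + \Sigma^1 = 1$ and $\Sigma^1 \to 0$, the expansion $\log(1 - \Sigma^1) \sim -\Sigma^1$ yields $\sigma^0_{\nu_\psi}(x, y) = 0$, hence $\sigma_{\nu_\psi}(x, y) = \sigma^1_{\nu_\psi}(x, y)$. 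The positivity of $\Sigma^1$ at every scale shows $\mathbb{B}_0$ is riddled with $\mathbb{B}_1$.

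The principal technical hurdle is the bounded-distortion claim underlying the dynamical transport. Horizontal distortion follows from Bowen's Gibbs inequalities for the Hölder potential $\psi$. Vertical distortion requires Koebe-type control of the iterates $g_{a,b,x'}^n$ as $x'$ varies over the $n$-cylinder of $x$; this exploits the Hölder regularity of $(x',y') \mapsto \log|dg_{a,b,x'}(y')|$ on each Markov rectangle together with the uniform contraction from $\lambda_0 < 0$ (Birkhoff-typical in $x$). Once this uniform distortion estimate is in place, matching the boundary-excursion exponent $t_0^*$ from Theorem \ref{T3} with the rescaled neighborhood $R_n$ is routine bookkeeping.
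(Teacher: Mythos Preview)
Your strategy coincides with the paper's (both adapt \cite{keller2017stability}): push $U_\epsilon(x,y)$ forward by $F_{a,b}^n$ for a well-chosen $n$, use bounded distortion to compare with a product measure on a rectangle whose base has order-one length, and then read off the tail exponent from Theorem~\ref{T3}. Two places, however, overreach.

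First, Theorem~\ref{T3} gives only the \emph{logarithmic} asymptotic $\log\nu_\psi\{\phi^*<\epsilon\}/\log\epsilon\to t_0^*$, not the multiplicative comparability $\nu_\psi\{\phi^*<\epsilon\}\asymp\epsilon^{t_0^*}$ you write; likewise Birkhoff yields $y_n=e^{n\lambda_0+o(n)}$, not $y_n\sim y\,e^{n\lambda_0}$. These subexponential errors vanish once you take $\log$ and divide by $\log\epsilon$, so the final value is correct, but the intermediate $\asymp$ claims are false as written. The paper avoids this by working throughout with $\limsup/\liminf$ of log-ratios (its inequalities (\ref{18}) and (\ref{19})).

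Second, and more substantively: after pushing forward you need the tail estimate on the \emph{varying} interval $I_n\ni f^n x$, not on all of $\mathbb{I}$. Theorem~\ref{T3} as stated is global; the lower bound for $\nu_\psi\{x''\in I_n:\phi^*(x'')<y_n+C\delta\}$ does not follow from it, nor from the distortion package you invoke. In the paper this is handled by proving a version uniform over all families of intervals with length bounded below (inequality (\ref{15})), which in turn rests on Lemma~\ref{L3.7} and Corollary~\ref{C00}. Your sketch silently assumes this uniformity; without it the lower bound on $\sigma^1_{\nu_\psi}(x,y)$ is unproved.
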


Note that, since $(a,b) \in \Gamma$, by Theorem \ref{T1}, the basins $\mathbb{B}_0$ and $\mathbb{B}_1$ are intermingled.
\begin{remark}
    When $\psi=-\log |df|$, we have $P(\psi)=0$. In this case, the 
Loynes exponent $t_i^*$ is defined by $p_{i, \psi}=P(- \log |df| + t_i^*  \log dg_{a,b}(\phi^i) )= 0$. Consequently, Theorem \ref{T3} and Theorem \ref{T4} yield   \cite[Theorems 1 and 2]{keller2017stability}.

\end{remark}
Multifractal analysis provides a framework within the thermodynamic formalism to study the fine-scale geometric and dynamical properties of measures, particularly their local scaling variability under potential functions. \cite{pesin1997multifractal, kessebohmer2008fractal, kessebohmer2007multifractal, kessebohmer2004multifractal}.  
Multifractal analysis can be used to investigate the complex geometric and dynamic structures of basins of attraction in nonlinear systems, especially riddled and intermingled basins. These basins arise in systems that are characterized by a sensitive dependence on the initial conditions and intricate boundary structures. 

In the following,  let $F_{a,b} \in \mathcal{F}$ with  $(a,b) \in \Gamma$ and $\nu=\nu_\psi$ as defined in condition (H1). Then the basins of chaotic attractors $A_i$ are intermingled by $\nu$. We calculate the multifractal spectrum of the stability index with respect to $\nu$. We define
\begin{equation}\label{M0}
 A_{0,\nu}(\sigma)=\{x \in \mathbb{I}: \sigma_\nu(x,y)=\sigma, \ \text{for} \ \text{every} \ y<\phi^*(x)\},   
\end{equation}

and
\begin{equation}\label{M1}
    A_{1,\nu}(\sigma)=\{x \in \mathbb{I}: \sigma_{\nu}(x,y)=-\sigma, \ \text{for} \ \text{every} \ y>\phi^*(x)\}.
\end{equation}

\begin{theorem}\label{T5}
    Take $T(q)$ as  $P(-T(q) \log  |df| + q t_0^*\log dg_{a,b}(\phi^0) ) = 0$, and let $\nu_q$ be the equilibrium state with potential $-T(q) \log  |df| + q t_0^* \log dg_{a,b}(\phi^0)$, where $t_0^*$ is given by Theorem \ref{T3}, and let
    \begin{equation*}
        \sigma(q)=-dT(q)=-t_0^* \int \log dg_{a,b}(\Phi^0)\d\nu_q / \int \log|df|\d\nu_q.
    \end{equation*}
  Then, $\nu_0=\nu_{\mathrm{ac}}$, $T(0)=1$ and $T(q)$ is strictly convex function. 
Moreover, the following hold:
\begin{enumerate}
\item There exists a unique $q^* \in (0, 1)$ such that $\int \log dg_{a,b}(\phi^0)\d\nu_{q^*}=0$. 
\item The functions $\sigma \mapsto A_{0,\nu}(\sigma)$ and $q \mapsto T(q)$ form a Legendre transform pair. In particular
\begin{equation*}
    \dim_H (A_{0,\nu}(\sigma(q)))=T(q) + q \sigma(q), \ \text{for \ all} \ q>q^*.
\end{equation*}
 \end{enumerate}   
 \end{theorem}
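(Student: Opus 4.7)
The plan is to establish the preamble and part~(1) via standard pressure calculus and the implicit function theorem, then derive part~(2) from the multifractal formalism applied to ratios of Birkhoff averages on the conformal repeller $(f,\mathbb{I})$. Since $t\mapsto P(-t\log|df|+qt_0^*\log dg_{a,b}(\phi^0))$ is strictly decreasing (its derivative equals $-\int\log|df|\d\nu<0$ for the corresponding equilibrium state), $T(q)$ is uniquely defined and real-analytic. At $q=0$, Bowen's equation $P(-T(0)\log|df|)=0$ gives $T(0)=\dim_H\mathbb{I}=1$, with unique equilibrium state $\nu_0=\nu_{\mathrm{ac}}$. Strict convexity of $T$ follows from strict convexity of pressure together with the fact that $\log dg_{a,b}(\phi^0)$ takes the distinct values $\log(1+a)$ and $\log(1-b)$ on the two Markov intervals of $f$, so it is not cohomologous to any multiple of the constant $\log|df|$.

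\textbf{Existence of $q^*$.} Implicit differentiation of the defining equation yields
\[
T'(q)=\frac{t_0^*\int \log dg_{a,b}(\phi^0)\d\nu_q}{\int \log|df|\d\nu_q}=-\sigma(q).
\]
By (H1) with $\psi=-\log|df|$, $\int\log dg_{a,b}(\phi^0)\d\nu_0=\lambda_{\nu_{\mathrm{ac}},a,b}(\phi^0)<0$, so $T'(0)<0$. At $q=1$, Theorem~\ref{T3} (with $\psi=-\log|df|$) forces $T(1)=1$; the convex map $t\mapsto P(-\log|df|+t\log dg_{a,b}(\phi^0))$ vanishes at $t=0$ with negative derivative and again at $t=t_0^*>0$, so its derivative at $t_0^*$ must be positive, i.\,e.\@ $\int\log dg_{a,b}(\phi^0)\d\nu_1>0$, giving $T'(1)>0$. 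Strict convexity of $T$ makes $T'$ strictly increasing, yielding a unique $q^*\in(0,1)$ with $\int\log dg_{a,b}(\phi^0)\d\nu_{q^*}=0$.

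\textbf{Multifractal formula.} The key step is to realize $A_{0,\nu}(\sigma(q))$ as a level set of Birkhoff averages. Extending the Gibbs-estimate argument that underlies Theorem~\ref{T4} to individual orbits, whenever the limits $\alpha(x)=\lim_{n\to\infty}\tfrac{1}{n}\log|df^n(x)|$ and $\beta(x)=\lim_{n\to\infty}\tfrac{1}{n}\log dg_{a,b,x}^n(\phi^0(x))$ exist and $y<\phi^*(x)$, the pointwise identity $\sigma_\nu(x,y)=-t_0^*\beta(x)/\alpha(x)$ holds. Hence, up to a set of dimension zero,
\[
A_{0,\nu}(\sigma(q))=\{x\in\mathbb{I}:-t_0^*\beta(x)/\alpha(x)=\sigma(q)\}.
\]
The Gibbs property of $\nu_q$ on dynamical cylinders $C_n(x)$ gives $\nu_q(C_n(x))\asymp\exp(-nT(q)\alpha_n(x)+nqt_0^*\beta_n(x))$, where $\alpha_n,\beta_n$ denote the corresponding Birkhoff sums, while $|C_n(x)|\asymp e^{-n\alpha(x)}$; for $\nu_q$-a.e.\@ $x$ the local dimension of $\nu_q$ therefore equals $T(q)+qt_0^*\beta(x)/\alpha(x)=T(q)+q\sigma(q)$, and Billingsley's lemma delivers $\dim_H A_{0,\nu}(\sigma(q))=T(q)+q\sigma(q)$. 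The restriction $q>q^*$ is precisely the regime $\beta(x)>0$ along $\nu_q$-typical orbits, in which the pointwise stability-index identity inherited from Theorem~\ref{T4} remains valid.

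\textbf{Main obstacle.} The hardest step is the pointwise extension of the stability-index formula from $\nu$-typical $x$ to every $x$ with convergent Birkhoff averages, since such sets are $\nu$-null for $q\neq 0$. This requires uniform distortion estimates for $F_{a,b}$ and a careful comparison of $\nu$ with each $\nu_q$ on cylinders, so that the Gibbs bounds underlying Keller's proof of Theorem~\ref{T4} transfer to atypical orbits. Once this identification is secured, the dimension formula follows routinely from Legendre duality for the pressure function of the conformal repeller $(f,\mathbb{I})$.
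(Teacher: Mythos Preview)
Your approach is essentially the paper's: well-definedness and strict convexity of $T$ via pressure analyticity and the implicit function theorem, location of $q^*$ through $T(0)=T(1)=1$ (equivalently $T'(0)<0<T'(1)$), identification of $A_{0,\nu}(\sigma)$ with the Birkhoff-ratio level set~\eqref{A} via the orbit-wise version of the estimates behind Theorem~\ref{T4}, and then the Pesin--Weiss multifractal formalism for the conformal repeller $(f,\mathbb{I})$. You correctly single out the pointwise extension of the stability-index formula as the crux; the paper handles this step the same way, asserting~\eqref{A} and deferring the details to \cite{pesin1997multifractal,walkden2017stability}.

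One point needs correction. Your justification of the range $q>q^*$ is inverted. When $q>q^*$ one has $\int\log dg_{a,b}(\phi^0)\d\nu_q>0$, so along $\nu_q$-typical orbits the graph $\phi^0$ is transversely \emph{repelling}; in that regime $\phi^*$ is not $\nu_q$-a.e.\ defined and the identity for $\sigma_\nu(x,y)$ with $y<\phi^*(x)$ becomes vacuous rather than valid. The paper's proof of the companion Theorem~\ref{T6} makes this explicit: it restricts to $q<q^*$ precisely so that $\int\log dg_{a,b}(\phi^1)\d\nu_q<0$ and $\phi^*$ is $\nu_q$-a.e.\ defined via \cite[Proposition~1.6]{keller2017stability}. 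By the symmetric reasoning the correct range here is $q<q^*$; the ``$q>q^*$'' in the statement appears to be a misprint that you have rationalized in the wrong direction. A minor secondary point: your derivation of $T(1)=1$ presupposes $\psi=-\log|df|$, which is the paper's Case~(1); the paper adds a Case~(2) for general $\psi$ satisfying~(H1) via variational-principle bounds on $dT$, though that argument only localizes $q^*$ in $\mathbb{R}$ rather than in $(0,1)$.
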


\begin{theorem}\label{T6}
   Take $S(q)$ as  $P(-S(q) \log  d|f| + q t_1^*\log dg_{a,b}(\phi^1) ) = 0$ and let $\nu_q$ be the equilibrium state with potential $-S(q) \log  |df| + q t_1^* \log dg_{a,b}(\phi^1)$,  where $t_1^*$ is given by Theorem \ref{T3}, and  let
    \begin{equation*}
        \sigma(q)=-dS(q)=-t_1^* \int \log dg_{a,b}(\Phi^1)\d\nu_q \bigg/ \int \log|df|\d\nu_q.
    \end{equation*}
   Then, $\nu_0=\nu_{\mathrm{ac}}$, $S(0)=1$, and $S(q)$ is strictly convex function. 
Moreover, the following hold:
\begin{enumerate}
\item There exists a unique $q^* \in (0, 1)$ such that $\int \log dg_{a,b}(\phi^1)\d\nu_{q^*}=0$. 
\item The functions $\sigma \mapsto A_{1,\nu}(\sigma)$ and $q \mapsto S(q)$ form a Legendre transform pair. In particular
\begin{equation*}
    \dim_H (A_{1,\nu}(\sigma(q)))=S(q) + q \sigma(q), \, \text{ for all }  q<q^*.
\end{equation*}
 \end{enumerate}   
 \end{theorem}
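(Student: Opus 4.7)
The proof parallels Theorem \ref{T5} with the graph $\phi^1$ and the exponent $t_1^*$ in place of $\phi^0$ and $t_0^*$. The starting point is that the distortion estimates underlying Theorem \ref{T4}(1) yield, for every $x\in\mathbb{I}$ at which the two Birkhoff averages
\[
\overline{\lambda}_1(x)\coloneqq\lim_{n\to\infty}\tfrac{1}{n}\sum_{k=0}^{n-1}\log|dg_{a,b,f^k x}(\phi^1(f^k x))|,\qquad \overline{\lambda}_f(x)\coloneqq\lim_{n\to\infty}\tfrac{1}{n}\sum_{k=0}^{n-1}\log|df(f^k x)|
\]
exist and for every $y>\phi^*(x)$, the identity $\sigma_\nu(x,y)=t_1^*\,\overline{\lambda}_1(x)/\overline{\lambda}_f(x)$. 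Hence, up to sets of arbitrarily small Hausdorff dimension, the level set $A_{1,\nu}(\sigma)$ coincides with the set of $x$ whose Birkhoff-average ratio equals $-\sigma/t_1^*$, placing the problem in the standard multifractal framework of Pesin--Weiss and Kessebohmer.

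To settle claim (1), define $S(q)$ implicitly by $P(-S(q)\log|df|+qt_1^*\log dg_{a,b}(\phi^1))=0$. Since both potentials are Hölder continuous on each atom of the Markov partition of $f$, Ruelle's thermodynamic formalism together with the implicit function theorem yields real-analyticity of $S$ and the identity $dS/dq=-t_1^*\int\log dg_{a,b}(\phi^1)\d\nu_q/\int\log|df|\d\nu_q$. The relation $P(-\log|df|)=0$, via Pesin's entropy formula, gives $S(0)=1$ with $\nu_0=\nu_{\mathrm{ac}}$. Strict convexity of $S$ follows from the fact that $\log dg_{a,b}(\phi^1)$ is not cohomologous to a multiple of $\log|df|$, which can be read off from the existence of the two $f$-invariant measures $\nu_\psi$ and $\nu^1$ producing opposite signs of $\lambda_{\cdot,a,b}(\phi^1)$, as noted before Theorem \ref{T3}. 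By the very definition of $t_1^*$ in Theorem \ref{T3} one also has $S(1)=1$, so by strict convexity $S$ attains its minimum at a unique $q^*\in(0,1)$; at $q^*$ one has $dS/dq=0$, equivalently $\int\log dg_{a,b}(\phi^1)\d\nu_{q^*}=0$, which proves (1).

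Claim (2) then follows by the standard Legendre-transform machinery. Differentiating the pressure equation and setting $\sigma(q)=-dS/dq$, one shows that the equilibrium state $\nu_q$ concentrates on the points whose Birkhoff-average ratio equals $\sigma(q)/t_1^*$; the conditional variational principle, applied to the projected level set via the Bowen--Ruelle dimension formula for Markov systems, then yields $\dim_H A_{1,\nu}(\sigma(q))=S(q)+q\sigma(q)$. The restriction $q<q^*$ is precisely the regime in which $\sigma(q)>0$, matching $\sigma_\nu(x,y)=-\sigma(q)<0$ as required by the definition of $A_{1,\nu}$ and Theorem \ref{T4}(1). \textbf{The main obstacle} is the preliminary reduction of the first paragraph: upgrading the $\nu_\psi$-a.e.\@ formula of Theorem \ref{T4}(1) to a pointwise identity on every $x$ where the Birkhoff limits exist. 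This requires uniform distortion bounds for the compositions $g^n_{a,b,x}$ near the invariant graph $\phi^1$, parallel to the estimates Keller establishes in \cite{keller2017stability}, but adapted to the two-parameter family of fiber maps considered here.
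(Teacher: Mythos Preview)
Your proposal is correct and follows essentially the same route as the paper. Both arguments hinge on the implicit definition of $S(q)$, the observation $S(0)=S(1)=1$ from $P(-\log|df|)=0$ and the definition of $t_1^*$, strict convexity via the non-cohomology of $\log dg_{a,b}(\phi^1)$ to a multiple of $\log|df|$ (witnessed by the measures $\nu_\psi$ and $\nu^1$ with opposite signs of $\lambda_{\cdot,a,b}(\phi^1)$), and then the Pesin--Weiss Legendre machinery for the Birkhoff-ratio level sets.

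Regarding what you flag as the main obstacle: the paper does not resolve it either. It simply asserts, in equation~\eqref{A}, the identification
\[
A_{1,\nu_\psi}(\sigma)=\Bigl\{x:\lim_{n\to\infty}\tfrac{t_1^*\,S_n\log dg_{a,b,x}(\phi^1)}{-S_n\log|df(x)|}=\sigma\Bigr\},
\]
and then invokes the standard multifractal arguments of \cite{pesin1997multifractal} and \cite{walkden2017stability}. Your formulation ``up to sets of arbitrarily small Hausdorff dimension'' is in fact more careful than the paper's direct equality; the underlying justification in both cases rests on the uniform distortion bounds \eqref{7}--\eqref{8} established earlier in Section~\ref{section4}, together with the fact that for $q<q^*$ one has $\int\log dg_{a,b}(\phi^1)\,\mathrm{d}\nu_q<0$, so that $\phi^*$ is defined $\nu_q$-a.e.\ and the stability-index formula from Theorem~\ref{T4} applies on a set of full $\nu_q$-measure inside each level set.
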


\section{Locally riddled basin and chaotic saddle} \label{section3}
Let $A$ be a chaotic Milnor attractor of $C^{1+\alpha}$ map $F$ defined on a smooth manifold $M$. Given an ergodic measure $\mu \in \mathcal{E}_{F}(A)$, let $G_\mu$ be the set of generic points of $\mu$. That is
 \begin{equation*}
   G_\mu=G_\mu(A)=\left\{(x,y) \in A:\frac{1}{n}\sum_{i=0}^{n-1}\delta_{F_{a,b}^j(x,y)}\to \mu\right\}
 \end{equation*}
 where convergence is in the weak$^*$ topology. For any $\alpha > 0$ define
 \begin{equation}\label{alpha}
   G_\alpha=G_{\alpha}(A)\coloneqq\bigcup_{\mu \in \mathcal{E}_{F}(A),\Lambda_\mu \geq \alpha}G_\mu.
 \end{equation}
We recall the following  result \cite[Proposition~3.19]{ashwin1996attractor}.
\begin{proposition}\label{p1}
Suppose $F : M \to M$ is a $C^{1+\alpha}$ map leaving the embedded submanifold $\Phi$ invariant, and that $A$ is an asymptotically stable chaotic attractor for $F|_{\Phi}$.
Let $\Lambda_{\mathrm{max}}$, $\lambda_{\mathrm{min}}$ and $\Lambda_{\mathrm{SRB}}$ be given by (\ref{mm}) and (\ref{in}). Then, under $F : M \to M$
\begin{enumerate}
  \item [$(1)$] If $\Lambda_{\mathrm{SRB}} < 0 < \Lambda_{\mathrm{max}}$ and there
exists $\alpha > 0$ with $G_\alpha$ dense in $A$, then $A$ is a Milnor attractor with a locally riddled basin.
  \item [$(2)$] If $ \lambda_{\mathrm{min}}< 0 < \Lambda_{\mathrm{SRB}}$, $\mu_{\mathrm{SRB}}$-almost all Lyapunov exponents are non-zero and $m(\bigcup_{\mu \neq \mu_{\mathrm{SRB}}}G_\mu)=0$, where $m$ is the Riemannian volume on $\Phi$,  then $A$ is a chaotic saddle.
\end{enumerate}
\end{proposition}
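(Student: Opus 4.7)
The plan is to prove parts (1) and (2) separately, both via Pesin non-uniform hyperbolicity theory applied in the normal direction to $\Phi$. Throughout, I would exploit that the restriction $F|_{\Phi}$ is already well understood (chaotic, asymptotically stable on $\Phi$) and that the extra information is encoded in the transverse normal exponents.

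For part (1), the Milnor property is immediate from Proposition \ref{thm11}: the assumption $\Lambda_{\mathrm{SRB}}<0$ yields $m(\mathcal{B}(A))>0$. To establish the locally riddled property, I would fix a small neighborhood $U$ of $A$ chosen so that $F(U\cap\Phi)\subset U\cap\Phi$, pick any $p\in A$ and $\varepsilon>0$, and verify (\ref{l-riddel}). Density of $G_\alpha$ in $A$ allows one to select a generic point $q\in B_{\varepsilon/2}(p)\cap G_\mu$ for some ergodic $\mu\in \mathcal{E}_{F}(A)$ with $\Lambda_\mu\geq\alpha>0$. Standard Pesin theory then provides a local unstable Pesin manifold $W^u_{\mathrm{loc}}(q)$ transverse to $\Phi$ of size bounded below in terms of $\alpha$, along which the normal component of the orbit grows at least like $\mathrm{e}^{n\alpha}$ and hence eventually escapes $U$. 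Absolute continuity of the corresponding Pesin stable foliation upgrades the escape from a one-dimensional curve to a positive-$m$-measure subset of $B_\varepsilon(p)$, which is exactly (\ref{l-riddel}).

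For part (2), the mechanism is reversed. The hypothesis $\Lambda_{\mathrm{SRB}}>0$ together with non-vanishing $\mu_{\mathrm{SRB}}$-a.e.\ Lyapunov exponents supplies, via Pesin theory, a transverse local unstable manifold at $\mu_{\mathrm{SRB}}$-a.e.\ point of $A$, along which transverse perturbations are repelled exponentially. The measure-theoretic hypothesis $m(\bigcup_{\mu\neq\mu_{\mathrm{SRB}}} G_\mu)=0$, combined with the fact that every Birkhoff-regular point of $\Phi$ is generic for some ergodic $F|_{\Phi}$-invariant measure, forces Lebesgue-a.e.\ $x\in\Phi$ to be generic for $\mu_{\mathrm{SRB}}$ itself. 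A Pesin-block and Fubini argument along the transverse unstable holonomies then transports this to an ambient statement: Lebesgue-a.e.\ initial condition in a neighborhood of $A$ in $M$ is eventually repelled away from $\Phi$, giving $m(\mathcal{B}(A))=0$. Nonemptiness of $\mathcal{B}(A)\cap U$ comes from $\lambda_{\min}<0$: choosing an ergodic $\nu\in\mathcal{E}_F(A)$ with $\Lambda_\nu<0$ and a generic point $q$ of $\nu$, Pesin theory yields a transverse local stable manifold $W^s_{\mathrm{loc}}(q)\not\subset\Phi$, and any point of $W^s_{\mathrm{loc}}(q)\setminus\Phi$ lies in $\mathcal{B}(A)\cap U$. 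Transitivity together with the positive parallel exponent inherited from $f$ ensures the chaotic part.

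The main obstacle in both parts is the passage from measure-theoretic information on the low-dimensional invariant set $A$ (density of $G_\alpha$ in (1); $m$-a.e.\ genericity of $\mu_{\mathrm{SRB}}$ in (2)) to positive- or full-Lebesgue-measure statements inside an open set of the ambient manifold. This transfer relies on absolute continuity of the relevant Pesin holonomies together with a uniform lower bound on the size of the Pesin blocks, so that a Fubini-type argument may be carried out at a prescribed scale $\varepsilon$. In the present skew-product setting the explicit product structure decouples the base dynamics on $\Phi$ from the transverse fiber dynamics, which simplifies the Fubini step considerably, but one must first verify that the Pesin charts behave compatibly with this splitting so that transverse unstable/stable manifolds are genuinely one-dimensional fibers transverse to $\Phi$.
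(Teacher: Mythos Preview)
The paper does not prove this proposition at all: it is quoted verbatim as \cite[Proposition~3.19]{ashwin1996attractor} and used as a black box in the proofs of Theorems~\ref{T1} and~\ref{T2}. There is therefore no ``paper's own proof'' to compare your attempt against.

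That said, your sketch is broadly in line with the original argument of Ashwin--Buescu--Stewart, which does indeed proceed via Pesin theory for the normal exponents. One point that is slightly garbled in part~(1): the transverse Pesin \emph{unstable} manifold at a $G_\alpha$-generic point $q$ is a curve and hence a null set, so ``absolute continuity of the stable foliation'' is not quite the right lever to promote it to positive $m$-measure. The actual mechanism in \cite{ashwin1996attractor} is that positivity of the normal exponent on a dense set of $\mu$-generic points, combined with Pesin-block uniformity, gives open regions (tubular neighborhoods of the local unstable leaves) on which orbits are pushed out of any fixed neighborhood $U$; positive measure comes directly from openness, not from an absolute-continuity transfer. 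In part~(2) your outline is closer to the original: the key step is indeed that $m$-a.e.\ point of $\Phi$ is $\mu_{\mathrm{SRB}}$-generic, hence carries a transverse unstable Pesin manifold, and a Fubini/absolute-continuity argument along these leaves shows $m(\mathcal{B}(A))=0$; nonemptiness of $\mathcal{B}(A)$ from a transverse stable leaf at a $\lambda_{\min}$-realizing point is exactly right.
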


It is important to note that the normal dynamics vary continuously with the parameters $a$ and $b$. Furthermore, the invariant subspaces $\Phi^i$, $i=0,1$, have codimension 1 within the phase space $\mathbb{I} \times \mathbb{I}$, possessing only a single normal direction. 
By \cite[Remark~3.4]{ashwin1994bubbling}, if $\mathrm{codim}(\Phi)=1$, as in our setting, there is only one normal direction. In this case, $\lambda_\mu=\Lambda_\mu$ for all ergodic $\mu$, and the normal spectrum depends smoothly on normal parameters.

\begin{remark}\label{rem-dis}
In our model, the base map $f$ has a discontinuity at $x=1/2$. Consequently, the set of discontinuity points of  $F_{a,b}$ has measure zero. It is straightforward to verify that the conclusion of Proposition \ref{p1} holds when $F$ has a zero-measure set of discontinuity points \cite{ashwin1996attractor}.

\end{remark}
 \subsection{Proof of Theorem \ref{T1}}
Using Proposition \ref{p1}, we begin to prove Theorem \ref{T1}.

Let $F_{a,b} \in \mathcal{F}$. Then the following  statements hold: 
\begin{enumerate}
    \item If  $(a, b) \in \Gamma_0$ then $F_{a,b}$ admits a chaotic (essential) Milnor attractor $A_0 \subset \Phi_0$.
    \item If  $(a, b) \in \Gamma_1$ then $F_{a,b}$ admits a chaotic (essential)  Milnor attractor $A_1 \subset \Phi_1$.
\end{enumerate}
Indeed, let $\Phi_0$  be given by (\ref{subspaces}), $(a,b)\in \Gamma_0$ and consider the restriction $F_{a,b}|_{\Phi_0}$.
Since there is only one normal direction,  $\Lambda_{a,b, \mu}(A_0)$ of an ergodic invariant probability measure  $\mu \in \mathcal{E}_{F_{a,b}}(A_0)$ is equal to the normal Lyapunov exponent. For simplicity, we write $\Lambda_\mu(A_0)\coloneqq\Lambda_{a,b, \mu}(A_0)$.
Note that, for $0 \leq x < 1/2$, we have
\begin{equation*}
 d_{(x,0)}F_{a,b}=\begin{pmatrix}
2 & 0\\
0& dg_{a}(0)\\
\end{pmatrix}
=\begin{pmatrix}
2 &  0\\
0& 1+a\\
\end{pmatrix},
\end{equation*}
and for $1/2 \leq x \leq 1$, we have
\begin{equation*}
 d_{(x,0)}F_{a,b}=\begin{pmatrix}
2 &  0\\
0&  dg_{b}(0)\\
\end{pmatrix}
=\begin{pmatrix}
2 &  0\\
0&  1-b\\
\end{pmatrix},
\end{equation*}
where $g_{a}$ and $g_b$, given by (\ref{fiber1}).

Hence, 
\begin{eqnarray}\label{Lambd}
  \Lambda_{\mu}(A_0) = \int_{A_0 \cap ([0,1/2]\times \mathbb{I})}\log (1+a) \d\mu(x,y ) + \int_{A_0 \cap( (1/2,1]\times \mathbb{I})}\log (1-b) \d\mu(x,y ).
   \end{eqnarray}

Note that for each invariant measure $\mu$, $\Lambda_{\mu}$ is finite.
Additionally, $\Lambda_{\mu}$ is smoothly dependent on the parameters $a$ and $b$.
The base map $f$  is a piecewise expanding map.
By definition of $F_{a,b}$ and since $\Phi_0$ is one dimensional, we conclude that
$F_{a,b}|_{A_0}$ is also piecewise expanding. This fact implies that $F_{a,b}|_{A_0}$ has a
an ergodic invariant measure equivalent to Lebesgue (see \cite{ashwin1994bubbling, walters2000introduction}); this corresponds to the desired $\mu_{\mathrm{SRB}}(A_0)$ (see Subsection 4.3 of \cite{ashwin1994bubbling}).
By this fact 
\begin{equation*}
\Lambda_{\mathrm{SRB}}(A_0)=1/2 \log (1+a) + 1/2 \log (1-b).
\end{equation*}
Note that $\Lambda_{\mathrm{SRB}}(A_0)$ is the normal Lyapunov exponent on $A_0$ denoted by $L_{\perp, a,b}(0)$ .
It characterizes \cite{ashwin1996attractor} evolution transverse to the $x$-axis. If it is negative, the invariant set $A_0$ is a Milnor essential attractor.
Simple computations show that, for $0<a<1/2$, $0<b<1/2$, and $b> a/(1+a)$ $,$  we have $\Lambda_{\mathrm{SRB}}(A_0)<0$,  and hence, by Proposition \ref{thm11}, $A_0$ is a Milnor (essential) attractor.

The same argument is applied for $A_1$.
Indeed, for $0 \leq x \leq 1/2$, we have
\begin{equation*}
 d_{(x,1)}F_{a,b}=\begin{pmatrix}
2 &  0\\
0&  dg_{a}(1)\\
\end{pmatrix}
=\begin{pmatrix}
2 &  0\\
0&  1-a\\
\end{pmatrix},
\end{equation*}
and for $1/2 < x \leq 1$, we have
\begin{equation*}
 d_{(x,1)}F_{a,b}=\begin{pmatrix}
2 &  0\\
0&  dg_{b}(1)\\
\end{pmatrix}
=\begin{pmatrix}
2 &  0\\
0&  1+b\\
\end{pmatrix},
\end{equation*}
where $g_{a}$, $g_b$, given by (\ref{fiber1}).
Hence, 
\begin{eqnarray}\label{1Lambd}
  \Lambda_{\mu}(A_1) = \int_{A_1 \cap ([0,1/2]\times \mathbb{I})}\log (1-a)\d\mu(x,y ) + \int_{A_1 \cap ((1/2,1]\times \mathbb{I})}\log (1+b)\d\mu(x,y ).
   \end{eqnarray}
As stated above, $F_{a,b}|_{A_1}$ possesses an ergodic invariant measure that is equivalent to the Lebesgue measure; this corresponds to the desired $\mu_{\mathrm{SRB}}(A_1)$.
By this fact,  
\begin{equation}\label{L2}
\Lambda_{\mathrm{SRB}}(A_1)=1/2 \log (1-a) + 1/2 \log (1+b).
\end{equation}
Note that $\Lambda_{\mathrm{SRB}}(A_1)$  is the normal Lyapunov exponent on $A_1$ denoted by $L_{\perp, a,b}(1)$.
Simple computations show that, for $b<a/(1-a)$ $, $ $0<a<1/2$, $0<b<1/2$, $\Lambda_{\mathrm{SRB}}(A_1)<0$ and hence $A_1$ is a Milnor (essential) attractor.

For $F_{a,b} \in \mathcal{F}$, we now prove the following statements.
\begin{enumerate}
    \item If  $(a, b) \in \Gamma_0$ then  $A_0 \subset \Phi_0$ has a locally riddled basin $\mathcal{B}(A_0)$.
    \item If  $(a, b) \in \Gamma_1$ then $A_1 \subset \Phi_1$ has a locally riddled basin $\mathcal{B}(A_1)$.
\end{enumerate}
   In particular, for each $(a,b)\in \Gamma$,  both invariant sets $A_0$ and $A_1$ are Milnor essential attractors with  locally riddled basins. 

   We have seen that for $(a,b) \in \Gamma_0$, the invariant set $A_0$ is a Milnor (essential) attractor, and also for $(a,b) \in \Gamma_1$, the invariant set $A_1$ is also a Milnor (essential) attractor.  It remains to show that they possess locally riddled basins.
Our proof relies on statement (1) of Proposition \ref{p1}. We provide a detailed demonstration of the existence of a locally riddled basin for  $A_0$. A similar argument applies to $A_1$.
Consider the invariant Dirac measure $\mu_1$ which is supported at the fixed point  $(0,0)$. Using (\ref{Lambd}), $\Lambda_{\mu_1}(A_0)=\log(1+a)$ which is positive.
By this fact, $0< \Lambda_{\mu_1}(A_0) \leq \Lambda_{\mathrm{max}}(A_0)$, and by above arguments, we have
 $\Lambda_{\mathrm{SRB}}(A_0)<0 < \Lambda_{\mathrm{max}}(A_0)$.
 Now, we claim that for each  $(a,b)\in \Gamma_0$
  there exists $\alpha=\alpha(a,b) > 0$ such that $G_\alpha$ is dense in $A_0$.
It is known that there exists a semi-conjugacy $\pi$ between the shift map $\sigma : \Sigma_2^+ \to \Sigma_2^+$ and the
doubling map $f$  \cite{viana2016foundations}. Leveraging this semi-conjugacy, we establish the assertion.  
Indeed, we define $\rho : \mathbb{I} \to \{1,2\}$ by
\begin{equation}\label{code0}
\rho(x)=
\begin{cases}
1\quad &\text{ if } 0 \leq x<1/2\\
2  \quad &\text{ if } 1/2 \leq x \leq 1.\\
\end{cases}
\end{equation}
For simplicity, set, $g_1\coloneqq g_a$  and $g_2\coloneqq g_b$. Let $w=(\alpha_0, \ldots, \alpha_{n-1})$ be a finite word composed of the digits 1 and 2, and consider the associated cylinder $[\alpha_0, \ldots, \alpha_{n-1}]$.
Take $$g_{a,b,w}^n\coloneqq g_{\alpha_{n-1}}\circ \cdots g_{\alpha_0}.$$
Simple calculations show that $d_{(x,0)}g_{a,b,w}^n=(1+a)^\ell(1-b)^{n-\ell}$, where $\ell$ is the number of digit 1 in the word $w$.
We choose an integer $k$ such that if we take $$w^{\prime}=(\alpha_0, \ldots, \alpha_{n-1}, \underbrace{1, \ldots, 1}_{k\text{-times}}),$$
then $d_{(x,0)}g_{a,b,w^{\prime}}^{n+k} >C$, for some $C>1$. Now, we take the periodic sequence $\omega=(\overline{w^{\prime}})\in \Sigma_2^+$.
Clearly $\omega \in [\alpha_0, \ldots, \alpha_{n-1}]$. Let $x=\pi(\omega)$, then $x$ is a periodic point of $f$ with period $n+k$.
Moreover, $(x,0)\in A_0$ is a periodic point of $F_{a,b}$.
By construction, the set of all such periodic points is dense in $A_0$.
Consider the Dirac measure $\mu_{(x,0)}$ supported at the periodic point $(x,0)$. Since  the derivative $d_{(x,0)}g_{a,b,w^{\prime}}^{n+k} >C$, by taking $\alpha=\log C > 1$, we observe that $\Lambda_{\mu_{(x,0)}}>\alpha$.

We denote by $\mathrm{Per}_\alpha(F_{a,b})$ the set of all periodic points $(x,0)$ with $\Lambda_{\mu_{(x,0)}}>\alpha$. The argument above demonstrates that $\mathrm{Per}_\alpha(F_{a,b})$
 is dense in  $A_0$ which ensures that $G_\alpha$ as defined by (\ref{alpha}) is dense in $A_0$. 

By applying these observations and statement (1) of Proposition  \ref{p1}, $A_0$ is shown to possess a locally riddled basin. 
Similarly, the same holds for $A_1$, thereby the proofs of the first and second statements of Theorem \ref{T1} are completed.

\begin{figure}[!ht]
\begin{center}
   \includegraphics[scale=0.7]{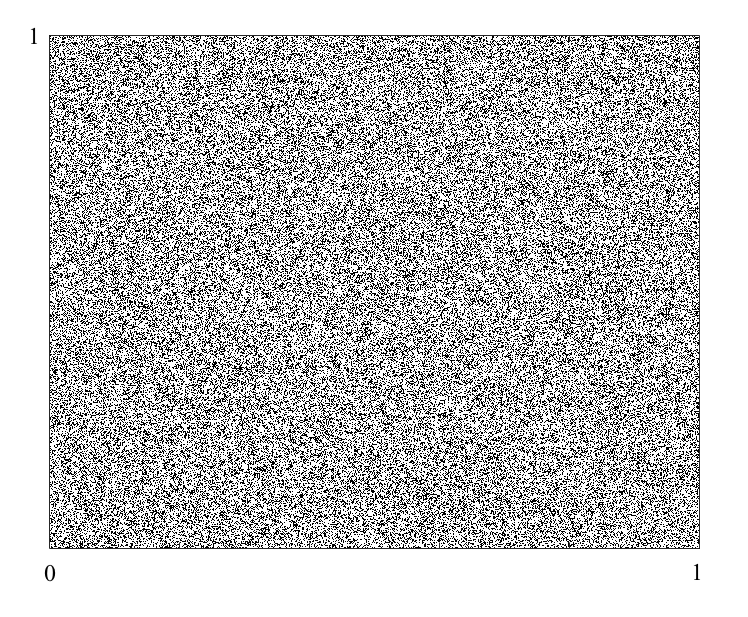}
\end{center}
  \caption{The intermingled basins for the chaotic attractors $A_0$ and $A_1$ are shown for the parameters $a=0.45$ and $b=0.33$ and  with respect to the SRB measure $\nu_{\mathrm{ac}}$. The black points corresponds to the basin of attraction $\mathcal{B}(A_0)$, while the white region 
corresponds to the basin of attraction $\mathcal{B}(A_1)$.   
}
 \label{fig:5}
   \end{figure}

\begin{remark}\label{pos}
   By the above argument, there are two Dirac measures $\mu^i_{(x_i,i)}$, $i=0,1,$ supported on two periodic points $(x_i,i)$ with $\Lambda_{\mu^i_{(x_i,i)}}>1$. So, by Remark \ref{BI}, there are two $f$-invariant measure $\nu^i$, $i=0,1$, with $\lambda_{\nu^i, a, b}(\phi^i)>0$.  
\end{remark}

Now, by Remark \ref{pos} and \cite[Proposition~2.2]{keller2017stability}, the following holds.
\begin{corollary}
Two bounding $\nu_{\mathrm{ac}}$-a.\,e.\@ invariant graphs $\phi^0$ and $\phi^1$ have intermingled basins. 
\end{corollary}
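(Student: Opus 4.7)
The plan is to reduce the statement directly to Keller's criterion \cite[Proposition~2.2]{keller2017stability}, which (in the form used throughout Section~\ref{section2}) asserts that an $\nu$-a.\,e.\@ invariant graph $\phi$ has a basin that is riddled with the basin of a transversally positioned invariant graph as soon as one can exhibit, simultaneously, an ergodic base measure making $\phi$ attracting on average and another ergodic base measure making $\phi$ repelling on average. So the whole task is to certify these two ingredients for each of the bounding graphs $\phi^0$ and $\phi^1$, and then to combine the two riddled-basin statements into an intermingling statement.

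First, I will record the attracting side. In the course of proving Theorem~\ref{T1} it has already been shown that, for $(a,b)\in\Gamma$,
\begin{equation*}
\lambda_{\nu_{\mathrm{ac}},a,b}(\phi^0)=\tfrac{1}{2}\log(1+a)+\tfrac{1}{2}\log(1-b)<0,\qquad \lambda_{\nu_{\mathrm{ac}},a,b}(\phi^1)=\tfrac{1}{2}\log(1-a)+\tfrac{1}{2}\log(1+b)<0,
\end{equation*}
by the definitions of $\Gamma_0$ and $\Gamma_1$. Since $\nu_{\mathrm{ac}}$ is ergodic for $f$ and both $\phi^0$ and $\phi^1$ are $\nu_{\mathrm{ac}}$-a.\,e.\@ invariant graphs (Remark~\ref{R1}), this supplies the required negative fiber exponent for each of the two bounding graphs with respect to the same reference measure $\nu_{\mathrm{ac}}$.

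Next I will supply the repelling side. By Remark~\ref{pos} there exist two $f$-invariant (in fact Dirac-on-periodic-orbit) ergodic measures $\nu^0$ and $\nu^1$ for which $\lambda_{\nu^0,a,b}(\phi^0)>0$ and $\lambda_{\nu^1,a,b}(\phi^1)>0$; these are precisely the periodic points on $\mathrm{Per}_\alpha(F_{a,b})$ constructed in the proof of Theorem~\ref{T1}. With these two measures at hand, Keller's Proposition~2.2 applies to each bounding graph separately: it yields that $\mathbb{B}_0$ is riddled with $\mathbb{B}_1$ and, symmetrically, that $\mathbb{B}_1$ is riddled with $\mathbb{B}_0$. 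Coupled with Corollary~\ref{C1}, which guarantees $(\nu_{\mathrm{ac}}\times m)(\mathbb{B}_i)>0$ for $i=0,1$, this is exactly the defining property of intermingled basins and completes the proof.

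The only potential obstacle is a bookkeeping one rather than a conceptual one: Keller's proposition is formulated under the hypotheses listed at the start of \cite[Section~2]{keller2017stability}, and one must check that our framework matches those hypotheses (piecewise expanding Markov base with a spectral gap, $\alpha$-Hölder fiber log-derivatives on each Markov rectangle, existence of bounding graphs with the claimed exponent signs). These items are verified in Remark~\ref{Holder} and in the paragraph following condition~(H1); no further calculation is required beyond quoting them and matching notation.
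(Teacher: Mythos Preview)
Your proposal is correct and follows essentially the same approach as the paper: the paper's justification for this corollary is the single line ``by Remark~\ref{pos} and \cite[Proposition~2.2]{keller2017stability}'', and you have simply unpacked that citation by recording the negative exponents $\lambda_{\nu_{\mathrm{ac}},a,b}(\phi^i)<0$ from the proof of Theorem~\ref{T1}, the positive exponents from Remark~\ref{pos}, and then invoking Keller's criterion. Your additional references to Corollary~\ref{C1} and the hypothesis-checking paragraph are helpful elaborations but do not change the route.
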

The above corollary ensures that two chaotic attractors $A_i$ have intermingled basins.

To complete the proof of Theorem \ref{T1}, the following lemma is needed.
\begin{lemma}\label{Lemma}
Let $(a,b)\in \Gamma$ and take
\begin{equation*}
 \ell_{a,b}(x)=\sup \{y \in \mathbb{I}: \lim_{n \to \infty} g_{a,b,x}^n(y)=0\}, \quad r_{a,b}(x)=\inf \{y \in \mathbb{I}: \lim_{n \to \infty} g_{a,b,x}^n(y)=1\}.
\end{equation*}
Then $\ell_{a,b}(x)>0$ and $r_{a,b}(x)<1$ for $\nu_{\mathrm{ac}}$-almost all $x \in \mathbb{I}$.
\end{lemma}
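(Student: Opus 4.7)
The plan is to translate the pointwise statement into a measure-theoretic one via Fubini, and then upgrade "positive $\nu_{\mathrm{ac}}$-measure" to "full $\nu_{\mathrm{ac}}$-measure" using the monotonicity of the fiber maps together with $f$-ergodicity of $\nu_{\mathrm{ac}}$.

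First I would identify $\ell_{a,b}$ and $r_{a,b}$ with the fiber structure of the basins $\mathbb{B}_0$ and $\mathbb{B}_1$. Since each $g_{a,b,x}$ is a monotone increasing diffeomorphism of $\mathbb{I}$ fixing $0$ and $1$, whenever $0\leq y_1\leq y_2$ and $g_{a,b,x}^n(y_2)\to 0$ one has $0\leq g_{a,b,x}^n(y_1)\leq g_{a,b,x}^n(y_2)$ for all $n$ and hence $g_{a,b,x}^n(y_1)\to 0$. Thus the $x$-section $\mathbb{B}_0^x\coloneqq\{y\in\mathbb{I}:(x,y)\in\mathbb{B}_0\}$ is a down-set with supremum $\ell_{a,b}(x)$, and analogously $\mathbb{B}_1^x$ is an up-set with infimum $r_{a,b}(x)$. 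In particular, $m(\mathbb{B}_0^x)>0$ forces $\ell_{a,b}(x)>0$, and $m(\mathbb{B}_1^x)>0$ forces $r_{a,b}(x)<1$.

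Second, the corollary immediately preceding this lemma, an application of \cite[Proposition~2.2]{keller2017stability} to the two bounding $\nu_{\mathrm{ac}}$-a.e.\@ invariant graphs with $\lambda_{\nu_{\mathrm{ac}},a,b}(\phi^i)<0$ combined with the positive-exponent witnesses $\nu^i$ from Remark~\ref{pos}, provides $(\nu_{\mathrm{ac}}\times m)(\mathbb{B}_i)>0$ for $i=0,1$. By Fubini, the sets $E_i\coloneqq\{x\in\mathbb{I}:m(\mathbb{B}_i^x)>0\}$ both have strictly positive $\nu_{\mathrm{ac}}$-measure.

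Third, I would promote positive to full $\nu_{\mathrm{ac}}$-measure. The $F_{a,b}$-invariance of $\mathbb{B}_0$ (which is immediate from $g_{a,b,x}^n(y)\to 0\iff g_{a,b,f(x)}^{n-1}(g_{a,b,x}(y))\to 0$) yields the fibered identity $\mathbb{B}_0^x=g_{a,b,x}^{-1}(\mathbb{B}_0^{f(x)})$. Because $g_{a,b,x}$ is a $C^1$-diffeomorphism of $\mathbb{I}$ with strictly positive derivative, $m(\mathbb{B}_0^x)>0$ iff $m(\mathbb{B}_0^{f(x)})>0$, so $E_0=f^{-1}(E_0)$. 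Ergodicity of $\nu_{\mathrm{ac}}$ under $f$ then forces $\nu_{\mathrm{ac}}(E_0)=1$, and the symmetric argument gives $\nu_{\mathrm{ac}}(E_1)=1$. Combining with Step 1, $\ell_{a,b}(x)>0$ and $r_{a,b}(x)<1$ for $\nu_{\mathrm{ac}}$-a.e.\@ $x\in\mathbb{I}$, as required.

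The step I expect to be most delicate is the bookkeeping in Step 2: one must verify that the preceding corollary, read through the lens of \cite[Proposition~2.2]{keller2017stability}, really yields the quantitative statement $(\nu_{\mathrm{ac}}\times m)(\mathbb{B}_i)>0$ (and not merely a density or intermingling property), by matching Keller's hypotheses against the two ingredients established earlier in the proof of Theorem~\ref{T1}, namely the negative SRB-normal Lyapunov exponents on $\Phi_i$ and the transversally expanding periodic Dirac measures of Remark~\ref{pos}. Once this accounting is in place the monotonicity and ergodicity arguments are essentially formal.
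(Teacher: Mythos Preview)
Your argument is correct, but it follows a genuinely different route from the paper's own proof. The paper gives a direct, constructive argument in the style of Bonifant--Milnor: using continuity of $y\mapsto dg_{a,b,x}(y)$ near $y=0$, it applies Birkhoff's ergodic theorem to $x\mapsto\log(dg_{a,b,x}(0)+\epsilon)$, obtains a negative limit for $(a,b)\in\Gamma$ and $\epsilon$ small, and from this extracts for $\nu_{\mathrm{ac}}$-a.e.\ $x$ a finite quantity $M(x)=\max\{0,\max_{n\geq 1}\sum_{i=0}^{n-1}\log(dg_{a,b,f^i(x)}(0)+\epsilon)\}$; any $y_0<\delta e^{-M(x)}$ then satisfies $g_{a,b,x}^n(y_0)\to 0$, so $\ell_{a,b}(x)\geq\delta e^{-M(x)}>0$. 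Your proof instead leverages structure already in place: the Milnor attractor property of $A_i$ (via Proposition~\ref{thm11}) feeds into $(\nu_{\mathrm{ac}}\times m)(\mathbb{B}_i)>0$, and then monotonicity of the fibers, Fubini, the fibered identity $\mathbb{B}_0^x=g_{a,b,x}^{-1}(\mathbb{B}_0^{f(x)})$, and $f$-ergodicity of $\nu_{\mathrm{ac}}$ upgrade this to full measure. What the paper's argument buys is self-containment and an explicit pointwise lower bound for $\ell_{a,b}(x)$ in terms of the fluctuations of the derivative cocycle; what yours buys is brevity and a cleaner logical structure, at the cost of depending on the Ashwin--Buescu--Stewart input (Proposition~\ref{thm11}) and hence being less portable.

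One minor remark on your bookkeeping worry in Step~2: you do not actually need to unpack \cite[Proposition~2.2]{keller2017stability}. The positive-measure statement $(\nu_{\mathrm{ac}}\times m)(\mathbb{B}_i)>0$ already follows from the first part of the proof of Theorem~\ref{T1}, where $A_i$ is shown to be a Milnor essential attractor (so $m\times m(\mathcal{B}(A_i))>0$), together with the identification $\mathbb{B}_i=\mathcal{B}(A_i)$ (immediate here since $A_i=\Phi_i$) and the equivalence of $\nu_{\mathrm{ac}}$ with Lebesgue. Also, note that ``intermingled'' already entails positive product measure of both basins, so the dichotomy you set up between ``quantitative positive measure'' and ``merely intermingling'' is not a real one.
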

\begin{proof}
    We follow the argument used in \cite{bonifant2008schwarzian}.  For any $\epsilon >0$ and $y<\delta$ there exists $\delta >0$  so that 
\[
dg_{a,b,x}(y)\leq dg_{a,b,x}(0)+ \epsilon.
\]
We apply Birkhoff's ergodic theorem for the function $x \to \log (dg_{a,b,x}(0)+ \epsilon)$, for $\nu_{\mathrm{ac}}$-almost all $x$, we have
\begin{equation*}
    \lim_{n \to \infty} \frac{1}{n}\sum_{i=0}^{n-1} \log (dg_{a,b,f^i(x)}(0)+ \epsilon)=1/2 \log((1+a)+ \epsilon)+1/2 \log((1-b)+ \epsilon).
\end{equation*}
But this limit is negative, if $\epsilon$ is small enough, for $(a,b) \in \Gamma$.  So, for $\nu_{\mathrm{ac}}$-almost all $x \in \mathbb{I}$, 
\begin{equation*}
    \sum_{i=0}^{n-1} \log (dg_{a,b,f^i(x)}(0)+ \epsilon) \to -\infty, \quad  \text{as} \quad  n\to \infty,
\end{equation*}
 and 
\begin{equation}
    M(x)=\max \left\{0, \max_{n \geq 1}\sum_{i=0}^{n-1} \log (dg_{a,b,f^i(x)}(0)+ \epsilon) \right\} 
\end{equation}
exists.  Take $y_0 < \delta\e^{-M(x)} \leq  \delta$.  Then $y_n = g_{a,b,x}^n(y_0)$ satisfies 
$$y_n <\e^{\sum_{i=0}^{n-1} \log (dg_{a,b,f^i(x)}(0)+ \epsilon) }\e^{-M(x)}\delta  \leq  \delta ,$$
for all $n \geq 0$, so $\lim_{n \to \infty} y_n =0$.  Therefore, $\ell_{a,b}(x)>0$, for $\nu_{\mathrm{ac}}$-almost all $x \in \mathbb{I}$. A similar argument shows that $r_{a,b}(x)<1$ for $\nu_{\mathrm{ac}}$-almost all $x \in \mathbb{I}.$
This completes the proof of the lemma. 
\end{proof}
Clearly both sets $\ell_{a,b}$ and $r_{a,b}$ are invariant graphs.
By the above lemma,  since $\lambda_{a,b, \nu_{\mathrm{ac}}}(\phi^0)<0$ and $\lambda_{a,b, \nu_{\mathrm{ac}}}(\phi^1)<0$, and by \cite[Proposition~1.6]{keller2017stability},  we conclude that $\ell_{a,b}=r_{a,b}$ almost everywhere with respect to $\nu_{\mathrm{ac}}$. We denote this invariant graph by $\phi^*$. 
This proves the last statement of Theorem \ref{T1}.

\begin{figure}[!ht]
\begin{center}
\includegraphics[scale=0.7]{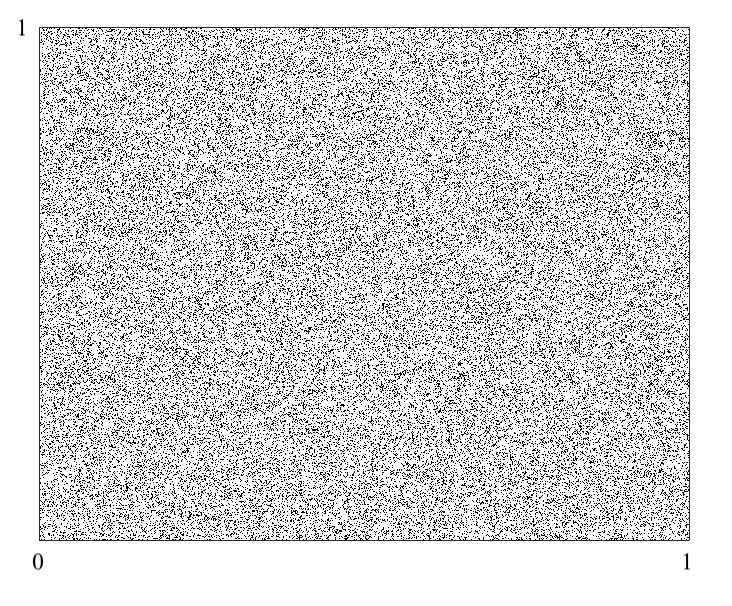}
\end{center}
  \caption{The intermingled basins for the chaotic attractors $A_0$ and $A_1$ are shown for the parameters $a=0.45$ and $b=0.33$, but in this case, we consider a Gibbs measure $\nu$ that assigns weights of 1/3 and 2/3 to the intervals  [0,1/2] and [1/2,1], respectively.  The black points corresponds to the basin of attraction $\mathcal{B}(A_0)$, while the white region corresponds to the basin of attraction $\mathcal{B}(A_1)$. }
 \label{fig:55}
   \end{figure}
\subsection{Proof of Theorem \ref{T2}}
Let $\mu_1$ be the invariant Dirac measure supported at the fixed point $(1,0)$. Using (\ref{Lambd}), $\Lambda_{\mu_1}(A_0)=\log(1-b)$ which is negative. By this fact, $\lambda_{\min}(A_0)\leq \Lambda_{\mu_1}(A_0) < 0$.
Also, it is easy to see that for $0<b<a/(1+a)$ , $\Lambda_{\mathrm{SRB}}(A_0)>0$.
By these facts, we get $\lambda_{\min}(A_0)<0<\Lambda_{\mathrm{SRB}}(A_0)$. Since $\mu_{\mathrm{SRB}}(A_0)$ is equivalent to the Lebesgue measure and whose support is $A_0$,
 $m(\bigcup_{\mu \neq \mu_{\mathrm{SRB}}} G_\mu)=0$, where $m$ denotes the Lebesgue measure on $\Phi_0$. Clearly, $\mu_{\mathrm{SRB}}(A_0)$-almost all Lyapunov exponents are non-zero. Using these facts and statement (2) of Proposition \ref{p1}, $A_0$ is shown to be a chaotic saddle, thus confirming statement (1) of Theorem \ref{T2}.

A similar argument shows that $A_1$ is also a chaotic saddle, if $a/(1-a)<b<1/2$ and $0<a<1/2$.
It suffices to take $\nu_1$ as the invariant Dirac measure supported at the fixed point  $(0,1)$. By (\ref{1Lambd}), $\Lambda_{\nu_1}(A_1)=\log(1-a)$ which is negative.
 By this fact, $\lambda_{\min}(A_1)\leq \Lambda_{\nu_2}(A_1) < 0$.
Additionally, it is evident that for $a/(1-a)<b<1/2$ and $0<a<1/2$,  $\Lambda_{\mathrm{SRB}}(A_1)>0$.
These observations imply that $\lambda_{\min}(A_1)<0<\Lambda_{\mathrm{SRB}}(A_1)$. Furthermore, since $\mu_{\mathrm{SRB}}(A_1)$ is equivalent to the Lebesgue measure and its support is $A_1$, we have
 $m(\bigcup_{\mu \neq \mu_{\mathrm{SRB}}(A_1)}G_\mu)=0$. Based on these facts  and statement (2) of Proposition \ref{p1}, for  $a/(1-a)<b<1/2$ and $0<a<1/2$, $A_1$ is a chaotic saddle,  verifying statement $(2)$ of Theorem \ref{T2}.

\begin{figure}[!ht]
\begin{center}
    $\begin{array}{cc}
  \includegraphics[scale=0.8]{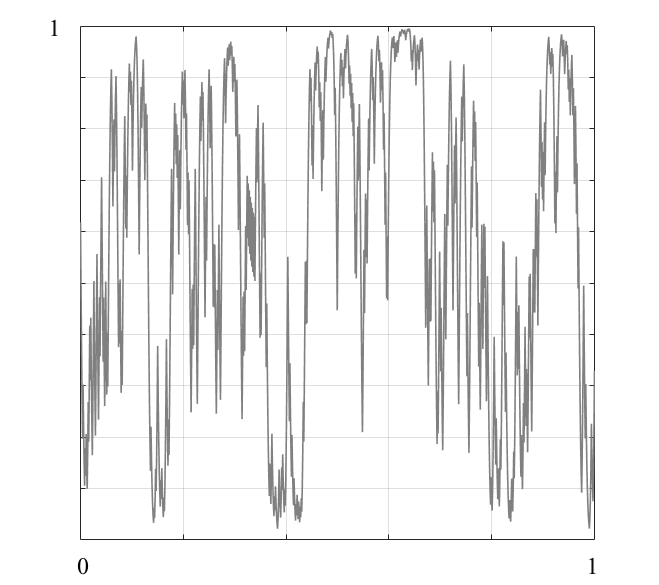}
  \end{array}$
\end{center}
  \caption{The separating graph $\phi^*$ for  the parameters $a=0.49$, $b=0.49$.}
 \label{fig:6}
   \end{figure}

\section{A thermodynamic Loynes exponent and stability index
}\label{section4}
In this section, we prove Theorem \ref{T3} and Theorem \ref{T4}.  
Before we present the necessity distortion  estimates, let us recall the following lemma from \cite{pesin1997multifractal}. 

\begin{lemma}\label{perl}
    Let $f : \mathbb{I} \to \mathbb{I}$ be a Hölder continuous piecewise expanding map, and let $\psi:\mathbb{I} \to \mathbb{R}$ be a Hölder continuous potential. Then there exists a unique equilibrium state $\nu_{\psi}$ with respect to the potential function $\psi$.  Furthermore, the following are true: 
    For Hölder continuous potentials $\psi$ and $\eta$, and $t \in \mathbb{R}$, the map $t \mapsto P(\psi + t \eta)$ is real analytic. Additionally, we have 
    \begin{equation*}
        \frac{\partial}{\partial t} P(\psi + t \eta)|_{t=0} =\int \eta \d\nu_{\psi}, \quad \text{and} \quad \frac{\partial^2}{\partial^2 t} P(\psi + t \eta)|_{t=0}\geq 0.
    \end{equation*}
Furthermore, the second derivative at t is zero if and only if $\psi + t \eta$ is cohomologous to a constant.
Moreover, for a given (piecewise) Hölder potential function $\psi$, each equilibrium state   with respect to $\psi$ is a Gibbs measure with respect to the same potential function $\psi$, and conversely.  
\end{lemma}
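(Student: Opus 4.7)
The plan is to invoke the classical Ruelle--Perron--Frobenius machinery for piecewise expanding Markov maps, since all assertions reduce to standard spectral properties of the transfer operator. First I would introduce, for a Hölder continuous potential $\varphi$, the Ruelle transfer operator
\begin{equation*}
\mathcal{L}_{\varphi} h (x) = \sum_{f(y)=x} \e^{\varphi(y)} h(y),
\end{equation*}
acting on the Banach space of functions that are Hölder continuous on each Markov cylinder (or, equivalently, on BV). Because $f$ is piecewise expanding and mixing, a Lasota--Yorke type inequality together with the Ionescu-Tulcea--Marinescu theorem yields quasi-compactness of $\mathcal{L}_\varphi$ and, by mixing, a spectral gap: a simple leading eigenvalue $\e^{P(\varphi)}$ with strictly positive eigenfunction $h_\varphi$, a unique eigenmeasure $m_\varphi$ of the dual $\mathcal{L}_\varphi^*$, and the rest of the spectrum contained in a disk of strictly smaller radius.

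Next I would define $\nu_\varphi \coloneqq h_\varphi \, m_\varphi$ (after normalisation) and verify the Gibbs property: for every cylinder $[a_0,\dots,a_{n-1}]$ containing $x$,
\begin{equation*}
K^{-1} \le \frac{\nu_\varphi\bigl([a_0,\dots,a_{n-1}]\bigr)}{\exp\bigl(S_n\varphi(x) - n P(\varphi)\bigr)} \le K
\end{equation*}
for a constant $K$ independent of $x$ and $n$; this is the standard consequence of the bounded distortion property, which holds because $\log df$ is $\alpha$-Hölder on each branch. The variational principle, combined with uniqueness of the invariant density $h_\varphi$, then identifies $\nu_\varphi$ as the unique equilibrium state and shows that the classes of equilibrium states and Gibbs states coincide for Hölder potentials.

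For the analyticity and derivative formulas I would apply Kato's analytic perturbation theory to the family $t \mapsto \mathcal{L}_{\psi + t\eta}$. Since the leading eigenvalue is simple and isolated, both it and the associated spectral projector depend real-analytically on $t$, so $t \mapsto P(\psi+t\eta) = \log(\text{leading eigenvalue})$ is real analytic. Differentiating the eigenvalue equation $\mathcal{L}_{\psi+t\eta} h_{t} = \e^{P(\psi+t\eta)} h_{t}$, pairing with $m_{\psi}$, and using $\int h_{\psi} \d m_{\psi}=1$ gives the first-derivative formula $\frac{\partial}{\partial t}P(\psi+t\eta)|_{t=0} = \int \eta \, \d\nu_\psi$. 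The second derivative is then identified with the asymptotic variance $\lim_n n^{-1} \int (S_n\eta - n \int\eta\,\d\nu_\psi)^2 \d\nu_\psi \ge 0$, which vanishes exactly when $\eta$ is cohomologous to a constant (Livsic-type rigidity).

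The main obstacle is the piecewise, discontinuous nature of $f$ at $x=1/2$: one cannot work naively in $C^\alpha(\mathbb{I})$, because $\mathcal{L}_\psi$ need not preserve it. This is handled by choosing the function space adapted to the Markov partition (Hölder on each branch, or BV), where the Lasota--Yorke estimate and hence quasi-compactness still hold; once this functional-analytic setup is in place the remaining steps go through verbatim as in the uniformly expanding, smooth setting, and the statement of Lemma~\ref{perl} follows.
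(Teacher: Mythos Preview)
The paper does not prove Lemma~\ref{perl} at all: it is explicitly introduced with ``let us recall the following lemma from \cite{pesin1997multifractal}'' and is treated as a black-box input from the literature. There is therefore no ``paper's own proof'' to compare against.

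Your sketch is the standard Ruelle--Perron--Frobenius argument underlying the cited reference (transfer operator on a Banach space adapted to the Markov partition, Lasota--Yorke/quasi-compactness, spectral gap, Kato perturbation for analyticity, identification of the second derivative with the asymptotic variance), and it is essentially correct as an outline. Two small remarks: first, your observation that one must work in a space of piecewise-H\"older (or BV) functions rather than $C^\alpha(\mathbb{I})$ is exactly the right point and is implicitly what the paper relies on throughout (cf.\ Remark~\ref{Holder}); second, the vanishing criterion you state --- that the second derivative at $t=0$ vanishes iff $\eta$ is cohomologous to a constant --- is the standard and correct formulation, whereas the lemma as written in the paper says ``$\psi+t\eta$ is cohomologous to a constant'', which is a slip in the statement rather than a flaw in your argument.
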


 To prove Theorem \ref{T3} and Theorem \ref{T4}, we closely follow the approaches outlined in  \cite{keller2014stability} and \cite{keller2017stability}. First, we present some preliminaries concepts and a few auxiliary lemmas to provide necessary distortion estimates.
 
Let the base map $f$  be a piecewise expanding and piecewise $C^{1+\alpha}$-Hölder mixing Markov map with two branches. 
Assume $\mathcal{V}_n(x)$ is the family of all interval neighborhoods $V$ of $x \in \mathbb{I}$ such that $f^n|_V : V \to f^n V$ is a diffeomorphism. 
So, there is a distortion constant $D \geq 1$ such that for all $n > 0$, all $x \in \mathbb{I}$, all $V \in \mathcal{V}_n(x)$ and all $\tilde{x} \in V$
\begin{equation}\label{7}
 \e^{-D} \leq \left| \frac{(df^n)(\tilde{x})}{(df^n)(x)} \right|, \;
\left| \frac{dg^n_{a,b,\tilde{x}}(\phi^i)}{dg_{a,b,x}^n(\phi^i)} \right| \leq\e^D,
  \end{equation}

where $i=0,1$. Moreover, the following holds: 
\begin{equation}\label{8}
\left| \log \left(  \frac{dg^n_{a,b,\tilde{x}}(\phi^i)}{dg_{a,b,x}^n(\phi^i)} \right) \right| \leq D \cdot \left( |f^n(\tilde{x}) - f^n(x)| + \sum_{i=0}^{n-1} |g_{a,b,\tilde{x}}^i(\tilde{y}) - g^i_{a,b,x}(y)|^\alpha \right),
\end{equation}
for all $y, \tilde{y} \in \mathbb{I}$ (see \cite[Lemma~2.6]{ott2012memory}  and \cite[Section~3]{keller2017stability}). 

The following lemma directly follows from equation \eqref{8}.

\begin{lemma}\label{L3.2}
If \( \lim_{n \to \infty} |g^n_{a,b,x}(\tilde{y})- g^n_{a,b,x}(y)| = 0 \), then, for each \( \delta > 0 \), there exists \( C = C(y, \tilde{y}, x, \delta) > 0 \) such that\
\[
|\tilde{y} - y| dg^n_{a,b,x}(y)\e^{-C - n\delta} 
\leq |g^n_{a,b,x}(\tilde{y}) - g^n_{a,b,x}(y)| 
\leq |\tilde{y} - y| dg^n_{a,b,x}(y)\e^{C + n\delta},
\]
for all \( n \in \mathbb{N} \).
\end{lemma}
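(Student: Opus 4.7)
The plan is to reduce the claim to a direct application of the distortion estimate \eqref{8} via the mean value theorem. First, I apply the mean value theorem to the $C^1$ function $y' \mapsto g^n_{a,b,x}(y')$ on the interval with endpoints $y$ and $\tilde y$, obtaining some $\xi$ strictly between them such that
\begin{equation*}
|g^n_{a,b,x}(\tilde y) - g^n_{a,b,x}(y)| = dg^n_{a,b,x}(\xi)\,|\tilde y - y|.
\end{equation*}
Hence the entire lemma reduces to establishing the two-sided distortion bound
$
e^{-C - n\delta} \leq dg^n_{a,b,x}(\xi)/dg^n_{a,b,x}(y) \leq e^{C + n\delta}.
$

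For this, I invoke \eqref{8} with $\tilde x = x$ (so that the $|f^n(\tilde x) - f^n(x)|$ term vanishes) and with $\xi$ in place of $\tilde y$, yielding
\begin{equation*}
\Bigl|\log \frac{dg^n_{a,b,x}(\xi)}{dg^n_{a,b,x}(y)}\Bigr| \leq D \sum_{i=0}^{n-1} |g^i_{a,b,x}(\xi) - g^i_{a,b,x}(y)|^\alpha.
\end{equation*}
Since $g_a$ and $g_b$ are both strictly increasing diffeomorphisms of $\mathbb{I}$, every iterate $g^i_{a,b,x}$ is monotone, so $g^i_{a,b,x}(\xi)$ lies between $g^i_{a,b,x}(y)$ and $g^i_{a,b,x}(\tilde y)$ for each $i$. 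Consequently the $i$-th summand is dominated by $|g^i_{a,b,x}(\tilde y) - g^i_{a,b,x}(y)|^\alpha$, which by hypothesis tends to $0$ as $i \to \infty$. I then split the sum at a threshold $N = N(y,\tilde y, x, \delta)$ large enough that $D\,|g^i_{a,b,x}(\tilde y) - g^i_{a,b,x}(y)|^\alpha \leq \delta$ for all $i \geq N$; such $N$ exists by the decay hypothesis. The first $N$ summands contribute at most a constant $C \coloneqq DN$ (each summand being bounded by $D$ since $\mathbb{I}$ has diameter $1$), while the remaining summands contribute at most $(n-N)\delta \leq n\delta$. Exponentiating and substituting into the mean value identity yields both inequalities of the lemma.

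The only point requiring care is ensuring that the constant $C$ is genuinely independent of $n$: the threshold $N$ must be chosen once and for all, using only the convergence of $|g^i_{a,b,x}(\tilde y) - g^i_{a,b,x}(y)|$ to $0$ along the fixed base orbit of $x$. Since this convergence is a hypothesis of the lemma and the Hölder exponent $\alpha$ and distortion constant $D$ are fixed by the system, no further obstacle arises; the "hard part" is simply this standard absorption of finitely many early terms into $C$. I therefore expect the proof to consist essentially of the computation sketched above, and indeed the paper itself remarks that the lemma "directly follows from equation \eqref{8}."
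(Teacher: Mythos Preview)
Your proposal is correct and matches the paper's own approach: the paper simply states that the lemma ``directly follows from equation~\eqref{8}'', and your mean-value-theorem reduction together with the monotonicity of the fiber maps and the splitting of the Hölder sum at a threshold $N$ is exactly the standard way to unpack that assertion. The only cosmetic point is that the intermediate value $\xi$ depends on $n$, but since your bound on the distortion sum is uniform over all points between $y$ and $\tilde y$ (via monotonicity), this causes no difficulty, as you yourself note.
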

Another consequence is the following lemma which is obtained as in \cite[Lemma~3.3]{keller2017stability}:

\begin{lemma}\label{L3.3}
 Let \( n > 0 \), \( x \in \mathbb{I} \), and \( V \in \mathcal{V}_n(x) \). Then
\begin{equation}\label{9}
\e^{-D \cdot |f^n(\tilde{x}) - f^n(x)|} 
\leq 
\frac{g^n_{a,b,\tilde{x}}(\tilde{y}) - g^n_{a,b,\tilde{x}}(y)}{g^n_{a,b,x}(\tilde{y}) - g^n_{a,b,x}(y)} 
\leq 
\e^{D \cdot |f^n(\tilde{x}) - f^n(x)|}
\end{equation}
for all \( y, \tilde{y} \in \mathbb{I}\) and \( \tilde{x} \in V \), and if \( u : f^n(V) \to \mathbb{I} \) is such that 
\[
\{ (z, u(z)) : z \in f^n(V) \} = F_{a,b}^n(V \times \{y\}),
\]
i.\,e., \( u(f^n(\tilde{x})) = g^n_{a,b,\tilde{x}}(y) \), then
\[
\left| \log \frac{u(z)-\phi^0}{u(f^n(x))-\phi^0} \right| \leq D \cdot |z - f^n(x)|
\]
for all \( z \in f^n(V) \) (and similarly for \( \phi^1 \)).
\end{lemma}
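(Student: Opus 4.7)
The plan is to deduce (\ref{9}) from the pointwise distortion bounds (\ref{7}) and (\ref{8}) and then to obtain the second inequality as a direct specialization of the first. By the fundamental theorem of calculus applied in the fiber variable (and using that each of $g_a, g_b$ is monotonically increasing so both integrands below are positive),
$$g^n_{a,b,\tilde x}(\tilde y)-g^n_{a,b,\tilde x}(y)=\int_y^{\tilde y}dg^n_{a,b,\tilde x}(\eta)\,d\eta,\qquad g^n_{a,b,x}(\tilde y)-g^n_{a,b,x}(y)=\int_y^{\tilde y}dg^n_{a,b,x}(\eta)\,d\eta,$$
so the desired two-sided bound on the ratio follows from a uniform (in $\eta$) multiplicative comparison of $dg^n_{a,b,\tilde x}(\eta)$ and $dg^n_{a,b,x}(\eta)$. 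Inequality (\ref{8}) provides such a comparison, modulo the Hölder-sum term $\sum_{i=0}^{n-1}|g^i_{a,b,\tilde x}(\eta)-g^i_{a,b,x}(\eta)|^{\alpha}$ appearing on its right-hand side.

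The crux of the argument is therefore to absorb this Hölder sum into a linear multiple of $|f^n(\tilde x)-f^n(x)|$, at the cost of enlarging the constant $D$. Because $V\in\mathcal{V}_n(x)$ is a cylinder on which $f^n|_V$ is a diffeomorphism, the expansion of $f$ yields the backward contraction $|f^i(\tilde x)-f^i(x)|\leq\lambda^{-(n-i)}|f^n(\tilde x)-f^n(x)|$ for some $\lambda>1$. Combined with Lemma \ref{L3.2} and the $\alpha$-Hölder dependence of $(x,y)\mapsto\log dg_{a,b,x}(y)$ on the base variable within each Markov interval (Remark \ref{Holder}), a straightforward induction on $i$ bounds $|g^i_{a,b,\tilde x}(\eta)-g^i_{a,b,x}(\eta)|$ by a uniform constant times $\sum_{j<i}|f^j(\tilde x)-f^j(x)|$. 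Raising to the power $\alpha$ and summing over $i$ turns the resulting double sum into a convergent geometric series in $|f^n(\tilde x)-f^n(x)|^{\alpha}$, and after absorbing constants once the advertised bound is obtained. This bootstrap is the main technical obstacle; it proceeds exactly as in the proof of \cite[Lemma~3.3]{keller2017stability}.

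For the second inequality we specialize the first. Fix $y\in\mathbb{I}$ and $z\in f^n(V)$, and set $\tilde x\coloneqq(f^n|_V)^{-1}(z)$, so that $u(z)=g^n_{a,b,\tilde x}(y)$ and $u(f^n(x))=g^n_{a,b,x}(y)$. Because $\phi^0\equiv 0$ is an invariant graph, $g^n_{a,b,\tilde x}(0)=0=g^n_{a,b,x}(0)$, whence
$$\frac{u(z)-\phi^0}{u(f^n(x))-\phi^0}=\frac{g^n_{a,b,\tilde x}(y)-g^n_{a,b,\tilde x}(0)}{g^n_{a,b,x}(y)-g^n_{a,b,x}(0)}.$$
Applying the first inequality with $\tilde y$ replaced by $y$ and $y$ replaced by $0$, together with the identity $|f^n(\tilde x)-f^n(x)|=|z-f^n(x)|$, produces the claimed Lipschitz-in-$\log$ bound. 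The parallel statement for $\phi^1\equiv 1$ follows by replacing $0$ with $1$ throughout the argument.
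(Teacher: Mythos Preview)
Your proposal is correct and mirrors the paper's treatment: the paper does not give an independent proof but simply states that the lemma ``is obtained as in \cite[Lemma~3.3]{keller2017stability},'' and your argument is precisely the integrate-the-derivative-ratio reduction followed by the distortion bound from (\ref{8}), together with the clean specialization $\tilde y\mapsto y$, $y\mapsto 0$ (resp.\ $1$) for the second assertion. Two minor remarks: the appeal to Lemma~\ref{L3.2} is unnecessary here (that lemma concerns forward fibre contraction, not the cylinder distortion you need), and in this paper's specific model the fibre map $g_{a,b,x}$ depends on $x$ only through the Markov interval, so for $\tilde x\in V\in\mathcal V_n(x)$ one has $g^n_{a,b,\tilde x}\equiv g^n_{a,b,x}$ and the ratio in (\ref{9}) is identically~$1$---the H\"older sum you labor to control actually vanishes.
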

The following lemma is a counterpart of \cite[Lemma~3.4]{keller2017stability} to our setting which uses the Hölder property as defined in Remark \ref{Holder}.
\begin{lemma}\label{L3.4}
Let \( \delta > 0 \). There exist \( n_0 \in \mathbb{N} \) and \( \delta_0 \in (0, \delta) \), which depend only on \( \delta \), with the following property: For all \( x \in \mathbb{I} \), \( y \in \mathbb{I} \), \( \vartheta \in (-\delta_0, \delta_0) \), and \( n > n_0 \) such that
\[
|\vartheta| \cdot dg^k_{a,b,x}(y) \leq\e^{-2k\delta} \quad \text{for } k = n_0, \dots, n,
\]
it holds that:
\begin{equation}\label{10}
    |\vartheta| \cdot dg^n_{a,b,x}(y) \cdot\e^{-3\delta n} 
\leq 
|g^n_{a,b,x}(y + \vartheta) - g^n_{a,b,x}(y)| 
\leq 
|\vartheta| \cdot dg^n_{a,b,x}(y) \cdot\e^{\delta n}.
\end{equation}

\end{lemma}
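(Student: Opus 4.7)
\textbf{Proof plan for Lemma \ref{L3.4}.} The strategy is a uniform distortion bootstrap, following the blueprint of \cite[Lemma~3.4]{keller2017stability} but using the Hölder continuity of $y \mapsto \log dg_{a,b,x}(y)$ available here (cf.~Remark \ref{Holder}). By the mean value theorem, pick $\xi \in [y, y+\vartheta]$ with $g^n_{a,b,x}(y+\vartheta) - g^n_{a,b,x}(y) = \vartheta \cdot dg^n_{a,b,x}(\xi)$, so the lemma reduces to controlling the ratio $dg^n_{a,b,x}(\xi)/dg^n_{a,b,x}(y)$ within factors of $e^{-3\delta n}$ and $e^{\delta n}$. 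Writing this ratio as a telescoping product along the orbit, its logarithm is bounded in absolute value by a constant times $\sum_{k=0}^{n-1} |g^k_{a,b,x}(\xi) - g^k_{a,b,x}(y)|^{\alpha}$, via the $\alpha$-Hölder continuity of $\log dg_{a,b,x}(\cdot)$ in the fiber variable (uniformly in $x$, since $g_{a,b,x}$ depends on $x$ only through the two branches).

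The heart of the proof is then an inductive bootstrap claim: for $k = n_0, \dots, n$,
\begin{equation*}
|g^k_{a,b,x}(\xi) - g^k_{a,b,x}(y)| \leq |\vartheta| \cdot dg^k_{a,b,x}(y) \cdot e^{\delta k}.
\end{equation*}
Combined with the hypothesis $|\vartheta|\cdot dg^k_{a,b,x}(y)\leq e^{-2k\delta}$, this forces $|g^k_{a,b,x}(\xi) - g^k_{a,b,x}(y)| \leq e^{-k\delta}$ throughout the range $k = n_0,\dots,n$, so the distortion sum $\sum_k e^{-k\delta\alpha}$ is geometric and bounded by a constant depending only on $\delta$. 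The inductive step uses the one-step MVT factorisation and substitutes the inductive bound into the Hölder estimate, showing that the accumulated error from steps $n_0,\dots,k-1$ amounts to a factor $\exp\bigl(C\sum_{j\geq n_0} e^{-j\delta\alpha}\bigr)$, which is $\leq e^{\delta\cdot k/(\text{something})}$ once $n_0$ is chosen large enough so that $C\sum_{j\geq n_0} e^{-j\delta\alpha} < \delta$. The reverse (lower) inequality follows by running the same estimate in the other direction, with the extra budget of $2\delta n$ absorbing the Hölder error on both sides and giving the asymmetric bound $e^{-3\delta n}$.

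To complete the argument, I would handle the \emph{transient} range $k < n_0$ separately by choosing $\delta_0$ small in terms of $n_0$. Since each fiber map $g_a, g_b$ is uniformly Lipschitz with constant $L = 1 + \max(a,b) < 3/2$, one has $|g^k_{a,b,x}(\xi) - g^k_{a,b,x}(y)| \leq L^{n_0}\delta_0$ for all $k \leq n_0$; taking $\delta_0$ so small that $C\cdot n_0 \cdot (L^{n_0}\delta_0)^\alpha < \delta$ absorbs the initial portion of the distortion sum into a further $e^\delta$ factor, and also provides the base case for the induction.

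The main obstacle is the coupled choice of $n_0$ and $\delta_0$: $n_0$ must be large enough that the tail of the Hölder-geometric sum starting from index $n_0$ is under $\delta$, while $\delta_0$ must be small enough (depending on the now-fixed $n_0$) that the pre-$n_0$ orbit segment never escapes the regime where the Hölder estimate is useful. The delicate point is that both constants must be extractable from $\delta$ alone, independently of $x, y, \vartheta, n$; this uniformity is what elevates Lemma \ref{L3.2} (whose constant depends on the data) to the quantitative Lemma \ref{L3.4} needed for the multifractal analysis in Sections \ref{section4}--\ref{section5}.
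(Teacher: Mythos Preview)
Your proposal is correct and follows precisely the approach the paper intends: the paper does not supply its own proof of Lemma~\ref{L3.4} but simply declares it ``a counterpart of \cite[Lemma~3.4]{keller2017stability} to our setting which uses the H\"older property as defined in Remark~\ref{Holder},'' and your bootstrap---MVT reduction to a distortion ratio, telescoping via the fibre H\"older estimate \eqref{8}, inductive control $|g^k_{a,b,x}(y+\vartheta)-g^k_{a,b,x}(y)|\le e^{-k\delta}$ from the hypothesis $|\vartheta|\,dg^k_{a,b,x}(y)\le e^{-2k\delta}$, and the transient $k<n_0$ handled by choosing $\delta_0$ small after $n_0$---is exactly Keller's argument adapted to this family. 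The one cosmetic point is that the induction is most cleanly run on $|g^k_{a,b,x}(y+\vartheta)-g^k_{a,b,x}(y)|$ itself (using monotonicity of the fibre maps to pass from the intermediate point to $y+\vartheta$) rather than on $|g^k_{a,b,x}(\xi)-g^k_{a,b,x}(y)|$ with a single fixed $\xi$, since the MVT point varies with $k$.
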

\subsection{Proof of Theorem \ref{T3}}

We closely follow the arguments presented in \cite[Sections 4.1 and 7]{keller2014stability} and \cite[Section~3.2]{keller2017stability}, which provide foundational insights into the framework.
Consider the Hölder continuous potential $\psi$ with $P(\psi)=0$ and corresponding Gibbs measure $\nu=\nu_\psi$ as in condition (H1). Using \cite{ruelle2004thermodynamic, viana2016foundations}, we get the following result.
\begin{lemma}\label{L00}
    The measure $\nu$ is positive on any open set and has full support.
\end{lemma}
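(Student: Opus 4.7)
The plan is to deduce both claims from the Gibbs property of $\nu_\psi$ and the fact that cylinders of the Markov partition form a topological basis. Since the statement is really an immediate consequence of standard facts in the thermodynamic formalism of piecewise expanding Markov maps, I expect no serious obstacle; the task is to assemble the pieces in the right order.

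First I would recall that, by Lemma~\ref{perl}, $\nu_\psi$ is the unique equilibrium state and simultaneously a Gibbs measure with respect to the Hölder continuous potential $\psi$. Writing $I_0, I_1$ for the two Markov intervals of $f$ and $[\alpha_0, \ldots, \alpha_{n-1}]$ for the $n$-cylinder $\bigcap_{k=0}^{n-1} f^{-k}(I_{\alpha_k})$, the Gibbs property together with $P(\psi) = 0$ yields a constant $C \geq 1$ such that
\begin{equation*}
C^{-1} \leq \frac{\nu_\psi\bigl([\alpha_0, \ldots, \alpha_{n-1}]\bigr)}{\exp\bigl(S_n \psi(x)\bigr)} \leq C
\end{equation*}
for every $x$ in the cylinder and every admissible word $(\alpha_0, \ldots, \alpha_{n-1})$. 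Since $\psi$ is continuous on the compact interval $\mathbb{I}$, it is bounded, say $\|\psi\|_\infty < \infty$, and therefore $S_n \psi(x) \geq -n \|\psi\|_\infty$. Consequently $\nu_\psi\bigl([\alpha_0, \ldots, \alpha_{n-1}]\bigr) \geq C^{-1}\e^{-n\|\psi\|_\infty} > 0$ for every admissible $n$-cylinder.

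Next I would argue that every nonempty open set $U \subset \mathbb{I}$ contains some admissible cylinder. Because $f$ is piecewise expanding with uniform expansion rate, the diameters of the $n$-cylinders shrink uniformly to zero as $n \to \infty$. Given $x \in U$, choose $n$ large enough that the $n$-cylinder containing $x$ (avoiding the negligible set of preimages of the discontinuity) is contained in $U$; the mixing property of $f$ guarantees that this cylinder is admissible and nonempty. The previous paragraph then gives $\nu_\psi(U) > 0$.

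Finally, the support $\mathrm{supp}(\nu_\psi)$ is by definition the complement of the union of all open sets of $\nu_\psi$-measure zero; since no such nonempty open set exists, $\mathrm{supp}(\nu_\psi) = \mathbb{I}$, i.e., $\nu_\psi$ has full support. The only subtle point worth noting explicitly is that the cylinders are defined with respect to the Markov partition of the base map $f$, so one must take cylinders that avoid the single point of discontinuity $\{1/2\}$; this is harmless since that point is a $\nu_\psi$-null set and can always be excluded when choosing a cylinder inside a given open set.
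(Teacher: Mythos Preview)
Your proof is correct, but it takes a genuinely different route from the paper's. The paper argues via the Ruelle--Perron--Frobenius machinery: it introduces the reference (conformal) measure $\mu$ satisfying $\mathcal{L}^*_\psi \mu = \lambda \mu$, shows that $\mu$ has full support by a contradiction argument based on the Jacobian $J_\mu f = \lambda\e^{-\psi}$ and the topological exactness of $f$ (if $\mu(U)=0$ for some open $U$, then $\mu(f^n(U))=0$ for all $n$, yet $f^n(U)=\mathbb{I}$ eventually), and finally transfers the conclusion to $\nu_\psi = h\mu$ via the strictly positive eigenfunction $h$. Your argument bypasses all of this by invoking the Gibbs inequality directly to get a uniform positive lower bound on cylinder measures, together with the fact that cylinders form a basis. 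Your approach is shorter and more elementary for the lemma at hand; the paper's approach, while heavier, yields as byproducts the equivalence $\nu_\psi \sim \mu$ and the explicit Jacobian formula, both of which are used immediately afterward in the proof of Corollary~\ref{C00}.
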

\begin{proof}
    For the Hölder continuous potential \(\psi\), 
the Perron-Frobenius (transfer) operator \(\mathcal{L}_\psi\) is defined 
for functions \(k: [0,1] \to \mathbb{R}\) by:
\[
\mathcal{L}_\psi k (x) = \sum_{y \in f^{-1}(x)}\e^{\psi(y)} k(y).
\]
It is clear from the definition that \(\mathcal{L}_\psi\) is a positive operator:  
if \( k(x) \geq 0 \) for every \( x \in \mathbb{I}\), then \( \mathcal{L}_\psi k(x) \geq 0 \) for every \( x \in \mathbb{I} \).  
It is also easy to check that \(\mathcal{L}_\psi\) is a continuous operator. 

Then, the dual of the transfer operator \(\mathcal{L}_\psi\) is the linear operator 
\(\mathcal{L}^*_\psi : \mathcal{M}(\mathbb{I}) \to \mathcal{M}(\mathbb{I})\) defined by
\[
\int k \d(\mathcal{L}^*_\psi \mu) = \int \mathcal{L}_\psi h \d\mu
\]
for every continuous function \( k \) and \( \mu \in \mathcal{M}(\mathbb{I}) \). 
Consider the spectral radius \( \lambda = \varrho(\mathcal{L}^*_\psi) = \varrho(\mathcal{L}_\psi) \).  
Then there exists some probability measure \( \mu \) on \( \mathbb{I}\) such that  
\begin{equation}\label{P1}
    \mathcal{L}^*_\psi \mu = \lambda \mu.
\end{equation}
The measure $\mu$ is called a reference measure \cite[Section~12]{viana2016foundations}. The doubling map \( f \) admits a Jacobian with respect to \( \mu \), given by  
\[
J_{\mu} f = \lambda\e^{-\psi},
\]
(see \cite[Lemma~12.1.3]{viana2016foundations}. We show that $\mu$ is supported on the whole of $\mathbb{I}$. Indeed, suppose, by contradiction, that there exists some open set $U$ such that $\mu(U)=0$, and we may assume that $U$ is contained in one of  Markov interval $I_i$, $i=0,1$. Note that $f$ is an open map, since it is a local diffeomorphism. Thus, the image $f(U)$ is also an open set. Moreover, we may write $U$ as a finite disjoint union of domains of invertibility $A$. For each one of them, 
\[
\mu(f(A)) = \int_A J_{\mu} f \d\nu = 0.
\]
Therefore,  
\[
\mu(f(U)) = 0.
\]
By induction, it follows that  
\[
\mu(f^n(U)) = 0 \quad \text{for every } n \geq 0.
\]
Since  \( f \) is topologically exact, there exists \( n \geq 1 \) such that  
\[
f^n(U) = \mathbb{I}.
\]
This contradicts the fact that \( \mu(\mathbb{I}) = 1 \).
Hence, $\mu $ is supported on the whole interval $\mathbb{I}$. By the result \cite[Lemma~12.1.11]{viana2016foundations}, the transfer operator $\mathcal{L}_\psi$ admits some positive eigenfunction $h$ associated with the eigenvalue $\lambda$ satisfying \( \mathcal{L}_\psi h = \lambda h \). Moreover,  
\[
\int h\d\mu = 1.
\]
Then, the equilibrium Gibbs measure $\nu=\nu_\psi$ satisfies $\nu=h \mu$.
Moreover, by  \cite[Lemma~12.1.12]{viana2016foundations},  \( \nu \) is invariant under \( f \), it admits a Jacobian with respect to \( \nu \), given by  
\[
J_{\nu} f = \lambda\e^{-\psi}{\frac{(h \circ f)}{h}}.
\]
In particular, $\nu$ is equivalent to the reference measure $\mu$. This fact, together with \cite[Lemm~12.1.4]{viana2016foundations} Lemma 12.1.4, gives that $\text{supp}~\nu = \mathbb{I}$ and it is positive on open sets.

\end{proof}

\begin{corollary}\label{C00}
   Let $R$ be a Markov interval of $f$ and $R_0 \subset R$ with $\nu_\psi(R \setminus R_0)=0$. Then there is $k \in \mathbb{N}$ such that $\nu_\psi(\mathbb{I} \setminus f^k(R_0)) = 0$. 
\end{corollary}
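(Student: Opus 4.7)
The plan is to propagate the null set $R\setminus R_0$ forward under $f$ and to combine this with the Markov mixing of $f$ to cover $\mathbb{I}$ modulo a $\nu_\psi$-null set. The two ingredients I will use are, on the one hand, the Jacobian formula for $\nu_\psi$ established in the proof of Lemma \ref{L00}, namely
\[
J_{\nu_\psi}f = \lambda\,\e^{-\psi}\,\frac{h\circ f}{h},
\]
and, on the other hand, the fact that $f$ is a piecewise expanding mixing Markov map, so there exists $k\in\mathbb{N}$ with $f^k(R)=\mathbb{I}$ (for the doubling map one can even take $k=1$, since $f([0,1/2])=f([1/2,1])=\mathbb{I}$).

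First, I would observe that $J_{\nu_\psi}f$ is uniformly bounded above on $\mathbb{I}$, because $\psi$ is (piecewise) Hölder continuous hence bounded, and the eigenfunction $h$ is bounded above and uniformly bounded away from $0$ (a standard consequence of the Ruelle--Perron--Frobenius theory, cf.\ \cite[Chapter 12]{viana2016foundations}). Consequently, on every Markov interval $I$ the map $f|_I$ is a diffeomorphism onto its image and the change-of-variables identity $\nu_\psi(f(B))=\int_B J_{\nu_\psi}f\,\d\nu_\psi$ for $B\subset I$ implies that $f$ maps $\nu_\psi$-null subsets of $I$ to $\nu_\psi$-null sets.

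Second, I would argue by induction on the number of iterates. Set $N_0\coloneqq R\setminus R_0$, so $\nu_\psi(N_0)=0$. Since $f|_R$ is injective, $f(R)\setminus f(R_0)\subseteq f(N_0)$, and the preceding paragraph gives $\nu_\psi(f(R)\setminus f(R_0))=0$. For the inductive step, decompose $f^{j}(R)$ as a finite union of Markov intervals $I_1,\dots,I_r$ (this is exactly where the Markov property enters); applying the Jacobian bound on each $I_\ell$ to the restriction $f|_{I_\ell}$ gives
\[
\nu_\psi\bigl(f^{j+1}(R)\setminus f^{j+1}(R_0)\bigr)\;\le\;\sum_{\ell=1}^{r}\nu_\psi\bigl(f\bigl((f^{j}(R)\setminus f^{j}(R_0))\cap I_\ell\bigr)\bigr)\;=\;0.
\]
Hence $\nu_\psi(f^{j}(R)\setminus f^{j}(R_0))=0$ for every $j\ge 0$. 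Choosing $j=k$ with $f^k(R)=\mathbb{I}$ yields the claim.

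The only genuine subtlety, and the step I would spend most care on, is making the induction clean: one has to keep track of the fact that $f^{j}$ is in general only piecewise a diffeomorphism, so the natural object to iterate is not $f^{j}$ globally but rather $f$ applied repeatedly to pieces lying in single Markov intervals. Once this bookkeeping is done via the Markov partition, the Jacobian bound and topological exactness combine immediately to give the result.
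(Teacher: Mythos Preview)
Your proof is correct and follows essentially the same strategy as the paper: show that $f$ maps $\nu_\psi$-null subsets of a Markov interval to $\nu_\psi$-null sets, iterate, and then use mixing to obtain $f^k(R)=\mathbb{I}$. The only minor difference is that the paper routes step one through the conformal reference measure $\mu$ (the eigenmeasure of $\mathcal{L}_\psi^*$) and cites \cite[Lemma~1]{keller1998equilibrium} to get $\mu(f(B))=0$, then transfers back to $\nu_\psi$ by equivalence, whereas you invoke the Jacobian $J_{\nu_\psi}f=\lambda\e^{-\psi}(h\circ f)/h$ directly; both arguments unwind to the same change-of-variables identity.
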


\begin{proof}
From equation (\ref{P1}) and the assumption that $P(psi) = 0$, we conclude for the eigenmeasure $\mu$ of the dual transfer operator with respect to the leading eigenvalue $\lambda=\e^{P(\psi)}=1$,  
\begin{equation}\label{P2}
 \mathcal{L}_{\psi}^* \mu =  \mu.   
\end{equation}
Take a measurable set $B \subset R \subset I_i$, $i=0,1,$ with $\nu_\psi(B)=0$. 
Since $\nu_\psi$ is equivalent to $\mu$ we have $\mu(B)=0$. By virtue of \cite[Lemma~1]{keller1998equilibrium} and (\ref{P2}), we have $\mu(f(B))=0$ and by the equivalence of $\nu_\psi$ and $\mu$, $\nu_\psi(f(B))=0$ also holds.
As $f$ is a mixing Markov map,  there is $k \in \mathbb{N}$ such that $f^k(R)=\mathbb{I}$. Taking $B=R_0^c=R \setminus R_0$, we find inductively $\nu_\psi (f^k(R_0^c)) = 0$, and hence, $\nu_\psi(\mathbb{I} \setminus f^k(R_0)) = 0$.  
\end{proof}

For \( t \in \mathbb{R} \), denote by \( \mathcal{L}_t^i \), $i=0,1,$ the transfer operators
\begin{equation}\label{12}
\mathcal{L}_t^i: L^1_{\nu_\psi}(\mathbb{I}) \to L^1_{\nu_\psi}(\mathbb{I}), \quad \mathcal{L}_t^i h(x) = \sum_{\tilde{x} \in f^{-1}(x)} 
h(\tilde{x})\e^{\psi(\tilde{x})}\e^{t \log dg_{a,b,\tilde{x}}(\phi^i)},
\end{equation}
and let \( \varrho(\mathcal{L}_t^i) \) be its spectral radius. Then \( p_{i,\psi}(t) = \log \varrho(\mathcal{L}_t^i) \), and this is a strictly convex differentiable function of \( t \), see e.\,g.\@ \cite{parry1990zeta}.
This is a consequence of the Gibbs property of $\nu_\psi$, expansiveness of $f$ and since the pressure function $p_{i,\psi}$ is is strictly convex and differentiable, allowing fine control over the spectral radius \( \varrho(\mathcal{L}_t^i) \) . 
As $\psi$ is normalized, we have $P(\psi) = 0$. After possibly adding a coboundary to $\psi$, we can assume that 
$\mathcal{L}^i_t 1 = 1$, where $1$ denotes the constant function. Moreover, we must have $\psi(x) < 0$ for all $x \in \mathbb{I}$. 
It is well known that the Banach space $L^1_{\nu_\psi}(\mathbb{I})$ contains the constants, such that 
$\mathcal{L}^i_t$ has $\exp{{p_{i,\psi}(t)}}$ as a simple maximal eigenvalue and with the remainder of the spectrum contained within a disc of radius $\gamma_t < \exp{p_{i,\psi}(t)}$.  
In particular, we can write 
\[
(\mathcal{L}^{i}_t)^n = \exp({np_{i,\psi}(t)}) \pi_t + O(\gamma_t^n)
\]
where $\pi_t$ is a projection operator corresponding to the maximal eigenvalue, and $O(\gamma_t^n)$ represents the decay due to the rest of the spectrum. 

We will prove only the identity where  the limit is given by \( t_0^* \);  the other identity can be established using a similar argument. 
We will present the following large deviations theorem, first due to Plachky and Steinebach \cite{plachky1975theorem}, adapted to our situation.  Note that, as before  the derivative of any function $h$, is denoted by $d h$.
\begin{proposition}\label{P4.7}
 Let $(t_-, t_+)$ be an open interval containing $t_0^*$ and suppose that, for $t \in (t_-, t_+)$, $p_{0,\psi}(t)$ is a differentiable 
function with $d p_{0,\psi}(t)$ strictly monotone. Suppose that:
\begin{itemize}
    \item[(i)] $\int\e^{t \log dg^n_{a,b,x}(\phi^0)} \d\nu < 1$ for all $t \in [0, t_-)$,
    \item[(ii)] for all $t \in (t_-, t_+)$, we have
    \[
    \lim_{n \to \infty} \frac{1}{n} \int \e^{t d g_{a,b,x}^n(\phi^0)} \d\nu = p_{0,\psi}(t).
    \]    
\end{itemize}
Then
\[
\lim_{n \to \infty} \frac{1}{n} \log \nu\left(\left\{x \in \mathbb{I} \mid \log  dg_{a,b,x}^n(\phi^0) > n d p_{0,\psi}(t_0^*)\right\}\right) 
= p_{0,\psi}(t_0^*) - t_0^* d p_{0,\psi}(t_0^*).
\]

\end{proposition}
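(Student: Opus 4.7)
The plan is to adapt the classical G\"artner--Ellis / Plachky--Steinebach strategy to our thermodynamic setting. Abbreviate $S_n(x) \coloneqq \log dg^n_{a,b,x}(\phi^0)$ and $Z_n(t) \coloneqq \int e^{tS_n}\d\nu$, so hypothesis (ii) reads $\frac{1}{n}\log Z_n(t) \to p_{0,\psi}(t)$ for $t\in (t_-,t_+)$. The target is to establish matching upper and lower bounds on $\frac{1}{n}\log \nu\{S_n > n\, dp_{0,\psi}(t_0^*)\}$ whose common value is $p_{0,\psi}(t_0^*) - t_0^*\, dp_{0,\psi}(t_0^*)$.

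For the upper bound I would apply the exponential Chebyshev inequality: for every admissible $t\geq 0$,
\[
\nu\{S_n > n\, dp_{0,\psi}(t_0^*)\} \leq e^{-n t\, dp_{0,\psi}(t_0^*)}\, Z_n(t).
\]
Taking logarithms, dividing by $n$, and invoking (ii) gives
\[
\limsup_{n\to\infty}\tfrac{1}{n}\log\nu\{S_n > n\, dp_{0,\psi}(t_0^*)\} \leq p_{0,\psi}(t) - t\, dp_{0,\psi}(t_0^*).
\]
Strict monotonicity of $dp_{0,\psi}$ together with differentiability pins the minimizer over $t \in (t_-,t_+)$ at $t = t_0^*$, producing the desired upper rate $p_{0,\psi}(t_0^*) - t_0^* \,dp_{0,\psi}(t_0^*)$.

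For the matching lower bound I would tilt the measure: define $\tilde\nu_n$ by $d\tilde\nu_n/d\nu = e^{t_0^* S_n}/Z_n(t_0^*)$. For arbitrary $\delta>0$ the elementary inequality
\[
\nu\{S_n > n(dp_{0,\psi}(t_0^*) - \delta)\}
\geq Z_n(t_0^*)\,e^{-n t_0^* dp_{0,\psi}(t_0^*) - n|t_0^*|\delta}\;\tilde\nu_n\bigl\{|S_n - n\,dp_{0,\psi}(t_0^*)| < n\delta\bigr\}
\]
reduces matters to showing that the tilted probability on the right does not decay exponentially. For this I would apply Chebyshev under $\tilde\nu_n$ at auxiliary tilt parameters $t_0^* \pm s$ with small $s>0$: a short computation transforms the tilted deviation probability into a ratio of moment generating functions of $\nu$, whose logarithmic asymptotics by (ii) are
\[
p_{0,\psi}(t_0^* \pm s) - p_{0,\psi}(t_0^*) \mp s\bigl(dp_{0,\psi}(t_0^*)\pm \delta\bigr).
\]
Expanding $p_{0,\psi}(t_0^*\pm s) = p_{0,\psi}(t_0^*) \pm s\,dp_{0,\psi}(t_0^*) + o(s)$ by differentiability, the coefficient of $n$ becomes strictly negative in $s\delta$ for $s$ small enough, so both tails under $\tilde\nu_n$ decay exponentially and the central event has $\tilde\nu_n$-mass tending to $1$. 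Combining this with $\frac{1}{n}\log Z_n(t_0^*)\to p_{0,\psi}(t_0^*)$ and letting $\delta\downarrow 0$ yields the lower bound.

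The main obstacle is the concentration step for $\tilde\nu_n$: I must leverage the fact that $\tilde\nu_n$ is not an invariant measure for $f$ but arises as a Gibbs-type perturbation of $\nu$ via the multiplicative cocycle $dg^n_{a,b,x}(\phi^0)$. Hypothesis (i) is what enables this control, guaranteeing uniform boundedness of the relevant normalizing integrals and thereby controlling the ratio $Z_n(t_0^* \pm s)/Z_n(t_0^*)$ through the differentiable pressure $p_{0,\psi}$. The strict monotonicity of $dp_{0,\psi}$ (equivalently, strict convexity of $p_{0,\psi}$, which follows from Lemma \ref{perl} applied to $\psi + t\log dg_{a,b}(\phi^0)$ unless this potential is cohomologous to a constant --- a degeneracy ruled out in our model) is essential: without it the optimizing $t$ in the Chebyshev bound would not be unique and the exponential suppression in the tilted tails would be lost.
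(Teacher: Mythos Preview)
The paper does not actually prove this proposition: it attributes the result to Plachky and Steinebach \cite{plachky1975theorem} and then only verifies that hypotheses (i) and (ii) hold in the present setting. Your proposal is a faithful reconstruction of the standard Plachky--Steinebach (equivalently, G\"artner--Ellis) change-of-measure argument, which is presumably what the cited reference contains. So at the level of comparison there is nothing to contrast: you have supplied the proof that the paper outsources.

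One technical point in your lower bound deserves attention. Tilting at $t_0^*$ concentrates $\tilde\nu_n$ around $n\,dp_{0,\psi}(t_0^*)$, so the event $\{|S_n - n\,dp_{0,\psi}(t_0^*)| < n\delta\}$ only guarantees $S_n > n\bigl(dp_{0,\psi}(t_0^*) - \delta\bigr)$, not $S_n > n\,dp_{0,\psi}(t_0^*)$. Letting $\delta\downarrow 0$ \emph{after} $n\to\infty$ therefore yields a lower bound for the shifted event, not the one in the statement. The standard fix is to tilt instead at $t_0^* + \varepsilon$ for small $\varepsilon>0$: by strict monotonicity of $dp_{0,\psi}$ the tilted mean $dp_{0,\psi}(t_0^*+\varepsilon)$ lies strictly above $dp_{0,\psi}(t_0^*)$, so the concentration event sits inside $\{S_n > n\,dp_{0,\psi}(t_0^*)\}$, and one sends $\varepsilon\downarrow 0$ at the end using continuity of $p_{0,\psi}$ and $dp_{0,\psi}$.

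A smaller remark: your account of the role of hypothesis (i) is off. The ratios $Z_n(t_0^*\pm s)/Z_n(t_0^*)$ are governed entirely by hypothesis (ii), since $t_0^*\pm s \in (t_-,t_+)$ for small $s$. Hypothesis (i), as the paper itself checks (``$dg_{a,b}$ is piecewise continuous and therefore bounded''), is a boundedness/finiteness condition on $[0,t_-)$; in the classical formulation it serves the upper bound when $t_->0$, ensuring the Chebyshev infimum over $[0,t_-)$ cannot undercut the value at $t_0^*$.
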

Now, we show that the assumptions $(i)$ and $(ii)$ of the previous proposition are satisfied for our setting.

First, it is well known that $p_{0,\psi}$ is a convex analytic function.

\begin{lemma}
There exists a unique $t_0^* > 0$ such that $p_{0,\psi}(t_0^*) = 0$. Moreover, $dp_{0,\psi}(t_0^*) > 0$ and $dp_{0,\psi}(t)$ is strictly increasing on an open interval $(t_-, t_+)$ that contains $t_0^*$.
 
\end{lemma}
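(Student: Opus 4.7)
The plan is to exploit the convex-analytic structure of the pressure function $t\mapsto p_{0,\psi}(t)=P(\psi+t\log dg_{a,b}(\phi^0))$ combined with the sign information supplied by condition (H1) and Remark \ref{pos}. First I would observe that since $\psi$ is normalized so that $P(\psi)=0$, one immediately has $p_{0,\psi}(0)=0$. By Lemma \ref{perl}, $p_{0,\psi}$ is real analytic and convex, with
\[
dp_{0,\psi}(0)=\int \log dg_{a,b}(\phi^0)\d\nu_\psi=\lambda_{\nu_\psi,a,b}(\phi^0)<0,
\]
the strict negativity being exactly part of condition (H1); in particular $p_{0,\psi}(t)<0$ for all sufficiently small $t>0$.

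Next, I would invoke Remark \ref{pos}, which supplies an $f$-invariant ergodic measure $\nu^0$ with $\lambda_{\nu^0,a,b}(\phi^0)>0$. The variational principle applied to the potential $\psi+t\log dg_{a,b}(\phi^0)$ and this particular measure yields
\[
p_{0,\psi}(t)\geq h_{\nu^0}(f)+\int\psi\d\nu^0+t\,\lambda_{\nu^0,a,b}(\phi^0),
\]
so $p_{0,\psi}(t)\to+\infty$ as $t\to+\infty$. The opposite signs of $\int\log dg_{a,b}(\phi^0)\d\nu_\psi<0<\int\log dg_{a,b}(\phi^0)\d\nu^0$ also preclude $\log dg_{a,b}(\phi^0)$ from being cohomologous to a constant, since any such coboundary plus constant would average to that same constant against every invariant measure. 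Lemma \ref{perl} then upgrades convexity of $p_{0,\psi}$ to strict convexity. Combining $p_{0,\psi}(0)=0$, $dp_{0,\psi}(0)<0$, strict convexity, and the growth to $+\infty$, the function $p_{0,\psi}$ must cross zero at exactly one point $t_0^*>0$.

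Finally, strict convexity implies that $dp_{0,\psi}$ is strictly increasing on all of $\mathbb{R}$, so the monotonicity claim holds on any open interval $(t_-,t_+)\ni t_0^*$. Moreover, since $p_{0,\psi}$ vanishes at both $0$ and $t_0^*$, Rolle's theorem yields some $\tau\in(0,t_0^*)$ with $dp_{0,\psi}(\tau)=0$, and the strict monotonicity of $dp_{0,\psi}$ then forces $dp_{0,\psi}(t_0^*)>dp_{0,\psi}(\tau)=0$.

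I expect the main subtlety to reside in verifying that $\log dg_{a,b}(\phi^0)$ is not cohomologous to a constant, so that strict (rather than mere) convexity may be invoked. Fortunately, Remark \ref{pos} was stated precisely to supply an invariant measure with a positive fiber Lyapunov exponent, which, contrasted with the negative value from (H1), delivers the required non-trivial dependence of $\mu\mapsto\int\log dg_{a,b}(\phi^0)\d\mu$ on the invariant measure $\mu$.
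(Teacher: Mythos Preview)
Your proposal is correct and follows essentially the same route as the paper: both arguments use $p_{0,\psi}(0)=0$ and $dp_{0,\psi}(0)<0$ from (H1), invoke the measure $\nu^0$ from Remark \ref{pos} and the variational principle to obtain $p_{0,\psi}(t)\to+\infty$, and use the sign discrepancy of $\int\log dg_{a,b}(\phi^0)$ under $\nu_\psi$ versus $\nu^0$ to rule out a coboundary and obtain strict convexity. Your use of Rolle's theorem for $dp_{0,\psi}(t_0^*)>0$ and your direct appeal to strict convexity for the strict monotonicity of $dp_{0,\psi}$ are slightly more streamlined than the paper's analytic-continuation argument, but the underlying ideas are identical.
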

\begin{proof}
  Recall from Lemma \ref{perl} that if $\psi$ is Hölder
 continuous, $P(\psi) = 0$, has an equilibrium state $\nu=\nu_\psi$, and $\eta$ is Hölder continuous, then
\[
\left.\frac{\partial P(\psi + t\eta)}{\partial t}\right|_{t=0} = \int \eta \d\nu.
\]
Moreover,
\[
\left.\frac{\partial^2 P(\psi + t \eta)}{\partial t^2}\right|_{t=0} \geq 0,
\]
with equality if and only if $\eta$ is cohomologous to a constant.
Assume condition (H1)  holds. Take the potential $\psi$ with equilibrium $\nu=\nu_\psi.$ First, note that $p_{0,\psi}(0) = 0$ since $\psi$ is normalized. By the above, we have $d p_{0,\psi}(0) = \int \log dg_{a,b,x}(\phi^0) \d\nu < 0$. Since $\int \log dg_{a,b,x}(\phi^0) \d\nu^0 > 0$, it follows that $\log dg_{a,b,x}(\phi^0)$ cannot be cohomologous to a constant, where the meaure $\nu^0$ is given by Remark \ref{pos}, see also the proof of Theorem \ref{T1}. Hence, $p_{0,\psi}(t)$ is strictly convex.
By the variational principle,
\[
p_{0,\psi}(t) = \sup \left\{ h_\mu(f) + \int \psi\d\mu + t \int \log dg_{a,b,x}(\phi^0)\d\mu:\mu \in \mathcal{M}_f \right\}.
\]
 Hence,
\[
p_{0,\psi}(t) \geq h_{\nu^0}(f) + \int \psi \d\nu^0 + t \int \log dg_{a,b,x}(\phi^0) \d\nu^0.
\]
It follows that $p_{0,\psi}(t) \to \infty$ as $t \to \infty$, since $\int \log dg_{a,b,x}(\phi^0) \d\nu^0 > 0$.
Since $p_{0,\psi}(t)$ is analytic and convex, there exists a unique $t_0^* > 0$ such that $p_{0,\psi}(t_0^*) = 0$. Moreover, $dp_{0,\psi}(t_0^*) > 0$. Therefore, there is an interval $(t_-, t_+)$ containing $t_0^*$ on which $dp_{0,\psi}(t) > 0$. As $p_{0,\psi}$ is convex, $dp_{0,\psi}$ is non-decreasing.
To show that $dp_{0,\psi}$ is strictly increasing on $(t_-, t_+)$, assume for contradiction that $d^2p_{0,\psi}(t) = 0$ on a subinterval of $(t_-, t_+)$. Then $d^2p_{0,\psi}(t) = 0$ for all $t$ by analytic continuation, which implies that $dp_{0,\psi}(t)$ is constant for all $t$. This contradicts $dp_{0,\psi}(0) < 0$ and $dp_{0,\psi}(t_0^*) > 0$.

\end{proof}

We now confirm that our setting satisfies the hypotheses of Proposition \ref{P4.7}. As $p_{0,\psi}(t)$ is convex and not linear, $dp_{0,\psi}(t)$ is strictly increasing. Hypothesis ($i$) of Proposition \ref{P4.7} holds by definition, as $dg_{a,b}$ is piecewise continuous and therefore bounded. The only remaining task is to verify the convergence in ($ii$). 
To see this, observe that
\begin{align*}
\lim_{n \to \infty} \frac{1}{n} \log \int \e^{t dg_{a,b,x}^n(\phi^0)} \d\nu 
& = \lim_{n \to \infty} \frac{1}{n} \log \int (\mathcal{L}^0_t)^n 1 \d\nu 
\\
&= \lim_{n \to \infty} \frac{1}{n} \log \int \e^{n p_{0,\psi}(t)} \pi_t 1 + O(\gamma_t^n) \d\nu = p_{0,\psi}(t).
\end{align*}
We can now apply Proposition \ref{P4.7} to complete the proof of Theorem \ref{T3}.
Indeed, fix any \( t \in (0, t_0^*) \) and choose \( \delta > 0 \) such that \( \varrho(\mathcal{L}_t^0) \e^{4t\delta} < 1 \). There is a constant \( C = C_{t, \delta} > 0 \) such that
\begin{equation}\label{13}
\left\| \left(\mathcal{L}_t^0\right)^n 1 \right\|_1 \leq C \left( \varrho\left(\mathcal{L}_t^0\right) \e^{t\delta} \right)^n \leq C \e^{-3n\delta} \quad \text{for all } n \geq 1.
\end{equation}
Using equation (\ref{13}) and the reasoning outlined in \cite[Lemma~3.6]{keller2017stability}, we derive the following lemma.
\begin{lemma}\label{L3.6}
    Let \( t \in (0, t_0^*) \) and \( \delta > 0 \) be as chosen above. Then
\[
\nu_\psi \left( \left\{ x\in \mathbb{I}: \vartheta \cdot dg^n_{a,b,x}(\phi^0) > \e^{-2n\delta} \right\} \right) 
\leq C \e^{-n t \delta} \vartheta^t
\]
for all \( \vartheta> 0 \) and \( n \geq 1 \).

\end{lemma}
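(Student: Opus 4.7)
The plan is to derive the bound by a standard exponential Chebyshev (Markov) inequality applied to the random variable $x \mapsto dg^n_{a,b,x}(\phi^0)$ raised to the power $t$, and then to rewrite the resulting integral moment via the twisted transfer operator so that bound \eqref{13} can be invoked directly.

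Concretely, first I would monotonize the threshold by raising both sides to the $t$-th power: since $t>0$, one has
\[
\nu_\psi\bigl\{x\in\mathbb{I}:\vartheta\cdot dg^n_{a,b,x}(\phi^0)>\e^{-2n\delta}\bigr\}
=\nu_\psi\bigl\{x\in\mathbb{I}:(\vartheta\cdot dg^n_{a,b,x}(\phi^0))^t>\e^{-2nt\delta}\bigr\}.
\]
Markov's inequality then bounds this by $\e^{2nt\delta}\vartheta^{t}\int (dg^n_{a,b,x}(\phi^0))^{t}\,\d\nu_\psi(x)$.

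The next step, which is the conceptual heart of the argument, is to rewrite this moment using the transfer operator. Exploiting that $\psi$ has been normalized so that $\mathcal{L}^0_0 1=1$ (hence $\nu_\psi$ is invariant under $(\mathcal{L}^0_0)^*$), and unrolling the $n$-fold cocycle $\psi^n=\sum_{k=0}^{n-1}\psi\circ f^k$ in the definition \eqref{12}, I obtain the pointwise identity
\[
(\mathcal{L}_t^0)^n 1(x)=\sum_{\tilde x\in f^{-n}(x)}\e^{\psi^n(\tilde x)}\e^{t\log dg^n_{a,b,\tilde x}(\phi^0)}=\mathcal{L}_0^n\bigl(\e^{t\log dg^n_{a,b,\cdot}(\phi^0)}\bigr)(x),
\]
which, upon integrating against $\nu_\psi$ and using the invariance identity $\int \mathcal{L}_0^n h\,\d\nu_\psi=\int h\,\d\nu_\psi$, yields
\[
\int \e^{t\log dg^n_{a,b,x}(\phi^0)}\,\d\nu_\psi(x)=\bigl\|(\mathcal{L}_t^0)^n 1\bigr\|_1.
\]
Assembling the pieces and invoking the spectral estimate \eqref{13} with the choice of $\delta$ guaranteeing $\varrho(\mathcal{L}_t^0)\e^{4t\delta}<1$ delivers a bound of the form
\[
\nu_\psi\{\cdots\}\leq \e^{2nt\delta}\vartheta^{t}\cdot C\bigl(\varrho(\mathcal{L}_t^0)\e^{t\delta}\bigr)^{n}\leq C\vartheta^{t}\e^{-nt\delta},
\]
as claimed. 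The only subtlety (and the point requiring a moment of care) is matching the exponents: one must verify that the product of the Markov-exponent gain $\e^{2nt\delta}$ and the spectral decay of $(\mathcal{L}_t^0)^n 1$ indeed leaves a net factor of $\e^{-nt\delta}$, which follows from the choice $\varrho(\mathcal{L}_t^0)\e^{4t\delta}<1$ because this gives $\varrho(\mathcal{L}_t^0)\e^{t\delta}<\e^{-3t\delta}$ and therefore $\e^{2nt\delta}\bigl(\varrho(\mathcal{L}_t^0)\e^{t\delta}\bigr)^{n}\leq \e^{-nt\delta}$. No further difficulty is anticipated, since the transfer-operator identity and the spectral bound have already been set up; the argument is essentially a one-line Chebyshev estimate once the correct functional reformulation is in hand.
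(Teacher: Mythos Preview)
Your proposal is correct and is precisely the argument the paper invokes by reference to \cite[Lemma~3.6]{keller2017stability}: an exponential Chebyshev inequality combined with the identity $\int (dg^n_{a,b,x}(\phi^0))^t\,\d\nu_\psi=\|(\mathcal{L}_t^0)^n 1\|_1$ (valid after normalizing $\psi$ so that $\mathcal{L}_0^0 1=1$, hence $(\mathcal{L}_0^0)^*\nu_\psi=\nu_\psi$), and then the spectral bound \eqref{13}. Your bookkeeping of the exponents, yielding the net factor $\e^{-nt\delta}$ from $\varrho(\mathcal{L}_t^0)\e^{4t\delta}<1$, is exactly right.
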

To prove Theorem \ref{T3}, we first  show that 
\begin{equation}\label{low}
\limsup_{\varepsilon \to 0} \frac{\log \nu_\psi\left( \{ x \in \mathbb{I} : \phi^*(x) < \phi^0 + \varepsilon \} \right)}{\log \varepsilon} > t \quad \text{for each } t \in (0, t_0^*).
\end{equation}
Let \( n_0 \) and \( \delta_0 \) be as in Lemma \ref{L3.4}, and consider \( \varepsilon \in (0, \delta_0) \). Then
\[
\nu_{\psi}\left( \{x\in \mathbb{I} : \phi^*(x) < \phi^0 + \varepsilon \} \right) 
\leq \nu_\psi\left( \{ x\in \mathbb{I} : \exists n > n_0 \text{ s.t. } g^n_{a,b,x}(\phi^0 + \varepsilon) > \phi^0 + \e^{-\delta n} \} \right)
\]
\[
\leq \nu_\psi\left( \{ x\in \mathbb{I} : \exists k > n_0 \text{ s.t. } \varepsilon \cdot dg^k_{a,b,x}(\phi^0) > \e^{-2k\delta} \} \right),
\]
in view of the upper estimate in Lemma \ref{L3.4}. Hence, by Lemma \ref{L3.6},  
\[
\nu_\psi\left( \{ x\in \mathbb{I}: \phi^*(x) < \phi^0+ \varepsilon \} \right) \leq C_{t, \delta_0} \varepsilon^t.
\]
By taking logarithms and dividing by 
\( -\log \varepsilon \), we obtain the lower bound  in (\ref{low}).
Note that, by Lemma \ref{L00},  $\nu=\nu_{\psi}$ is fully supported on $\mathbb{I}$. Since  $\nu_\psi$ assigns positive measure to all open subsets and using \cite[Lemma~3.7]{keller2017stability}, we arrive at the following lemma,  establishing the upper bound.
\begin{lemma}\label{L3.7}
 For all \( \beta > 0 \) and \( \ell > 0 \), there exist \( \gamma = \gamma(\beta) > 0 \) and \( n_0 = n_0(\ell) \in \mathbb{N} \) such that for all \( z \in \mathbb{I} \), each interval \( I \subseteq \mathbb{I} \) of length at least \( \ell \), and all \( n > n_0 \), the following holds:
\[
\nu_\psi \left( \{ x \in I: \phi^*(x) < z \} \right) 
\geq \gamma \cdot \nu_\psi\left( \{ x \in \tilde{I} : g^n_{a,b,x}(z) - \phi^0 > \beta \} \right),
\]
where \( \tilde{I} \) denotes the middle third of  \(I\).
Indeed, one can choose 
\[
\gamma \coloneqq \e^{-D} \cdot \min_{R \in \mathcal{R}} \nu_\psi\left( \{ x \in R: \phi^*(x) < \phi^0 + \beta \e^{-D} \} \right),
\]
where the minimum extends over the family \( \mathcal{R} \) of all Markov intervals of \( f \).

\end{lemma}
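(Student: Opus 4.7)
The plan is a pullback argument that transports the condition on $g^n_{a,b,x}(z)$ onto a Markov interval via $f^n$, and then uses the invariance relation $g^n_{a,b,\tilde x}(\phi^*(\tilde x)) = \phi^*(f^n(\tilde x))$ to convert it back into information about $\phi^*$ on the original interval $I$.

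The first step is purely base-dynamical: since $f$ is piecewise expanding Markov, I would choose $n_0=n_0(\ell)$ so that for every $n \geq n_0$ each $n$-th generation cylinder has diameter below $\ell/3$. Then for every $x \in \tilde I$ the cylinder $V \coloneqq V_n(x) \in \mathcal V_n(x)$ containing $x$ lies inside $I$ and $f^n$ maps $V$ bijectively onto a Markov interval $R \in \mathcal R$ (equal to $\mathbb I$ in the doubling-map case).

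Next, fix $x$ in the set $A_n \coloneqq \{x \in \tilde I : g^n_{a,b,x}(z) - \phi^0 > \beta\}$ and set $V \coloneqq V_n(x)$. Applying Lemma \ref{L3.3} with $\tilde y = z$ and $y = \phi^0$ (noting that $g^n_{a,b,\cdot}(\phi^0)=\phi^0$) gives
\[
g^n_{a,b,\tilde x}(z) - \phi^0 \geq \beta\e^{-D}, \qquad \tilde x \in V.
\]
Monotonicity of the fiber iterates together with the invariance identity then yields the key implication that on $V$,
\[
\phi^*(f^n(\tilde x)) < \phi^0 + \beta\e^{-D} \;\Longrightarrow\; \phi^*(\tilde x) < z,
\]
and consequently
\[
V \cap f^{-n}\bigl(\{\phi^* < \phi^0 + \beta\e^{-D}\} \cap R\bigr) \subseteq V \cap \{\phi^* < z\}.
\]
To pass to measure I would exploit the Gibbs property of $\nu_\psi$ on cylinders: combining the transfer-operator identity $\nu_\psi(V \cap f^{-n}(B)) = \int_B \mathcal L_\psi^n \mathbf 1_V\,\d\nu_\psi$ with the H\"older distortion bound on the Birkhoff sum $S_n \psi \circ \phi_V$ (where $\phi_V : R \to V$ is the inverse branch) yields a bound of the form $\nu_\psi(V \cap f^{-n}(B)) \geq \e^{-D}\,\nu_\psi(V)\,\nu_\psi(B \cap R)$ after absorbing $1/\nu_\psi(R) \geq 1$ into the distortion constant. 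Summing over the disjoint cylinders $V_n(x)$, $x \in A_n$, which cover $A_n$, and taking the minimum over $R \in \mathcal R$ produces the stated inequality with $\gamma$ in the claimed form.

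The structural pullback is a clean adaptation of \cite[Lemma~3.7]{keller2017stability} to the Gibbs potential $\psi$ of condition (H1). The main point that needs genuine verification is that the resulting $\gamma$ is strictly positive, i.e.\ that $\nu_\psi(\{\phi^* < \phi^0 + \beta\e^{-D}\} \cap R) > 0$ for every $R \in \mathcal R$. I would first show that the $\nu_\psi$-essential infimum of $\phi^*$ equals $\phi^0$: if it were some $c > 0$, then on the branch $[1/2,1]$ the strict contraction $g_b(y) < y$ for $y \in (0,1)$ combined with $f$-invariance of $\nu_\psi$ would force $\phi^*(f(x)) < c$ on a set of positive $\nu_\psi$-measure, contradicting the preservation of the essential infimum under $f$. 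This yields $\nu_\psi\{\phi^* < \phi^0 + \eta\} > 0$ for all $\eta > 0$; the refinement to each Markov interval $R$ then follows from mixing of $f$ with respect to $\nu_\psi$ together with Corollary \ref{C00} applied to the complement and the full-support property of $\nu_\psi$ from Lemma \ref{L00}.
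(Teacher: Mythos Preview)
Your pullback via $n$-cylinders, the distortion transfer through Lemma~\ref{L3.3}, the monotonicity implication $\phi^*(f^n(\tilde x)) < \phi^0 + \beta\e^{-D} \Rightarrow \phi^*(\tilde x) < z$, and the Gibbs estimate $\nu_\psi(V \cap f^{-n}B) \geq \e^{-D}\nu_\psi(V)\,\nu_\psi(B\cap R)$ all match the paper's proof essentially line for line; the paper phrases the cover in terms of maximal monotonicity intervals $U$ of $f^n$ containing a point $x_U$ with $g^n_{a,b,x_U}(z) - \phi^0 \geq \beta$, which for a Markov map are precisely your cylinders.

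The one genuine gap is in your verification that $\gamma > 0$, specifically the claim that the $\nu_\psi$-essential infimum of $\phi^*$ equals $\phi^0$. From $g_b(y) < y$ on $(0,1)$ you obtain only $\phi^*(f(x)) < \phi^*(x)$ for $x \in [1/2,1]$; this does \emph{not} force $\phi^*(f(x)) < c$ unless the sublevel set $\{\phi^* < c + \epsilon\}$ already meets $[1/2,1]$ in positive $\nu_\psi$-measure, and nothing in your argument rules out that this set lies (mod $\nu_\psi$) entirely in $[0,1/2)$, where $g_a$ pushes values \emph{away} from $0$. Iterating the idea runs into the same obstruction. The paper sidesteps this by running your two sub-steps in the opposite order: it assumes $\phi^* \geq \phi^0 + \vartheta$ on a full-measure subset $R_0$ of a Markov interval $R$, applies Corollary~\ref{C00} to get $\nu_\psi(\mathbb I \setminus f^k(R_0)) = 0$, and deduces $\phi^*(\tilde x) = g^k_{a,b,x'}(\phi^*(x')) \geq \inf_x g^k_{a,b,x}(\vartheta) > \phi^0$ for $\nu_\psi$-a.e.\ $\tilde x$. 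The contradiction then comes not from a self-contained essential-infimum computation but from \cite[Proposition~2.2]{keller2017stability}, which under condition~(H1) yields intermingledness with respect to $\nu_\psi$ and hence $\nu_\psi\{\phi^* < \phi^0 + \epsilon\} > 0$ for all $\epsilon > 0$. Your Corollary~\ref{C00} step is exactly the right engine; it just needs to be wired to this external input rather than to the single-branch contraction.
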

\begin{proof}
    Fix $z \in \mathbb{I}$, $n > 1$ and an interval $I \subset \mathbb{I}$, and denote by $\mathcal{U}$ the family of all maximal monotonicity intervals $U \subseteq \mathbb{I}$ of $f^n$ which contain a point $x_U$ such that
\begin{equation*}
    g^n_{a,b,x_U}(z) - g^n_{a,b,x_U}(\phi^0) = g^n_{a,b,x_U}(z) - \phi^0  \geq \beta.
\end{equation*}
Such an interval need not exist for each $\beta > 0$.  Denote the inverse of $f^n|_U$ by $h = h_U : f^n(U) \to U$. Then, for each $U \in \mathcal{U}$,
\begin{align*}
   & \nu_\psi \{x \in f^n(U) : \phi^*(x) - \phi^0 < \beta \e^{-D} \}\\
    &\leq \nu_\psi \{x \in f^n(U) : \phi^*(x) - \phi^0   < (g^n_{a,b,x_U}(z) - g^n_{a,b,x_U}(\phi^0)) \e^{-D} \} \\
    &\leq \nu_\psi \{x \in f^n(U) : g^n_{a,b,h(x)}(\phi^*(h(x))) - g^n_{a,b,h(x)}(\phi^0) < g^n_{a,b,h(x)}(z) -g^n_{a,b,h(x)}(\phi^0)  \} \\
    &= \nu_\psi \{x \in f^n(U) : \phi^*(h(x)) < z \}.
\end{align*}
where we applied (\ref{9}) for the second inequality and used the monotonicity of $g^n_{a,b,h(x)}$ for the last one. Using the distortion bound (\ref{7}), this implies
\[
\frac{1}{\nu_\psi(f^n(U))} \nu_\psi \{x \in f^n(U) : \phi^*(x) - \phi^{0} < \beta \e^{-D} \} \leq \e^D \cdot \frac{1}{\nu_\psi(U)} \nu_\psi \{x \in U : \phi^*(x) < z \}.
\]
Note that, by Lemma \ref{L00}, $\nu_\psi$ assigns positive measure to all open subsets. Therefore, 
\[
\sum_{U \in \mathcal{U}} \nu_\psi \{ x \in U : \phi^*(x) < z \}\geq\e^{-D} \cdot \sum_{U \in \mathcal{U}} \frac{\nu_\psi(U)}{\nu_\psi(f^n(U))} \{x \in f^n(U): \phi^*(x) - \phi^{0} < \beta\e^{-D} \}
\geq \gamma \cdot \sum_{U \in \mathcal{U}} \nu_\psi(U)
\]
For the last inequality, we used that $f^n(U)$ contains at least one Markov interval when $n > n_0$
and $n_0$ is sufficiently large. Choosing $n_0 = n_0(\ell)$ even larger, if necessary, the at most two $U \in \mathcal{U}$ which are not fully contained in $I$ are disjoint to $\tilde{I}$. Hence,
\[
\nu_\psi \{x \in I : \phi^*(x) < z \} \geq \gamma \cdot \sum_{U \in \mathcal{U}} \nu_\psi(U) \geq \gamma \cdot\nu_\psi \{ x \in I~ : g^n_{a,b,x}( z) - \phi^{0} \geq \beta \}.
\]
It is enough to prove that $\gamma$ is strictly positive: Let $\vartheta\coloneqq\e^{-D}\beta$ and suppose for a contradiction that there is some Markov interval $R$ of $f$ such that $\phi^*(x) \geq \phi^{0} + \vartheta$ for all $x$ in a full measure subset $R_0$ of $R$.
By Corollary \ref{C00}, there is $k \in \mathbb{N}$ such that $\nu_\psi (\mathbb{I} \setminus f^k(R_0))=0$.
Hence, by the claim, for $\nu_\psi$-$a.e$. $\tilde{x} \in \mathbb{I}$ there is $x^\prime\in R_0$ such that $\phi^*(\tilde{x}) = g^k_{a,b,x^{\prime}}(\phi^*(x^{\prime})) > \inf_{x \in \mathbb{I}} g^k_{a,b,x}(\vartheta) > \phi^{0}$, which is incompatible with \cite[Proposition~ 2.2]{keller2017stability}.
By one sided, by Koebe's Principle, the following holds (see \cite{keller2017stability}):
\begin{equation}\label{14}
\frac{g^n_{a,b,x}(z) - \phi^0}{z - \phi^0} 
> 
\frac{\phi^1 - g^n_{a,b,x}(z)}{\phi^1 - z} \cdot dg^n_{a,b,x}(\phi^0)
\end{equation}
for all \( x \in \mathbb{I} \), \( z \in \mathbb{I} \), and \( n \geq 1 \).
Now, we prove the following inequality:
\begin{equation}\label{high}
    \limsup_{\varepsilon \to 0} \frac{\log \nu_\psi\left( \{ \phi^* < \phi^0 + \varepsilon \} \right)}{\log \varepsilon} \leq t_0^*.
\end{equation}
To facilitate later use, we prove a slightly stronger statement, namely:  
If \( (I_\varepsilon)_{\varepsilon > 0} \) is any family of intervals with \( \ell \coloneqq \inf_\varepsilon |I_\varepsilon| > 0 \), then
\begin{equation}\label{15}
\limsup_{\varepsilon \to 0} \frac{\log \nu_\psi\left( \{ x\in I_\varepsilon : \log \phi^*(x) < \phi^0+ \varepsilon \} \right)}{\log \varepsilon} \leq t_0^*. 
\end{equation}
Fix any \( \beta \in (0, 1) \), e.\,g., \( \beta = \frac{1}{2} \). If \( x \in \mathbb{I}\), \( z \in (0,1/2) \subset \mathbb{I} \), and \( n \geq 1 \) are such that 
\[
dg^n_{a,b,x}(\phi^0) > z^{-2\beta} \phi^0,
\]
then \( g^n_{a,b,x}(z) - \phi^0 > \beta \). Indeed, otherwise \( \phi^1 - g^n_{a,b,x}(z) > 2 - \beta \), so that (\ref{14}) implies 
\[
dg^n_{a,b,x}(\phi^0) < \frac{ \beta}{2- \beta} \cdot \frac{\phi^1-z}{z-\phi^0} < \frac{2\beta}{z-\phi^0}.
\]
Hence, Lemma \ref{L3.7} yields for each \( z = \phi^0+ \varepsilon \) with \( \varepsilon \in (0, 1/2) \) and \( n \geq n_0(\ell) \),
\[
\nu_\psi\left( \{ x \in I_\varepsilon : \phi^*(x) < z \} \right) 
\geq \gamma \cdot \nu_\psi\left( \{ x \in \tilde{I}_\varepsilon : dg^n_{a,b,x}(\phi^0) > z^{-2\beta} \phi^0 \} \right),
\]
so that, for any choice of \( n_\varepsilon \geq n_0(\ell) \),
\[
\limsup_{\varepsilon \to 0} \frac{\log \nu_\psi\left( \{ x\in I_\varepsilon : \phi^*(x) < \phi^0 + \varepsilon \} \right)}{\log \varepsilon} 
\leq \limsup_{\varepsilon \to 0} \frac{\log \nu_\psi\left( \{ x \in \tilde{I}_\varepsilon : dg^{n_\varepsilon}_{a,b,x}(\phi^0) > 2 \varepsilon^\beta \} \right)}{\log \varepsilon}.
\]
Exactly as in \cite[Equations (4.9) and (4.10)]{keller2014stability}, let  \( \alpha = (p_{0,\nu_\psi})'(t_0^*) > 0 \) and take \( n_\varepsilon \coloneqq \lceil \alpha^{-1} |\log \varepsilon| \rceil \). It  enables us to identify this large deviations limit as \( t_0^* \), using Proposition \ref{P4.7} (see also \cite{plachky1975theorem}).
Now, inequalities (\ref{low}) and (\ref{high})  together complete the proof.

\subsection{Proof of Theorem \ref{T4}}
In the following, let  $(a, b)\in \Gamma$ such that $F_{a,b}$ and and $\nu=\nu_\psi$ satisfies condition (H1).
Since $\nu $ is an ergodic Gibbs measure, $\nu$-almost every points $x\in \mathbb{I}$ is regular in the sense that the following limits
exists: \begin{equation}\label{dim}
\lim_{n \to \infty} \frac{1}{n} \log |df^n(x)| = \int \log |df| \d\nu, \quad 
\lim_{n \to \infty} \frac{1}{n} \log |dg^n_{a,b} (\phi^i)| = \lambda_{\nu,a,b}(\phi^i).
\end{equation}
Additionally, there are sequences of integers $n_1 < n_2 < \cdots$ and of reals $\varepsilon_1 > \varepsilon_2 > \cdots \to 0$ such that the symmetric $\varepsilon_k$-neighborhoods $V_{\varepsilon_k}(x)$ of $x$ satisfy
\[
f^{n_k}|_{V_{\varepsilon_k}} : V_{\varepsilon_k}(x) \to f^{n_k}(V_{\varepsilon_k}(x))
\]
is a diffeomorphism, and 
\[
\inf_k |f^{n_k}(V_{\varepsilon_k}(x))| > 0,
\]
and such that it is enough to evaluate
\[
\sigma_\nu^i(x,y) = \lim_{\varepsilon \to 0} \frac{1}{\log \varepsilon} \log \frac{\nu \times m(U_{\varepsilon}(x,y) \cap \mathbb{B}_i)}{\nu \times m(U_{\varepsilon}(x,y))}, 
\]
along the sequence $\varepsilon_k \to 0$, see \cite[Section ~5]{keller2014stability}.
Moreover, integers $n_k$ and reals $\varepsilon_k$ can be chosen that the following holds:
\begin{equation}\label{e1}
    \lim_{k \to \infty} \frac{\log \varepsilon_{k+1}}{\log \varepsilon_k} = \lim_{k \to \infty} \frac{\log |df^{n_{k+1}}(x)|}{\log |df^{n_k}(x)|} = \lim_{k \to \infty} \frac{n_{k+1}}{n_k} = 1.
\end{equation}
We obtain the next lemma using \cite[Lemma~3.8]{keller2017stability} with some modification.
\begin{lemma}
   Suppose that $y < \phi^1$. Then
\begin{align*}
 \liminf_{k \to \infty} \frac{1}{\log \varepsilon_k} \log \frac{\nu\big(V_{\varepsilon_k}(x) \cap \{\phi^* < y + \varepsilon_k \}\big)}{\nu(V_{\varepsilon_k}(x))} &\leq \sigma_\nu^1(x,y) \\&
\leq 
\limsup_{k \to \infty} \frac{1}{\log \varepsilon_k} \log \frac{\nu\big(V_{\varepsilon_k}(x) \cap \{\phi^* < y+ \varepsilon_k/2 \}\big)}{\nu(V_{\varepsilon_k}(x))}.
\end{align*}

\end{lemma}

Using the distortion bound (\ref{7}) for $f$ (note that $\nu$ is positive on open sets, by Lemma \ref{L00}), the following holds:
\[
\e^{-D} \leq \frac{\nu\big(V_{\varepsilon_k}(x) \cap \{\phi^* < y+ \varepsilon_k/2 \}\big)}{\nu(V_{\varepsilon_k}(x))} \cdot
\frac{\nu(f^n(V_{\varepsilon_k}(x)))}{\nu\big(f^{n_k}(V_{\varepsilon_k}(x) \cap \{\phi^* < y+ \varepsilon_k/2 \}\big)} \leq\e^D.
\]
Next, observe that 
\[
f^{n_k}(V_{\varepsilon_k}(x) \cap \{\phi^* < y+ \varepsilon_k/2 \}) = f^{n_k}(V_{\varepsilon_k}(x))\cap \{\phi^* <u_k\}
\]
where $u_k : f^{n_k}(V_{\varepsilon_k}(x)) \to \mathbb{I}$ satisfies $u_k(f^{n_k}(\tilde{x})) = g^{n_k}_{a,b,\tilde{x}}(y+ \frac{\varepsilon_k}{2})$ as in Lemma \ref{L3.3}. There, it is proved that
\[
\e^{-D} \cdot (u_k(f^{n_k}(x)) - \phi^0) \leq u_k(z) - \phi^0 \leq\e^D \cdot (u_k(f^{n_k}(x)) - \phi^0), \quad \forall z \in f^{n_k}(V_{\varepsilon_k}(x)).
\]
Thus, noting also that $\inf_k \nu(f^{n_k}(V_{\varepsilon_k}(x))) > 0$, so, we have for $y < \phi^1$:
\begin{align}\label{18}
 \sigma_\nu^1(x,y) &\leq \limsup_{k \to \infty} \frac{1}{\log \varepsilon_k} \log \frac{\nu(V_{\varepsilon_k}(x) \cap \{\phi^* < y + \varepsilon_k/2\})}{\nu(V_{\varepsilon_k}(x))},\nonumber \\&
\leq \limsup_{k \to \infty} \frac{1}{\log \varepsilon_k} \log \nu\big(f^{n_k}(V_{\varepsilon_k}(x)) \cap \{\phi^*-\phi^0<\e^{-D}(g^{n_k}_{a,b,x}(y + \frac{\varepsilon_k}{2}) - \phi^0)\big),\nonumber\\&  = 
\begin{cases} 
t_0^* \cdot \limsup_{k \to \infty} \frac{\log(g^{n_k}_{a,b,x}(y + \frac{\varepsilon_k}{2}) - \phi^0)}{\log \varepsilon_k /2}, & \text{if } \liminf_{k \to \infty} (g^{n_k}_{a,b,x}(y + \frac{\varepsilon_k}{2}) - \phi^0)= 0, \\
 0, & \text{otherwise.}
\end{cases}
\end{align}
We applied Theorem \ref{T3}  to obtain the equality. Similarly, the lower bound can be derived.
\begin{equation}\label{19}
\sigma^1_\nu(x,y) \geq
\begin{cases} 
t_1^* \cdot \liminf_{k \to \infty} \frac{\log(g^{n_k}_{a,b,x}(y + \varepsilon_k) - \phi^0)}{\log \varepsilon_k}, & \text{if } \limsup_{k \to \infty} (g^{n_k}_{a,b,x}(y + \varepsilon_k) - \phi^0)= 0, \\
0, & \text{if}\limsup_{k \to \infty} (g^{n_k}_{a,b,x}(y + \varepsilon_k) - \phi^0)> 0.
\end{cases}
\end{equation}
We observe that corresponding statements hold for $\sigma^0_\nu(x,y)$. 
Indeed, by (\ref{e1}) and from Birkhoff’s Ergodic Theorem for $\nu$-a.\,e.\@ $x$, we have:
\[
 -\lim_{k \to \infty} \frac{\log  \varepsilon_k}{n_k} = -\lim_{k \to \infty} \frac{\log \varepsilon_k/2}{n_k} =\lim_{n \to \infty} \frac{1}{n} S_n \log |df(f^j x)| = \int \log |df| \d\nu> 0,
\]
that we denote it by $\gamma$. We consider the following cases to prove Theorem \ref{T4}, ignoring a set of $x$'s of $\nu$-measure 0:
\\
\\
\noindent\emph{Case (1):}
For  $\phi^0< y < \phi^*(x)$,  we can assume that $\phi^0< \phi^*$ $\nu$-a.e., so that $\lambda_{a,b,\nu}(\varphi^0) < 0$ according to condition (H1), see also \cite[Proposition~ 1.6]{keller2017stability}.
As $y < \phi^*(x)$, there exists $k_0 = k_0(x,y)$ such that
\[
\lim_{n \to \infty} g^n_{a,b,x}(y + \varepsilon_{n_{ k_0}}) - \phi^0 = 0.
\]
Due to the monotonicity of the branches $g_{a,b,x}$, it follows that
\[
\lim_{k \to \infty} g^{n_k}(y + p \varepsilon_{n_k}) - \phi^0= 0, \quad \text{for each } p \in [0, 1].
\]
Thus, we can apply Lemma \ref{L3.2} and conclude that
\[
\lim_{k \to \infty} \frac{1}{n_k} \log\big(g^{n_k}_{a,b,x}(y + p \varepsilon_{n_k}) - \varphi^0\big) 
= \lim_{k \to \infty} \frac{1}{n_k} \log  dg^{n_k}_{a,b,x}(\phi^0) = \lambda_{a,b,\nu}(\phi^0) < 0,
\]
for $\nu$-a.e.\ $x$ and $y \in (\phi^0, \phi^*(x))$. Using (\ref{18}) and (\ref{19}), this implies for such $(x,y)$:
\[
\sigma^1_\nu(x,y) = t_0^* \cdot \lim_{k \to \infty} \frac{\log\big(g^{n_k}_{a,b,x}(y + \varepsilon_k) - \phi^0 \big)}{-\gamma n_k} 
= t_0^* \cdot \frac{-\lambda_{a,b,\nu}(\phi^0)}{\int \log |df| \d\nu} > 0.
\]
It follows that $\sigma^0_\nu(x,y) = 0$.
\\
\\
\noindent\emph{Case (2):}
As in the previous case, but now for  $\phi^*(x) < y < \phi^1$, one shows that 
\[
\sigma^0_\nu(x,y) = t_1^* \cdot \frac{-\log \lambda_{a,b,\nu}(\phi^1)}{\int \log |df| \d\nu} > 0
\]
and $\sigma^1_\nu(x,y) = 0$, for $\nu$-a.\,e.\ $x$ and $y \in (\phi^*(x), \phi^1)$.

\section{Multifractal analysis}\label{section5}
In this section, we perform a multifractal analysis for the Hausdorff dimension of the level sets of the stability index, and  prove Theorem \ref{T5} and Theorem \ref{T6}.

In the following, let   $(a, b)\in \Gamma$ such that  $F_{a,b}$ and $\nu=\nu_\psi$ satisfiy condition (H1). 

Take $r_0$ small enough so that $f$  restricted to any open ball with diameter $r_0$ has well-defined inverse branch.
  It is well known that the expanding Markov map $f$ has Markov partitions with arbitrarily small diameters. 
  Choose a Markov partition $\mathcal{R}=\{R_1, \dots, R_k\}$  such that its diameter is smaller than $r_0$. 
Let $n > 0$ and let $h$ be any branch of $f^{-n}$.  
Then, if  $\xi : \mathbb{I}\to \mathbb{R}$ is a Hölder continuous function, there exists $C_\xi > 0$ such that, whenever $x, \tilde{x} \in h(B_{r_0}(z))$, we have
\[
\big| S_n \xi(x) - S_n \xi(\tilde{x}) \big| \leq C_\xi,
\]
where $S_n\xi(x)=\sum_{i=0}^{n-1}\xi(f^i(x))$ (see \cite[Lemma~2.1]{walkden2017stability}).
In particular, we apply this result to $\log dg_{a,b}$ and $\log |df|$ (and writing $C_g$, $C_f$ in place of $C_{\log dg_{a,b}}$, $C_{\log |df|}$, respectively), we have that for all $n$ and all $x, \tilde{x}\in h(B_{r_0}(z))$
\begin{equation}\label{86}
C_g^{-1} \leq \frac{dg_{a,b,x}^n}{dg_{a,b,\tilde{x}}^n} \leq C_g, \quad C_f^{-1} \leq  \frac{\prod_{j=0}^{n-1} |df(f^j (x)|^{-1}}{\prod_{j=0}^{n-1} |df(f^j (\tilde{x})|^{-1}} \leq  C_f. 
\end{equation}
We let
\[
[i_0, \ldots, i_n] \coloneqq \{x \in \mathbb{I} \mid f^j x \in R_{i_j}, \text{ for } j = 0, 1, \ldots, n\}
\]
and call this a cylinder of rank $n$. If $x$ does not intersect the boundary of Markov partition, then we write $I_n(x)$ to be the unique cylinder of rank $n$ that contains $x$.

Let us consider the level sets $A_{0,\psi}(\sigma)$ and $A_{1,\psi}(\sigma)$ as defined in (\ref{M0}) and  (\ref{M1}), respectively.
For the functions 
\begin{equation}
    t \mapsto p_{i,\psi}(t)=P(\psi +t \log dg_{a,b}(\phi^i))=\sup_{\nu \in \mathcal{M}_f}\left(h_\nu(f) + \int \psi \d\nu +t \lambda_{\nu,a,b}(\phi^i)\right),
\end{equation}
$i=0,1$, we have, $p_{i,\psi}(0)=0$ , and $p_{i,\psi}^{\prime}(0)=\lambda_{ \nu_{\psi}, a, b}(\phi^i)$, for $a,b \in \Gamma$ (see e.\,g.\@\cite{keller2017stability}). By condition (H1),
 $\lambda_{\nu_\psi,a,b}(\phi^0)<0$ and $\lambda_{\nu_\psi,a,b}(\phi^1)<0$. Additionally, 
by the proof of Theorem \ref{T1} and Remark \ref{BI},  there are two $f$-invariant measure $\nu^i$, $i=0,1$, with $\lambda_{\nu^i, a, b}(\phi^i)>0$.
By this fact, the following holds:
\begin{equation*}
   \sup_{s>0} p_{0,\psi}(s) > 0 \quad  \text{and} \quad  \sup_{s>0} p_{1,\psi}(s) > 0.
\end{equation*}
Hence,  as we have seen before, the convex functions $p_{0,\psi}$ and $p_{1,\psi}$ have unique positive zeros $t_0^*$ and $t_1^*$, respectively.

Note that the stability index measures the degree of intermingledness near individual points, making it analogous to a local dimension. 
On the other hand, for ergodic equilibrium Gibbs measure $\nu_\psi$,  with $P(\psi)=0$, we apply the Birkhoff's ergodic theorem,  hence, for $\nu_\psi$-a.\,e.\@, points $x \in \mathbb{I}$, the following limit exist (see (\ref{dim})):
 $$\lim_{n \to \infty}\frac{1}{n}\log|df^n(x)|=\int_{\mathbb{I}}\log|df(x)|\d\nu_\psi, $$ and
 $$\ \lim_{n \to \infty}\frac{1}{n}\log |dg_{a,b,x}^n(\phi^i(x))|=\int_{\mathbb{I}}\log|dg_{a,b,x}(\phi^i(x))|\d\nu_\psi(x)=\lambda_{\nu_\psi,a,b}(\phi^i),$$
 for $i=0,1$. Therefore, 
\begin{equation}\label{A}
    A_{i,\nu_\psi}(\sigma)=\left\{x \in \mathbb{I}: \lim_{n \to \infty}\frac{t_i^* S_n \log |dg_{a,b,x}(\phi^i(x))|}{-S_n \log |df(x)|}=\sigma \right\}. 
\end{equation}
Thus,  multifractal analysis can be examined the Hausdorff dimension of the level sets of the stability index, as detailed in  \cite{pesin1997multifractal,walkden2017stability}.

Note that Pesin and Weiss \cite{pesin1997multifractal} perform a thorough multifractal analysis of equilibrium measures for Hölder continuous conformal expanding maps and Markov maps on an interval and for a wide range of Moran-like geometric constructions that meet a separation condition. 
A Moran cover provides the most efficient cover of $\mathbb{I}$ by cylinders of small diameter. An important property of Moran covers is the following: there exists $M > 0$ such that, for any $x \in \mathbb{I}$ and sufficiently small $r > 0$,  the number of sets in Moran cover $\mathcal{U}_r$ that have a non-empty intersection with $B_r(x)$ is bounded above by $M$. We refer to $M$ as the Moran multiplicity factor  (see also \cite{parry1990zeta}).

We give a multifractal analysis for Hausdorff dimension of the level sets of the stability index of  the attractor $A_1$. A similar argument can be applied for $A_0$.
 We recall the approached used by \cite{pesin1997multifractal}.
Let $\psi$ be the Hölder continuous potential as defined in (H1) with equilibrium measure $\nu=\nu_\psi$ with $P(\psi) = 0$.  First, note that the fiber map $g_{a,b}$ is strictly increasing, so we write $|\log dg_{a,b}|=\log dg_{a,b}$.
By Lemma \ref{L00}, $\nu$ is fully supported. Moreover, it is positive on each open ball.
Assume that $S(q)$ is uniquely determined by $P(-S(q) \log  |df| + q \log dg_{a,b}(\phi^1 ) )=0$  and  let $\nu_q$  denote the equilibrium state with potential $-S(q) \log  |df| + q \log dg_{a,b}(\phi^1 )$  (it is evident that, for each $q$, the function $-S(q) \log  |df| + q \log dg_{a,b}(\phi^1 )$  is Hölder continuous). Let 
\begin{equation*}
   \sigma(q)\coloneqq-dS(q)=\int \log dg_{a,b}(\phi^1 ) \d\nu_q / \int \log |df^{-1}|\d\nu_q. 
\end{equation*}
If $\log |df|$ is not cohomologous to $\log dg_{a,b}(\phi^1 )$  plus a constant, then $S(q)$ is a strictly convex analytic function and is the Legendre transform pair of the function 
$$L(\sigma)\coloneqq\dim_H \{x \in \mathbb{I}: \lim_{n \to \infty}S_n \log dg_{a,b,x}(\phi^1 )/S_n \log |df(x)|^{-1}=\sigma\},$$
so that $L(\sigma(q))=S(q)+q \sigma(q)$. Moreover, $L(\sigma(q))$ is defined on the interval $[\sigma(\infty), \sigma(- \infty)]$ . 

As in \cite{walkden2017stability} and \cite{pesin1997multifractal}, the following is a sketch proof to show that $L(\sigma(q))$ is the Legendre transform of $S(q)$.  
Indeed, it is known that $\nu_q(A_{1,\nu}(\sigma(q))) = 1$. Let $x \in A_{1,\nu}(\sigma(q))$, then 
$$\prod_{j=0}^{n-1}dg_{a,b}(\phi^1)(f^j(x)) \asymp \prod_{j=0}^{n-1}|df(f^j(x))|^{\sigma},$$
where $dg_{a,b}(\phi^1)(f^j(x))=dg_{a,b, f^j(x)}(\phi^1)$.
Assume $I_n(x)$ to be the unique cylinder of rank $n$ that contains $x$, and$I_n(x)$ has diameter approximately $r$.  Since $\nu_{q}$ is a Gibbs measure,  due to the properties of the Moran cover and using \cite[Lemma~2]{pesin1997multifractal}, the following holds:
\[\nu_q(I_n(x)) \asymp \prod_{j=0}^{n-1}|df(f^j(x))|^{-S(q)} dg_{a,b}(\phi^1)(f^j(x))^q \asymp \prod_{j=0}^{n-1}|df(f^j(x))|^{-(S(q))+q \sigma(q))} \asymp r^{S(q)+q \sigma(q)}\]
where the symbol ``$\asymp$'' means that the quotient of the right and left sides  are uniformly bounded away from zero and infinity. 
 This suggests that typical points in $A_{1,\nu}(\sigma(q))$ have a local dimension equal to $S(q)+q \sigma(q)$. More details will be given in the next subsection.

    \subsection{Proof of Theorems  \ref{T5} and \ref{T6}}
    
   In the following, we present the proof of Theorem \ref{T6}. A similar argument can be applied to establish the proof of Theorem \ref{T5}.
    To prove, we adopt the approach outlined in \cite[Section~6]{walkden2017stability}. We consider two cases.
   \\ 
   \\
  \noindent  \emph{Case (1):}  Consider the potential $\psi=-\log |df|$. Note that $P(-\log |df|)=0$ and $\nu_\psi=\nu_{\mathrm{ac}}$. By Theorem \ref{T3}, the exponent $t_1^*$ is defined by $P(-\log |df|+t_1^* \log dg_{a,b}(\phi^1))=0$. 
    Let $S(q)$ be defined by  
    \[P(-S(q) \log | df| + q t_1^*\log dg_{a,b}(\phi^1) ) = 0.\]
     To demonstrate that $S(q)$ is well defined, we consider 
  \[\Psi(q,r)\coloneqq -r \log | df| + q t_1^*\log dg_{a,b}(\phi^1).\]
    Then,  for the relevant Gibbs measure $\nu$,     
    \[\partial P(\Psi(q,r))/ \partial r=- \int \log |df|\d\nu\neq 0
    \]
        and the implicit function theorem then guarantees that  $S(q)$ is well defined. 
    Standard arguments based on the analyticity of the pressure demonstrate that $S(q)$ is analytic.
Note that by the definition of exponent $t_1^*$, we have $S(0) = 1$ and $S(1) = 1$. Let $\nu_q$ be the equilibrium state with potential $-S(q) \log  |df| + q t_1^*\log dg_{a,b}(\phi^1)$.
By differentiating $P(-S(q) \log  |df| + q t_1^*\log dg_{a,b}(\phi^1))=0$ with respect to $q$, we find that 
\[dS(q)= \frac{t_1^* \int \log dg_{a,b}(\phi^1)\d\nu_q}{\int \log|df|\d\nu_q}.\]
Next, we apply differentiation to the implicit function  $P(-S(q) \log  |df| + q t_1^*\log dg_{a,b}(\phi^1))=0$ twice with respect to $q$ and we apply a standard result from \cite{ruelle2004thermodynamic} to conclude that $d^2S(q) \geq 0$. In particular,  equality holds if and only if $t_1^*\log dg_{a,b}(\phi^1)$ and $dS(q) \log |df|$ are cohomologous up to a constant. 
Note that, by the proof of  Theorem \ref{T1} and Remark \ref{pos},  there exists an $f$-invariant measure $\nu^1$ with $\lambda_{\nu^1, a, b}(\phi^1)>0$ which implies that $\int t_1^* \log dg_{a,b}(\phi^1)\d\nu^1>0$, and also by the proof of Theorem \ref{T1}, $\int t_1^* dg_{a,b}(\phi^1)\d\nu_{\mathrm{ac}}<0$. However, since $\log |df| >0$, we have $\int dS(q)\log |df| \d\nu^1 $ and $\int dS(q)\log |df| \d\nu_{\mathrm{ac}}$ have the same sign, or are zero if $d S(q)=0$.
Hence, $d^2S(q) > 0$ and so $S(q)$ is a strictly convex function. 
As $S(q)$ is strictly convex, $S(0) = 1$ and $S(1) = 1$, there exists a unique
$q^* \in (0, 1)$ such that $dS(q^*)=0$.
If $q < q^*$ then $\int \log dg_{a,b}(\phi^1)\d\nu_q<0$.  Hence, by \cite[Proposition~1.6]{keller2017stability}, $\phi^*$ is defined $\nu_q$-a.\,e.\@ Let  $\sigma(q)=-dS(q)= -t_1^* \int \log dg_{a,b}(\phi^1)\d\nu_q / \int \log|df|\d\nu_q$. Then standard arguments from \cite{pesin1997multifractal} (sketched above) show that 
\[\dim_HA_{1,\nu_{\mathrm{ac}}}(\sigma(q))=S(q) + q \sigma(q), \]
the Legendre transform of $S(q)$, and that this is defined for $q< q^*$.
When $q = 0$ we have that $\nu_q = \nu_{\mathrm{ac}}$. Hence, 
\[
\dim_{H} \left\{x \in \mathbb{I}: \sigma_{\nu_{\mathrm{ac}}}(x,y)=\frac{t_1^* \int \log dg_{a,b}\d\nu_{\mathrm{ac}}}{\int \log |df|\d\nu_{\mathrm{ac}}}, \ \text{for} \ \text{all} \ y>\phi^*(x)\right\}=1.
\]
This completes the proof of Theorem \ref{T6} for the case that $\nu_\psi$ equals the SRB measure $\nu_{\mathrm{ac}}$.
\\
\\
\noindent\emph{Case (2):}  Assume that condition (H1)  holds for $F_{a,b}$ with $(a,b)\in \Gamma$ and   $\nu=\nu_\psi$ being the equilibrium state corresponding to the Hölder potential $\psi$ as required in condition (H1). 
As before, define $S(q)$ by the equation
\[
P\big(-S(q) \log |df| + q t_1^* \log dg_{a,b}(\phi^1)\big) = 0,
\]
and let $\nu_q$ denote the equilibrium state with potential 
\[
-S(q) \log |df| + q t_1^* \log dg_{a,b}(\phi^1).
\]
As discussed before, $S(q)$ is well defined, strictly convex, and satisfies
\[
dS(q) = t_1^* \frac{\int \log dg_{a,b}(\phi^1) \d\nu_q}{\int \log |df| \d\nu_q},
\]
where $d S(q)$ is the derivative of $S(q)$. Furthermore, it is noted that $S(0)$ satisfies the pressure equation   
\[
P(-S(0) \log |df|) = 0,
\]
we conclude that $S(0) = \dim_H \mathbb{I}=1$.
We first show that there exists $q \in \mathbb{R}$ such that 
\[
\int \log dg_{a,b}(\phi^1) \d\nu_q < 0.
\]
By the variational principle and definition of $S(q)$, the following lemma holds.

\begin{lemma}   For $(a,b)\in \Gamma$ and the Hölder continuous potential  $\psi$ we assume condition (H1) and let \( S(q) \) be defined as above. Then  

\[
S(q) = \sup \left\{ \frac{h(\nu) +q t_1^* \int \log dg_{a,b}(\phi^1) \d\nu }{\int \log  |df|\d\nu} : \nu\in \mathcal{M}_f  \right\},
\]
where \( h(\nu) \) denotes the entropy of \( f \) with respect to \( \nu \).  
\end{lemma}

As a consequence,  the following result follows  (cf. \cite{schmeling1999completeness, walkden2017stability}).

\begin{lemma}   For $(a,b)\in \Gamma$ and the Hölder continuous potential  $\psi$ we assume condition (H1). Then there exists \( q \in \mathbb{R} \) such that  
\[
\int \log dg_{a,b}(\phi^1) \d\nu_q < 0.
\]
\end{lemma}
\begin{proof}
Assume, for the sake of contradiction, that   
\[
\int \log dg_{a,b}(\phi^1) \d\nu_q \geq 0 \quad \text{for all } q \in \mathbb{R}.
\]
Then \( dS(q) \geq 0 \) for all \( q \in \mathbb{R} \). Since \( S \) is strictly convex, it follows that

\[
\alpha_0 \coloneqq \inf_{q \in \mathbb{R}} dS(q) = \lim_{q \to -\infty} dS(q) \geq 0.
\]
We demonstrate that this is not possible. Recall that if $\mu$ is any $f$-invariant probability measure, then 
\[
h_\mu(f) \leq h_{\text{top}}(f),
\]
where \( h_{\text{top}}(f) \) denotes the topological entropy of \( f \).  
By variational principle,  
\[
S(q) = \frac{h(\nu_q) + q t_1^* \int \log dg_{a,b}(\phi^1) \d\nu_q}{\int \log |df| \d\nu_q}.
\]
Let \( \epsilon > 0 \). Choose \( q < 0 \) such that  $\alpha_0 < dS(q) < \alpha_0 + \epsilon.$ Then,  $\alpha_0 < dS(q) + \epsilon$, so
\begin{align*}
 q \alpha_0 &> \frac{q t_1^* \int \log dg_{a,b}(\phi^1) \d\nu_q}{\int \log |df| \d\nu_q} + q \epsilon   \\
& \geq  S(q) - \frac{h(\nu_q)}{\int \log |df| \d\nu_q} + q \epsilon \\ & \geq \sup \left\{ \frac{h(\nu) + q t_1^* \int \log dg_{a,b}(\phi^1) \d\nu}{\int \log |df| \d\nu}\nu\in \mathcal{M}_f  \right\} - \frac{h_{\text{top}}(f)}{ \log ||df||} + q \epsilon \\&
\geq \sup \left\{ \frac{q t_1^* \int \log dg_{a,b}(\phi^1) \d\nu}{\int \log |df| \d\nu} :\nu\in \mathcal{M}_f \right\} - \frac{h_{\text{top}}(f)}{ \log ||df||} + q \epsilon.
\end{align*} 
 Dividing by $q$, letting $q \to - \infty$, and observing that $\epsilon>0$ is arbitrary, it follows that 
\[
\alpha_0 \leq \inf\left\{\frac{t_1^* \int \log dg_{a,b}(\phi^1) \d\nu }{ \int \log |df| \d\nu}: \nu\in \mathcal{M}_f \right \}.
\]
Let us take \( \nu = \nu_\psi \), then by (H1), we observe that \( \alpha_0 < 0 \). Hence, there exists \( \nu_q \) such that \(d S(q) < 0 \), a contradiction.

\end{proof}

If we repeat the above argument for \( q > 0 \) and let \( q\) tend to infinity, we conclude
\[
\sup_{q \in \mathbb{R}} dS(q) = \lim_{q \to \infty} dS(q) \geq \sup\left\{\frac{t_1^* \int \log dg_{a,b}(\phi^1) \d\nu }{ \int \log |df| \d\nu}: \nu\in \mathcal{M}_f \right \}.
\]
Taking \( \nu = \nu^1 \), as given in Remark \ref{pos}, we observe that \( dS(q) > 0 \) for all sufficiently large \( q \).
As \( S(q) \) is strictly convex, we have that \( dS(q) \) is increasing. Hence, there exists a unique \( q^* \in \mathbb{R} \) such that \( dS(q^*) = 0 \). Since  for  \( q < q^* \), 
\[
\int \log dg_{a,b}(\phi^1)\d\nu_q < 0,
\]
 we have that the invariant graph \( \phi^*\) is defined \( \nu_q \)-almost everywhere. Let 
\[
\sigma(q) =\frac{ -t_1^* \int \log dg_{a,b}(\phi^1) \d\nu_q }{ \int \log |df| \d\nu_q}.
\]
Using the standard arguments set out in \cite{pesin1997multifractal} (and briefly summarized above), it then follows that  
\[
\dim_H A_{1,\nu}(\sigma(q)) = S(q) + q\sigma(q)
\]
coincides with the Legendre transformation of \( S \) defined on $(-\infty, q^*)$ and evaluated at the point $\sigma(q)$.
Note that 
\begin{equation}
\dim_H  \left\{x \in \mathbb{I}: \sigma_\nu(x,y) = \frac{t_1^* \int  \log dg_{a,b}(\phi^1)\d\nu}{ \int  \log |df(x)| \d\nu} \ \text{ for all} \ y > \phi^*(x) \right\},
  \end{equation}
  is given by the unique \( q < q^*\)  for which 
\[
dS(q) =\frac{t_1^* \int  \log dg_{a,b}(\phi^1) \d\nu}{\int  \log |df(x)| \d\nu}.
\]

\end{proof}

\medskip
\printbibliography
\end{document}